\newcommand{\id}{{\boldsymbol{\mathbbm{1}}}}
\newcommand*\widefbox[1]{\fbox{\hspace{2em}#1\hspace{2em}}}
\tikzset{square arrow/.style={to path={-- ++(0,-.25) -| (\tikztotarget)}}}
 \newtheorem{theorem}{Theorem}[section]
 \newtheorem{lemma}[theorem]{Lemma}
 \newtheorem{remark}[theorem]{Remark}
 \newtheorem{proposition}[theorem]{Proposition}
 \newtheorem{corollary}[theorem]{Corollary}
 \newcommand{\R}{\mathbb{R}}
\DeclareMathOperator{\sym}{sym}
\DeclareMathOperator{\skw}{skew}
\DeclareMathOperator{\tr}{tr}
\DeclareMathOperator{\axl}{axl}
\DeclareMathOperator{\anti}{anti}
\DeclareMathOperator{\dev}{dev}
\DeclareMathOperator{\sL}{\mathfrak{sl}}
\DeclareMathOperator{\so}{\mathfrak{so}}
\DeclareMathOperator{\gl}{\mathfrak{gl}}
\newcommand{\Co}{C_0^{\infty}(\Omega)}
\DeclareMathOperator{\Curl}{Curl\,}
\newcommand{\Sym}{ {\rm{Sym}} }
\def\dv{\textrm{div}}
\def\Div{\textrm{Div\,}}
\def\curl{\textrm{curl\,}}
\def\skew{\text{skew}}
\def\dv{\textrm{div}}
\def\dd{\displaystyle}
\let\@fnsymbol\@arabic
\begin{document}
\title{A variant of  the linear isotropic indeterminate couple stress model with  symmetric local force-stress, symmetric nonlocal  force-stress,  symmetric couple-stresses and complete traction boundary conditions}
\author{\normalsize{Ionel-Dumitrel Ghiba\thanks{Ionel-Dumitrel Ghiba, \ \ \ \ Lehrstuhl f\"{u}r Nichtlineare Analysis und Modellierung, Fakult\"{a}t f\"{u}r Mathematik,
Universit\"{a}t Duisburg-Essen, Thea-Leymann Str. 9, 45127 Essen, Germany;  Alexandru Ioan Cuza University of Ia\c si, Department of Mathematics,  Blvd.
Carol I, no. 11, 700506 Ia\c si,
Romania;  Octav Mayer Institute of Mathematics of the
Romanian Academy, Ia\c si Branch,  700505 Ia\c si; and Institute of Solid Mechanics, Romanian Academy, 010141 Bucharest, Romania, email: dumitrel.ghiba@uni-due.de, dumitrel.ghiba@uaic.ro}  \quad
and \quad
Patrizio Neff\thanks{Corresponding author: Patrizio Neff,  \ \ Head of Lehrstuhl f\"{u}r Nichtlineare Analysis und Modellierung, Fakult\"{a}t f\"{u}r
Mathematik, Universit\"{a}t Duisburg-Essen,  Thea-Leymann Str. 9, 45127 Essen, Germany, email: patrizio.neff@uni-due.de}\quad
and \quad
Angela Madeo\footnote{Angela Madeo, \ \  Laboratoire de G\'{e}nie Civil et Ing\'{e}nierie Environnementale,
Universit\'{e} de Lyon-INSA, B\^{a}timent Coulomb, 69621 Villeurbanne
Cedex, France; and International Center M\&MOCS ``Mathematics and Mechanics
of Complex Systems", Palazzo Caetani,
Cisterna di Latina, Italy,
 email: angela.madeo@insa-lyon.fr}\quad and \quad Ingo M\"unch\thanks{Ingo M\"unch, Institute for Structural Analysis, Karlsruhe Institute of Technology, Kaiserstr. 12, 76131 Karlsruhe,
Germany, email: ingo.muench@kit.edu}}
}
\maketitle

\begin{center}
{\it Dedicated to Richard Toupin, in deep admiration of his scientific achievements. }
\end{center}

\begin{abstract}
In this paper we venture  a new look at the linear isotropic indeterminate couple stress model in the general framework of second  gradient elasticity and we propose a new alternative formulation which obeys Cauchy-Boltzmann's axiom of the symmetry of the force stress tensor.  For this model we prove the existence of  solutions for the equilibrium problem. Relations with other gradient elastic theories and the possibility to switch from a {4th order} (gradient elastic) problem to a 2nd order micromorphic model are also discussed with a view of obtaining symmetric force-stress tensors. It is shown that the indeterminate couple stress model can be written entirely with symmetric force-stress and symmetric couple-stress. The difference of the alternative models rests in specifying traction boundary conditions of either rotational type or strain type. If rotational type boundary conditions are used in the partial integration, the classical anti-symmetric nonlocal force stress tensor formulation is obtained. Otherwise, the difference in both formulations is only a divergence--free second order stress field such that the field equations are the same, but the traction boundary conditions are different. For these results we employ a novel integrability condition, connecting the infinitesimal  continuum rotation and the infinitesimal continuum strain. Moreover, we provide  the complete, consistent traction boundary conditions for both models.
\\
\vspace*{0.25cm}
\\
{\bf{Key words:}}  symmetric Cauchy stresses,  generalized continua, non-polar material, microstructure, size effects, microstrain model, non-smooth solutions, gradient elasticity, strain gradient elasticity, couple stresses, polar continua, hyperstresses, Boltzman axiom, dipolar gradient model, modified couple stress model, conformal invariance, micro-randomness, symmetry of couple stress tensor, consistent traction boundary conditions.
\\
\vspace*{0.25cm}
\\
{\bf{AMS 2010 subject classification:} } 74A30, 74A35.
\end{abstract}

\newpage

\tableofcontents

\newpage

\section{Introduction}

\subsection{General viewpoint}

The Cosserat model is an extended continuum model which features independent degrees of rotation in addition to the standard translational degrees of particles, see \cite{Eringen99,Neff_Jeong_bounded_stiffness09,NeffGhibaMicroModel,MauginVirtualPowers,maugin1980method} for a detailed exposition. The prize, which has to be paid for this extension are non-symmetric force stress tensors together with so-called couple stress tensors which then represent the response of the model due to spatially differing Cosserat rotations. The couple stress model is the Cosserat model \cite{Neff_JeongMMS08} with restricted rotations, i.e. in which the Cosserat rotations coincide with the continuum rotations. As such it belongs also to a certain subclass of gradient elasticity models\footnote{Le Roux \cite{Roux11} seems to give for the first time a second gradient theory in linear elasticity using a variational formulation \cite{maugin2010generalized,maugin2014continuum}.}, where the higher derivatives only act on the continuum rotations. This constitutes a big conceptual advantage since the interpretation of the Cosserat rotations as new physical degrees of freedom is  in general a difficult task. Such a model is also called a model with ``latent microstructure" \cite{Capriz85,Capriz89}.

Let $F=R\,U$ be the polar decomposition of the deformation gradient $F=\nabla \varphi$ into rotation $R\in {\rm SO}(3)$ and positive definite symmetric right stretch tensor $U=\sqrt{F^TF}$, where $\varphi:\Omega\subset \mathbb{R}^3\rightarrow\mathbb{R}^3$ characterizes  the deformation of the material filling the domain $\Omega\subseteq \mathbb{R}^3$. We write  $R={\rm polar}(F)$. In a variational context, the energy density $W$ to be minimized in the geometrically nonlinear constrained Cosserat model is given by
\begin{align}
W=W(\underbrace{U-\id}_{\text{strain}}, \underbrace{{\rm polar}(F)^T\nabla_{\rm x}{\rm polar}(F)}_{\text{curvature}})\,,
\label{energiepolar}
\end{align}
which reduced form follows from  left-invariance of the Lagrangian $W$ under superposed rotations. In this paper, our objectives are much more modest. We will only be concerned with  the linearized variant of \eqref{energiepolar}, which can be written as
\begin{align}
W=W(\!\!\!\!\!\!\underbrace{\sym \nabla u}_{\begin{array}{c}
\text{\footnotesize{infinitesimal}}\vspace{-1.5mm}\\
\text{\footnotesize{strain}}
\end{array}}\!\!\!\!\!\!,\,\underbrace{\nabla [\axl(\skew \nabla u)]}_{\begin{array}{c}
\text{\footnotesize{infinitesimal}}\vspace{-1.5mm}\\
\text{\footnotesize{curvature}}
\end{array}})=W_{\rm lin}(\sym \nabla u)+W_{\rm curv}(\nabla[\axl(\skew \nabla u)]),
\label{energiepolar1}
\end{align}
where   $u:\Omega\subset \mathbb{R}^3\rightarrow\mathbb{R}^3$ is the displacement and
\begin{align}
\nabla\axl(\skew \nabla u)=2\,\curl u.
 \end{align}
 The  energy density \eqref{energiepolar1} is the classical Lagrangian for the indeterminate couple stress formulation. As will be seen later, this formulation leads naturally to totally skew symmetric nonlocal force stress contributions.

Toupin already remarked on an alternative representation of the energy \eqref{energiepolar} \cite[Section 6]{Toupin62} which leads, in its linearized variant given by Mindlin \cite[eq.~(2.4)]{Mindlin62} to a dependence on
\begin{align}
W=W(\sym \nabla u, \Curl (\sym \nabla u))=W_{\rm lin}(\sym \nabla u)+W_{\rm curv}( \Curl (\sym \nabla u))\,,
\label{energiepolar2}
\end{align}
 due to the equivalence
 $$\nabla (\axl \skew \nabla u)=\frac{1}{2}\nabla \curl u=(\Curl (\sym \nabla u))^T$$
  instead of \eqref{energiepolar1}. The representation $\frac{1}{2}\nabla \curl u$ is directly derived from the original Cosserat  model \cite{Cosserat09,schaefer1967cosserat}.
Both authors, Toupin and Mindlin, noted that now, comparing \eqref{energiepolar1} and \eqref{energiepolar2} the force stress tensors and the couple stress tensors are changed while the balance of linear momentum equation remains unchanged such that these concepts are not uniquely defined (see also  Truesdell and Toupin's remark on null-tensors \cite[p.~~547]{Truesdell60}. However, they apparently did not realize that it is possible to use this ambiguity to obtain completely symmetric force stress tensors also in the couple stress model which is otherwise the paragon for a model having non-symmetric force stress tensors. We also need to remark that in a purely mechanical context, the observation of size-effects does not necessitate to introduce skew-symmetric stress-tensors \cite{AskesAifantis}.

 In this paper we do not discuss in detail the field of applications of such a special format of gradient elasticity model. Suffice it to say that much attention is directed to nano-scaled material in which size-effects may become important, which may make the present model applicable at strong stress gradients in the vicinity of cracks, or more generally, in highly heterogeneous media. We must also warn the reader: the indeterminate couple stress model is, in our view, a certain singular limit of the Cosserat model with independent displacements and micro-rotation and therefore some degenerate behaviour is to be expected throughout.

\subsection{The linear indeterminate couple stress model}
As hinted at above, the indeterminate couple stress model is a  specific gradient elastic model in which the higher order interaction is restricted to the continuum rotation $\skew\,\nabla u$ (or equivalently, $\curl u$). It is therefore traditionally interpreted  to include interactions of rotating particles and it is possible to prescribe boundary conditions of rotational type. Superficially, this is the simplest possible generalization of linear elasticity in order to include the gradient of the local continuum rotation as a source of stress and strain energy. In this paper, we limit our analysis to  linear isotropic materials and only to the second gradient\footnote{There is such a formula, which says that all second derivatives of $u$ can be obtained from linear combinations of partial derivatives of strain, i.e. $D^2 u={\rm Lin}(\nabla \sym \nabla u)$, $u_{k,ij}=\varepsilon_{ik,j}-\varepsilon_{jk,i}-\varepsilon_{ij,k}$, where $\varepsilon=\sym \nabla u$.} of the displacement \begin{equation*}
D^2u=u_{k,ij}=\varepsilon_{ik,j}-\varepsilon_{jk,i}-\varepsilon_{ij,k},\quad \text{where}\quad \varepsilon=\sym \nabla u.
 \end{equation*}
 In general, the strain gradient models have the great advantage of simplicity and physical transparency since there are no new independent degree of freedoms introduced which would require interpretation. Since in this model there are no additional degrees of freedom (as compared to the Cosserat or micromorphic approach) the higher derivatives introduce a ``latent-microstructure" (constrained microstructure). However, this apparent simplicity has to be payed  with much  more complicated traction boundary conditions, as will be seen later.

 We will see in Section \ref{sectcoss}, surprisingly,  that the mentioned rotational interaction can equivalently be viewed as a strain type interaction in the indeterminate couple stress model. Therefore, the first interpretation of rotational interaction (which is classical) is ambiguous as long as the problem is not specified together with boundary conditions appearing as effect of the kind of partial integration  which is performed. We may choose, contrarily to our intuition,   another representation of the  curvature energy motivated by formal considerations of invariance properties. In this regard we highlight the fact that  force stresses for a material of higher order are far from being uniquely defined: it is always possible to add a self-equilibrated (divergence-free tensor field) force field changing the constitutive stress tensor but leaving unaltered the equilibrium equations \cite{DunnSerrin,morro2014interstitial}.

Often, such kind of models introduce too  many additional parameters (or too many additional artificial degrees of freedom) which are neither easily interpreted, nor  easily to be determined from experiments. Our discussion may also be interpreted with the background to  only include those higher order terms that are required to describe the pertinent physics. It is clear that higher order models should not be more complicated than is warranted by experimental observation. A permanent nuisance in this respect is the question of how to identify new material parameters which are connected to the possible non-symmetry of the total force-stress  tensor having the same dimensions as the classical shear modulus $\mu$ ${\rm [N/mm^2]}$. In the Cosserat model the connecting parameter is the Cosserat-couple modulus $\mu_c$ \cite{Eringen99,Neff_Jeong_Conformal_ZAMM08}, which, for the indeterminate  couple stress model considered here,  is formally $\mu_c\rightarrow\infty$.

The Cauchy-Boltzmann axiom, well known from classical elasticity, requires the symmetry of the force stress tensor and may serve us also in the realm of this higher order theory to restrict the bewildering possibilities. Already Cauchy wrote \cite[p.~344-345]{cauchy1851note}:
 \begin{quote}{\it ``... les composantes $\mathcal{A}, \mathcal{F}, \mathcal{E}; \mathcal{F}, \mathcal{B}, \mathcal{D}; \mathcal{E}, \mathcal{D}, \mathcal{C}$ des
pressions support\'ees au point P par trois plans
parall\`{e}les aux plans coordonnés des $yz$, des $zx$
et des $xy$, pourront \^{e}tre g\'en\'eralement
consid\'er\'ees comme des fonctions linéaires des
d\'eplacements $\xi, \eta, \zeta$ et des leurs d\'eriv\'ees des
divers ordres."\footnote{Our translation: The components [of the symmetric total force stress tensor] $\mathcal{A}, \mathcal{F}, \mathcal{E}; \mathcal{F}, \mathcal{B}, \mathcal{D}; \mathcal{E}, \mathcal{D}, \mathcal{C}$  can be considered in general as linear functions $\xi, \eta, \zeta$  of the displacement  and their derivatives of arbitrary order.}.}
 \end{quote}Truesdell and Toupin \cite[p. 390]{Truesdell60}  write:
  \begin{quote}{\it ````Theories of elastic materials of grade 2 or higher had been proposed by several authors [Cauchy \cite{cauchy1851note}, St.~Venant \cite{st1869note}, Jaramillo \cite{Jaramillo}], but under the assumption that the [total-force] stress tensor is symmetric".
  }
 \end{quote}
Indeed, Jaramillo \cite{Jaramillo} considers a second gradient elastic material  and obtains the dynamic equations by Hamilton's principle. He observes dispersion relations in wave propagation problems. For simplicity  only he restricts his discussion to  those second gradient formulations, which give rise to a symmetric total force-stress tensor and obtains a  classification for isotropic materials \cite[p.~51, Eq.~(96)]{Jaramillo}. The subject was pushed forward in the late 1950's with works of Toupin~\cite{Toupin62,Toupin64}, Grioli~\cite{Grioli60,grioli1962mathematical}, Mindlin ~\cite{Mindlin64} and Koiter~\cite{Koiter64}, among others, see the references  later in this paper. Yang et al. \cite{Yang02} give an erroneous motivation for a symmetric moment stress tensor, as will be shown in  \cite{MunchYang}. Neff et al.~\cite{Neff_Jeong_IJSS09} considered the singular stiffening behaviour for arbitrary small samples in the Cosserat and indeterminate couple stress model and concluded that in order to avoid these singular effects one has to take a symmetric moment stress, thus providing the first rational argument in favour of symmetric moment stresses.  In \cite{Neff_Jeong_IJSS09} the same model\footnote{
It must be noted that the grandmaster Koiter \cite[p.~17-19, 23, 41]{Koiter64} came to reject the significant presence of couple stresses because
he based his investigations on the indeterminate couple stress theory with uniformly pointwise positive definite curvature energy, which tends
to maximize the influence of length scale effects in its rotational formulation. His arguments only show that this special
constrained gradient theory together with it's boundary conditions cannot be based on experimental evidence. However, the main
thrust of his comments remains valid and our symmetric formulation may compare favorable. We should also have in mind that Mindlin  ceased to use these models because he could finally not see the physical relevance at this time. Truesdell and Noll also wrote \cite[p.~400]{Truesdell65}: ``In favour of the Grioli-Toupin theory, in which the microrotation and macrorotation coincide, we can find no experimental evidence or theoretical advantage." } has been derived  based on a homogenization procedure and a novel invariance requirement introduced by  Neff et al. \cite{Neff_JeongMMS08}, called micro-randomness and it has
been shown that the model is well-posed.

\subsection{Our perspective}
Our contribution is intended to clarify and delineate under what boundary conditions we may expect or use symmetric nonlocal force stresses in the indeterminate couple stress model. When trying to relax the 4.th order problem (from gradient elasticity), it also seems expedient to retain the symmetry of the force stress tensor and of the moment stress tensor.  Respecting symmetry restricts the possibilities to choose among 2.nd order micromorphic  models. The importance of switching to a 2.nd order problem with new independent degrees of freedom is clear from the implementational  point of view with finite elements: a 2.nd order problem is much easier and more efficient. However, given the antisymmetric classical and our new symmetric formulations we may arrive at completely different 2.nd order formulations in case of mixed displacement-traction boundary conditions.

In general, the hyperstress-tensor (couple stresses, sometimes called double-stress \cite{maugin2010generalized}) in second  gradient elasticity \cite{Toupin62,Mindlin64,Koiter64,Sokolowski72,Dahler61,dahler1963theory} (see also the recent papers \cite{dell1995radius,dell1995validity,dell1996nucleation,yang2010higher,rosi2013propagation,auffray2013analytical,dell2014origins,ferretti2014modeling,MadeoZAMM2014,eremeyev2014equilibrium,rinaldi2014microscale,dell2015two}) may be defined as $\mathfrak{m}=(\mathfrak{m}_{ijk})=D_{D^2 u} W(D^2 u)$. Since $D^2 u=(u_{i,jk})$ is a third order tensor, so is $\mathfrak{m}$.  Moreover, since $u_{i,jk}$ is symmetric in $(jk)$ the same is usually assumed for $\mathfrak{m}_{ijk}$. This, however, is not mandatory, see \cite{morro2014interstitial}.

In the framework considered in this paper, the hyperstress-tensor is defined as $$\widetilde{m}:=D_{\nabla (\curl u)} W_{\rm curv}(\nabla (\curl u))\qquad \ \text{ or }\qquad \ \widehat{m}:=D_{\Curl (\sym\nabla u)} W_{\rm curv}(\Curl (\sym\nabla u)),$$ respectively, and both expressions are 2.nd order tensors\footnote{See the Appendix for the relation between the second order tensor $\widetilde{m}$ and the third order tensor $\mathfrak{m}$.} and are also called couple stress tensors, since they act as dual objects to  gradients of rotations. On the other hand, as we will see, we have two competing expressions of the  nonlocal force stress tensor:   a symmetric tensor $\widehat{\tau}$ versus an anti-symmetric tensor $\widetilde{\tau}$:
\begin{align*}
\widehat{\tau}&=\mu\,L_c^2\,\sym \Curl\{2\, \alpha_1 \dev\sym\Curl (\sym \nabla u)+2\,\alpha_2\,   \skw\Curl (\sym \nabla u)\}=\sym\Curl(\widehat{m})\in {\rm Sym}(3)\\
\widetilde{\tau}&=\frac{\mu}{2}\,L_c^2\,\anti
{\rm Div}\{2\,\alpha_1\, \dev\sym \nabla [\axl (\skw \nabla u)]+ 2\,\alpha_2\,\skw\nabla [\axl (\skw \nabla u)]\}=\frac{1}{2}\anti {
\rm Div}(\widetilde{m})\in\so(3)\notag,
\end{align*}
with ${\rm Div}(\widehat{\tau}-\widetilde{\tau})=0$. Since $[\Curl (\sym \nabla u)]^T=\nabla [\axl (\skw \nabla u)]$, it follows
\begin{align*}
\widehat{\tau}-\widetilde{\tau}=\mu\,L_c^2\,\Big\{&\quad\ 2\,\alpha_1 \,(\sym \Curl-\frac{1}{2}\,\anti
{\rm Div}).[\dev\sym\Curl (\sym \nabla u)]\\
&\,+2\,\alpha_2\, (\sym \Curl+\frac{1}{2}\anti
{\rm Div}).  [\skw\Curl (\sym \nabla u)]\quad\ \ \Big\}.
\end{align*}
 The independent constitutive variable $\widetilde{k}:=\nabla (\curl u)$ is the second gradient contribution considered by Grioli \cite{Grioli60}, Toupin \cite{Toupin62}, Mindlin \cite{Mindlin62}, Koiter \cite{Koiter64} and Sokolowski \cite{Sokolowski72}.  In general, neither $\widetilde{m}$ nor $\widehat{m}$ couple stress tensors are symmetric.

The symmetry of the force stress tensor in continuum mechanics is regulary discussed in the literature, see e.g. \cite{kuiken1995symmetry,mclennan1966symmetry,Neff_ZAMM05}. It has been suggested by McLennan \cite{mclennan1966symmetry} that a symmetric force stress tensor can always be constructed by  adding divergence-free couple stresses, since only its divergence occurs in the local conservation law.
 However, all of the previously given expositions use anti-symmetric nonlocal force-stresses. Since there is no conclusive evidence for the real need of a non-symmetric total force-stress tensor in the purely mechanical context, we apply Ockham's razor  and discard these non-symmetric force stress formulations. Our new alternative formulation will have symmetric couple stresses {\bf and} symmetric force stresses. Thus it satisfies the Cauchy-Boltzmann's axiom. We also show that the new formulation is well-posed in statics. While conceptually very pleasing, the real merits of such a ``completely symmetric'' formulation have yet to be discovered.

  Similarly to the classical indeterminate couple stress model which can be obtained as a constrained Cosserat model, our new $\Curl(\sym\nabla u)$-model can be obtained as a constrained ``microstrain" model \cite{Forest02b,Forest06,Neff_Forest07}.

The question of boundary conditions in higher gradient elasticity models has been a subject of constant attention. Bleustein has formulated the conclusive answer for general gradient elastic models involving the surface divergence operator \cite{bleustein1967note}. However, the traction boundary conditions obtained by  Tiersten and Bleustein in \cite{TierstenBleustein} with respect to the special case of the indeterminate couple stress model are  incomplete. In a forthcoming paper \cite{MadeoGhibaNeffMunchKM} we discuss and correct the form of the traction boundary conditions considered until now in the classical indeterminate couple stress model \cite{Mindlin62,Toupin62,Koiter64,Neff_Jeong_IJSS09,park2008variational,anthoine2000effect,Yang02,Park07}. Here, we just provide the correct answer obtained there in the form of a summarizing box.

\medskip

   The plan of the paper is now as follows: after a subsection fixing the notation, we outline  some related models in isotropic second gradient elasticity; we prove some auxiliary results and we discuss the invariance properties of the considered energy;  we recall the classical indeterminate couple stress model with skew-symmetric nonlocal force-stress (i.e. with non symmetric total force-stress tensor); we formulate the equilibrium problem for the new isotropic gradient elasticity model with symmetric nonlocal force stress (i.e. with symmetric total force-stress tensor) and we give an existence result; we discuss the difference of the classical indeterminate couple stress model with the introduced symmetric model; paying particular attention to the boundary virtual work principle  we show that these two possible formulations are applicable for different  types of traction boundary conditions; we discuss the possibility to switch from a 4.th-order problem to a 2.nd order  micromorphic model. All our existence results can be extended, mutatis mutandis, to first order anisotropic behaviour \cite{MadeoZAMM2014,NthGrad,madeo2012second}, i.e. considering as total energy $\langle\mathbb{C}.\,\sym\nabla u,\sym\nabla u\rangle+W_{\rm curv}(D^2 u)$ as long as $\mathbb{C}$ is a  uniformly positive definite tensor. We finish with some boxes summarizing our models and findings.

\subsection{Notational agreements}
In this paper, we denote by $\R^{3\times 3}$ the set of real $3\times 3$ second order tensors, written with
capital letters. For $a,b\in\R^3$ we let $\langle {a},{b}\rangle_{\R^3}$  denote the scalar product on $\R^3$ with
associated vector norm $\|{a}\|^2_{\R^3}=\langle {a},{a}\rangle_{\R^3}$.
The standard Euclidean scalar product on $\R^{3\times 3}$ is given by
$\langle{X},{Y}\rangle_{\R^{3\times3}}=\tr({X Y^T})$, and thus the Frobenius tensor norm is
$\|{X}\|^2=\langle{X},{X}\rangle_{\R^{3\times3}}$. In the following we omit the index
$\R^3,\R^{3\times3}$. The identity tensor on $\R^{3\times3}$ will be denoted by $\id$, so that
$\tr({X})=\langle{X},{\id}\rangle$. We adopt the usual abbreviations of Lie-algebra theory, i.e.,
 $\so(3):=\{X\in\mathbb{R}^{3\times3}\;|X^T=-X\}$ is the Lie-algebra of  skew symmetric tensors
and $\sL(3):=\{X\in\mathbb{R}^{3\times3}\;| \tr({X})=0\}$ is the Lie-algebra of traceless tensors.
 For all $X\in\mathbb{R}^{3\times3}$ we set $\sym X=\frac{1}{2}(X^T+X)\in\Sym$, $\skw X=\frac{1}{2}(X-X^T)\in \so(3)$ and the deviatoric part $\dev X=X-\frac{1}{3}\;\tr(X)\id\in \sL(3)$  and we have
the \emph{orthogonal Cartan-decomposition  of the Lie-algebra} $\gl(3)$
\begin{align}
\gl(3)&=\{\sL(3)\cap \Sym(3)\}\oplus\so(3) \oplus\mathbb{R}\!\cdot\! \id,\quad
X=\dev \sym X+ \skw X+\frac{1}{3}\tr(X) \id\,.
\end{align}
Throughout this paper (when we do not specify else) Latin subscripts take the values $1,2,3$.  Typical conventions for differential
operations are implied such as comma followed
by a subscript to denote the partial derivative with respect to
 the corresponding cartesian coordinate. We also use the Einstein notation of the sum over repeated indices if not differently specified. Here, for
\begin{align}
\overline{A}=\left(\begin{array}{ccc}
0 &-a_3&a_2\\
a_3&0& -a_1\\
-a_2& a_1&0
\end{array}\right)\in \so(3)
\end{align}
we consider the operators $\axl:\so(3)\rightarrow\mathbb{R}^3$ and $\anti:\mathbb{R}^3\rightarrow \so(3)$ through
\begin{align}
\axl(\overline{A}):=\left(
a_1,
a_2,
a_3
\right)^T,\quad \quad \overline{A}.\, v=(\axl \overline{A})\times v, \qquad \qquad (\anti(v))_{ij}=-\varepsilon_{ijk}v_k, \quad \quad \forall \, v\in\mathbb{R}^3,
\\\notag (\axl \overline{A})_k=-\frac{1}{2}\, \epsilon_{ijk}\overline{A}_{ij}=\frac{1}{2}\,\epsilon_{kij}\overline{A}_{ji}\,, \quad\qquad \overline{A}_{ij}=-\epsilon_{ijk}(\axl \overline{A})_k=:\anti(\axl \overline{A})_{ij},
\quad \quad
\end{align}
 where $\epsilon_{ijk}$ is the totally antisymmetric third order permutation tensor. We recall that  for a  third  order tensor $\mathbb{E}$ and $X\in \mathbb{R}^{3\times 3}$, $v\in \mathbb{R}^3$ we have the contraction operations $\mathbb{E}: X\in \mathbb{R}^{3}$, $\mathbb{E}. \, v\in \mathbb{R}^{3\times 3}$ and $X.\, v\in \mathbb{R}^3$, with the components
\begin{align}
(\mathbb{E}:\, X)_{i}=\mathbb{E}_{ijk}\,X_{kj}\, , \qquad  (\mathbb{E}. \,v)_{ij}=\mathbb{E}_{ijk}\,v_{k}\,,\qquad (X.\, v)_{i}=X_{ij}\,v_j.
\end{align}
For multiplication of two matrices  we will not use  other specific notations.

 We consider a body which  occupies  a bounded open set $\Omega$ of the three-dimensional Euclidian space $\R^3$ and assume that its boundary
$\partial \Omega$ is a piecewise smooth surface. An elastic material fills the domain $\Omega\subseteq \R^3$ and we refer the motion of the body to  rectangular axes $Ox_i$. By $\Co$  we denote the set of infinitely
differentiable functions with compact support in $\Omega$. In order to realize certain boundary conditions on an open subset $\Gamma\subseteq \partial \Omega$ we make use of the space \cite{BNPS2} of functions that vanish in a neighborhood of $\Gamma$, i.e.
\[
C^\infty_0(\overline{\Omega},\Gamma):=\big\{ u\,|\, \exists \,v\in C_0^\infty( \mathbb{R}^n\setminus \overline{\Gamma}) \ \ \text{such that} \ \ v\big|_{\Gamma}=u\big\}\,.
 \]
 \begin{figure}[h!]
\centering
\begin{minipage}[h]{0.5\linewidth}
\includegraphics[scale=0.5]{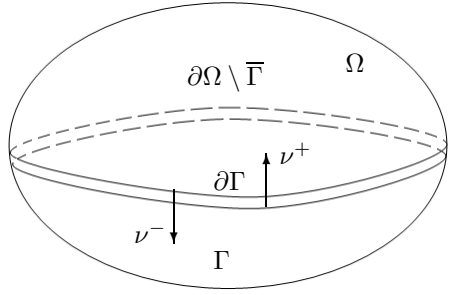}
\centering
 \put(-100,80){$\partial \Omega\setminus\overline{\Gamma}$}
    \put(-40,85){$\Omega$}
    \put(-90,10){$\Gamma$}
   \put(-90,40){$\partial \Gamma$}
  \put(-105,40){\vector(0,-1){20}}
  \put(-120,20){$\nu^-$}
  \put(-70,34){\vector(0,1){20}}
  \put(-65,50){$\nu^+$}
\caption{\footnotesize{The domain $\Omega\subseteq \mathbb{R}^3$ together with the part $\Gamma\subseteq \partial \Omega$, where (geometric) Dirichlet boundary conditions are prescribed. We need to represent the boundary conditions on a disjoint union of $\partial \Omega=(\partial \Omega\setminus\overline{\Gamma})\cup\Gamma\cup\partial \Gamma$, where $\Gamma$ is a open subset of $\partial \Omega$.}}
\label{capture-bc}
\end{minipage}
\end{figure}%
Here, $\nu^-$ is a  vector tangential to the surface $\partial \Omega\setminus \overline{\Gamma}$ and which is orthogonal to its boundary $\partial(\partial \Omega\setminus \overline{\Gamma})=\partial \Gamma$, $\tau^-=n\times \nu^-$ is the tangent to the curve $\partial \Gamma$ with respect to the orientation on $\partial \Omega\setminus \overline{\Gamma}$. Similarly, $\nu:=\nu^+$ is a  vector tangential to the surface $\Gamma$ and which is orthogonal to its boundary $\partial \Gamma$, $\tau:=\tau^+=n\times \nu^+$ is the tangent to the curve $\partial \Gamma$ with respect to the orientation on $\Gamma$. The jump across the joining curve $\partial \Gamma$ is defined by $[\,\cdot\,]^+-[\,\cdot\,]^-$, where
$$[\,\cdot\,]^-:=\hspace*{0cm}\dd\lim\limits_{\footnotesize{\begin{array}{c}x\in\partial \Omega\setminus \overline{\Gamma}\\
 \ x\rightarrow \partial \Gamma\end{array}}}\hspace*{0cm} [\,\cdot\,], \qquad \qquad [\,\cdot\,]^+:=\hspace*{-0.2cm}\dd\lim\limits_{\footnotesize{\begin{array}{c}x\in \Gamma\\
 \ x\rightarrow \partial \Gamma\end{array}}}\hspace*{-0.2cm} [\,\cdot\,].$$
 We assume that $\partial \Omega$ is a smooth surface. Hence, there are no singularities of the boundary and the jump $[\,\cdot\,]^+-[\,\cdot\,]^-$ arises only as consequence of possible discontinuities of the corresponding quantities which follows from the prescribed  boundary conditions on $\Gamma$ and $\partial \Omega\setminus \overline{\Gamma}$.

  The usual Lebesgue spaces of square integrable functions, vector or tensor fields on $\Omega$ with values in $\mathbb{R}$, $\mathbb{R}^3$ or $\mathbb{R}^{3\times 3}$, respectively will be denoted by $L^2(\Omega)$. Moreover, we introduce the standard Sobolev spaces \cite{Adams75,Raviart79,Leis86}
\begin{align}
\begin{array}{ll}
{\rm H}^1(\Omega)=\{u\in L^2(\Omega)\, |\, {\rm grad}\, u\in L^2(\Omega)\}, &\|u\|^2_{{\rm H}^1(\Omega)}:=\|u\|^2_{L^2(\Omega)}+\|{\rm grad}\, u\|^2_{L^2(\Omega)}\,,\vspace{1.5mm}\\
{\rm H}({\rm curl};\Omega)=\{v\in L^2(\Omega)\, |\, {\rm curl}\, v\in L^2(\Omega)\},   &\|v\|^2_{{\rm H}({\rm curl};\Omega)}:=\|v\|^2_{L^2(\Omega)}+\|{\rm curl}\, v\|^2_{L^2(\Omega)}\, ,\vspace{1.5mm}\\
{\rm H}({\rm div};\Omega)=\{v\in L^2(\Omega)\, |\, {\rm div}\, v\in L^2(\Omega)\}, &\|v\|^2_{{\rm H}({\rm div};\Omega)}:=\|v\|^2_{L^2(\Omega)}+\|{\rm div}\, v\|^2_{L^2(\Omega)}\, ,
\end{array}
\end{align}
of functions $u$ or vector fields $v$, respectively. Furthermore, we introduce their closed subspaces $H_0^1(\Omega)$,  ${\rm H}_0({\rm curl};\Omega)$ as completion under the respective graph norms of the scalar valued space $C_0^\infty(\Omega)$.
 We also consider the spaces
\[
 H^1_0(\Omega;\Gamma),\qquad  H^1_0(\dv;\Omega;\Gamma),\qquad  H^1_0(\curl;\Omega;\Gamma)
\]
 as completion under the respective graph norms of the scalar-valued space of the scalar-values space $C^\infty(\overline{\Omega},\Gamma)$. Therefore, these spaces generalize the homogeneous Dirichlet boundary conditions:
 \[u\big|_\Gamma=0,\qquad \text{and}\qquad \langle u, n\rangle|_\Gamma=0\qquad \text{and}\qquad u\times n|_\Gamma=0,
 \]
  respectively.  For vector fields $v$ with components in ${\rm H}^{1}(\Omega)$, i.e.
$
v=\left(    v_1, v_2, v_3\right)^T\, , v_i\in {\rm H}^{1}(\Omega),
$
we define
$
 \nabla \,v=\left(
   (\nabla\,  v_1)^T,
    (\nabla\, v_2)^T,
    (\nabla\, v_3)^T
\right)^T
$, while for tensor fields $P$ with rows in ${\rm H}({\rm curl}\,; \Omega)$, respectively ${\rm H}({\rm div}\,; \Omega)$, i.e.
$
P=\left(
    P_1^T,
    P_2^T,
    P_3^T
\right)$, $P_i\in {\rm H}({\rm curl}\,; \Omega)$ respectively $P_i\in {\rm H}({\rm div}\,; \Omega)$
we define
$
 {\rm Curl}\,P=\left(
   ({\rm curl}\, P_1)^T,
    ({\rm curl}\,P_2)^T,
    ({\rm curl}\,P_3)^T
\right)^T,
$ $ {\rm Div}\,P=\left(
   {\rm div}\, P_1,
    {\rm div}\,P_2,
    {\rm div}\,P_3
\right)^T$.
The corresponding Sobolev-spaces will be denoted by
\[
  H^1(\Omega),\qquad H^1(\Div;\Omega),\qquad H^1(\Curl;\Omega),\qquad H^1_0( \Omega;\Gamma),\qquad H^1_0(\Div;\Omega;\Gamma),\qquad H^1_0(\Curl;\Omega;\Gamma).
 \]

\section{Preliminaries}

\subsection{Related models in isotropic second gradient elasticity}\setcounter{equation}{0}

 One aim of this paper is to propose a new  representation of the curvature energy $W_{\rm curv}(D^2 u)$ and to prove that the corresponding minimization problem
\begin{align}
I(u)=\int_\Omega \left[\mu\, \|{\rm sym} \nabla u\|^2+\frac{\lambda}{2}\, [\tr({\rm sym} \nabla u)]^2+W_{\rm curv}(D^2 u)\right] dV \qquad \mapsto \qquad
\text{min. \ w.r.t.} \quad u,
\end{align}
admit unique minimizers under some appropriate boundary condition. Here $\lambda,\mu$ are the usual Lam\'{e}  constitutive coefficients of isotropic linear elasticity, which is  fundamental to small deformation gradient elasticity. If the curvature energy has the form $W_{\rm curv}(D^2 u)=W_{\rm curv}(D\sym\nabla u)$, the model is called {\bf a strain gradient model}.
We define the third order hyperstress as $D_{D^2 u}W_{\rm curv}(D^2 u)$.

In the following we outline  some curvature energies already proposed in different isotropic second gradient elasticity models:
\begin{itemize}
\item Mindlin \cite{Mindlin64,Mindlin65,Mindlin62}   considered energies (gradient elastic) based on the tensors
\begin{align*}
\eta_{ijk}&=u_{k,ij}, \qquad\quad\qquad\qquad\qquad\qquad\quad \  \ (\text{i.e.}\ \ \eta=\nabla(\nabla u)),\\
\widetilde{\eta}_{ijk}&=\frac{1}{2}(u_{k,ji}+u_{j,ki})=
\varepsilon_{kj,i},\qquad\qquad\quad (\text{i.e.}\ \ \widetilde{\eta}=\nabla(\sym \nabla u))),\\
\widetilde{k}_{ij}&=\frac{1}{2}\epsilon_{jlk}u_{k,li} \qquad\quad\qquad\qquad\qquad\qquad \ (\text{i.e.}\ \
\widetilde{k}=\frac{1}{2}\nabla (\curl u) ),\\ \eta^S_{ijk}&=\frac{1}{3}(u_{k,ij}+u_{i,jk}+u_{j,ki}).
\end{align*}
The most general isotropic curvature energy defined in terms of $D^2u$ has 5 material constants, while the anisotropic representation is much more involved and still subject of ongoing research \cite{Auffray,IsolaSciarraVidoliPRSA}.

Mindlin and Eshel \cite{Mindlin68} have also proposed the following three alternative forms :
\begin{align}\label{alternME}
W_{\rm curv}(D^2 u)&=\mu\,L_c^2\,[a_1^{(1)} \,\eta_{kii}\eta_{kjj}+a_2^{(1)} \, \eta_{ijk}\eta_{ijk}+a_3^{(1)} \,\eta_{ijk}\eta_{jki}+a_4^{(1)} \,\eta_{jji}\eta_{kki}+a_5^{(1)} \,\eta_{iik}\eta_{kjj}] \tag{I}\\
&=\mu\,L_c^2\,[a_1^{(2)} \widetilde{\eta}_{iik}\widetilde{\eta}_{kjj}+
a_2^{(2)} \widetilde{\eta}_{ijj}\widetilde{\eta}_{ikk}+
a_3^{(2)} \widetilde{\eta}_{iik}\widetilde{\eta}_{jjk}+
a_4^{(2)} \widetilde{\eta}_{ijk}\widetilde{\eta}_{ijk}+
a_5^{(2)} \widetilde{\eta}_{ijk}\widetilde{\eta}_{kji} ]\tag{II}\\
&=\mu\,L_c^2\,[\widetilde{a}_1^{(3)} \widetilde{k}_{ij}\widetilde{k}_{ij}+
\widetilde{a}_2^{(3)} \widetilde{k}_{ij}\widetilde{k}_{ji}+
\widetilde{a}_3^{(3)} \eta^S_{iij}\eta^S_{kkj}+
\widetilde{a}_4^{(3)} \eta^S_{ijk}\eta^S_{ijk}+
\widetilde{a}_5^{(3)} \epsilon_{ijk}\widetilde{k}_{ij}\eta^S_{kll} ]\,,\tag{III}
\end{align}
which are frequently cited in the literature, where $L_c$ is the smallest characteristic length in the body and $a_i^{(j)},\widetilde{a}_i^{(j)}$ are dimensionless weighting parameters.
\item a simple curvature energy is considered by Lam \cite{Lam03,nikolov2007origin}
\begin{align}
W_{\rm curv}(D^2 u)&=\mu\,L_c^2[\,a_0\, \|\nabla {\rm div}\, u\|^2+a_1\, \widehat{\eta}_{ijk}\widehat{\eta}_{ijk}+a_2\,\|\sym \nabla (\curl u)\|^2]\\
&=\mu\,L_c^2[\,a_0\, \|\nabla \tr(\sym \nabla u)\|^2+a_1\, \widehat{\eta}_{ijk}\widehat{\eta}_{ijk}+4\,a_2\,\|\sym \Curl (\sym \nabla  u)\|^2]\,,\notag
\end{align}
where $\widehat{\eta}_{ijk}$ is called the deviatoric stretch gradient and which is defined  (see e.g. \cite{Neff_Jeong_IJSS09}) by $$\widehat{\eta}_{ijk}={\eta}_{ijk}^S-{\eta}_{ijk}^{(0)},  \qquad \ {\eta}_{ijk}^{(0)}=\frac{1}{5}(\delta_{ij}\, {\eta}_{mmk}^S+\delta_{jk}\, {\eta}_{mmi}^S+\delta_{ki}\, {\eta}_{mmj}^S).$$
\item another simplified strain gradient elasticity model is proposed in \cite{Aifantis97,Lazar05,lazar2006note} based on the curvature energy
\begin{align}
W_{\rm curv}(D^2 u)&=\mu\,L_c^2\,[a_0\, \|\nabla {\rm tr}(\sym \nabla \, u)\|^2+a_1\,\|\nabla\, ( \sym   \nabla   u)\|^2]\\&=\mu\,L_c^2\,[a_0\, \|\nabla \,{\rm div}\, u\|^2+a_1\,\|\nabla\, (\sym   \nabla   u)\|^2],\notag
\end{align}
which already leads to symmetric nonlocal force-stresses, see Section \ref{str-grad-sapp}.
\item in the same line, using also the second order curvature tensor $\tilde{k}=\frac{1}{2}\nabla \, \curl u$,  in \cite{Sharma05,Kleinert89}  the following energy is considered
\begin{align}
W_{\rm curv}(D^2 u)&=\mu\,L_c^2\,[a_0\, \| \nabla   {\rm div} \,u\|^2+{a}_1\,\| \nabla \curl\, u\|^2]=\mu\,L_c^2\,[a_0\, \|\nabla {\rm tr}(\sym \nabla \, u)\|^2\,+{a}_1\,\| \nabla [\axl (\skw \nabla u)]\|^2]\notag\\
&=\mu\,L_c^2\,[a_0\, \|\nabla {\rm tr}(\sym \nabla \, u)\|^2\,+{a}_1\,\| \nabla\, (\skew \nabla   u)\|^2].
\end{align}
Let us remark that $\tr(\widetilde{k})=\tr(\nabla\axl \skew \nabla u)={\rm div} (\curl\, u)=0$.
\item {\bf the indeterminate couple stress model} (Grioli-Koiter-Mindlin-Toupin model) \cite{Grioli60,Aero61,Koiter64,Mindlin62,Toupin64,Sokolowski72,grioli2003microstructures} in which the higher derivatives (apparently)  appear only through derivatives of the infinitesimal continuum rotation $\curl u$.  Hence, the curvature energy  has  the equivalent forms
\begin{align}\label{KMTe}
W_{\rm curv}(D^2 u)&=\mu\,L_c^2\,\left[\frac{\alpha_1}{4}\, \|\sym \nabla (\curl\, u)\|^2+\frac{\alpha_2}{4}\,\| \skw \nabla (\curl\, u)\|^2\right]\notag\\
&=\mu\,L_c^2\,[\alpha_1\, \|\sym\nabla[\axl(\skw  \nabla u)]\|^2+\alpha_2\,\| \skw \nabla[\axl(\skw  \nabla u)]\|^2]\notag\\
&=\mu\,L_c^2\,\left[\frac{\alpha_1}{4}\, \|\dev \sym \nabla (\curl\, u)\|^2+\frac{\alpha_2}{4}\,\| \skw \nabla (\curl\, u)\|^2\right]\\
&=\mu\,L_c^2\,[\alpha_1\, \|\sym \Curl(\sym \nabla  u)\|^2+ {\alpha_2}\,\| \skw \Curl(\sym \nabla  u)\|^2]\notag\\
&=\mu\,L_c^2\,[\alpha_1\, \|\dev \sym \Curl(\sym \nabla  u)\|^2+{\alpha_2}\,\| \skw \Curl(\sym \nabla  u)\|^2].\notag
\end{align}
Note carefully  that $ \tr[\sym\Curl (\sym \nabla u)]=\tr[\sym\nabla[\axl(\skw  \nabla u)]]=0$. Therefore, we are entitled to use the deviatoric-representation, which is useful when regarding the model in the larger context of micromorphic models.  Here, we have used the master identity to be established in Corollary  \ref{corollaryaxlcurl}
\begin{align*}
\underbrace{\nabla[\axl(\skw  \nabla u)]}_{\text{rotation gradient}}&=\underbrace{[\Curl(\sym\,  \nabla u)]^T}_{\text{strain gradient}},
\end{align*}
which allows us easily to switch from considerations on the level of strain gradients to the level of rotational gradients and vice versa.

We  also used the identities
\begin{align*}
 \ \quad 2\,  \axl (\skw \nabla u)&=\curl u, \qquad
\sym \nabla ( \curl u)=2\,\sym\Curl (\sym \nabla u),\\ \skw \nabla ( \curl u)&=-2\,\skw\Curl (\sym \nabla u), \qquad
 \tr[\Curl (\sym \nabla u)]=0.
\end{align*}
Although  this energy admits  the equivalent forms \eqref{KMTe}$_1$ and \eqref{KMTe}$_6$, the equations and the boundary value problem of the indeterminate couple stress model is usually formulated only using the form \eqref{KMTe}$_1$ of the energy. Hence, we may reformulate the main aim of the present paper: to formulate the boundary value problem for {\bf the indeterminate couple stress model using the alternative form  \eqref{KMTe}$_6$ of the energy} of the Grioli-Koiter-Mindlin-Toupin model.
We also remark that the spherical part of the couple stress tensor remains {\bf indeterminate} since $\tr(\nabla (\curl u))={\rm div} (\curl u)=0$. In order to prove the pointwise uniform positive definiteness it is assumed, following \cite{Koiter64}, that $\alpha_1>0, \alpha_2>0$. Note that pointwise uniform positivity is often  assumed when deriving analytical solutions for simple boundary value problems because it allows to invert the couple stress-curvature relation. We will see subsequently, that pointwise positive definiteness is not necessary for well-posedness.
\newpage
\item In this setting, {\bf Grioli} \cite{Grioli60,grioli2003microstructures} (see also Fleck \cite{Fleck93,Fleck97,Fleck01}) initially considered only the choice $\alpha_1=\alpha_2$. In fact, the energy originally proposed by Grioli \cite{Grioli60} is
     \begin{align}\label{KMTe0}
W_{\rm curv}(D^2 u)&=\mu\,L_c^2\,\left[\frac{\alpha_1}{4}\, \, \|\nabla (\curl\, u)\|^2+\frac{\eta^\prime}{4} \,\tr[ (\nabla (\curl\, u))^2]\right]\notag\\
&=\mu\,L_c^2\,[\alpha_1\, \|\dev \sym \nabla [\axl (\skw \nabla u)]\|^2+\alpha_1\,\| \skw \nabla [\axl (\skw \nabla u)]\|^2\notag\\&\qquad \qquad +\eta^\prime \,\langle \nabla [\axl (\skw \nabla u)],(\nabla [\axl (\skw \nabla u)])^T\rangle]
\\\notag
&=\mu\,L_c^2\,\left[\frac{\alpha_1}{4}\, \, \|\dev \sym \nabla (\curl\, u)\|^2+\frac{\alpha_1}{4}\, \,\| \skw \nabla (\curl\, u)\|^2+\frac{\eta^\prime}{4} \,\langle \nabla (\curl\, u),(\nabla (\curl\, u))^T\rangle\right]
\\
&=\mu\,L_c^2\,\left[\frac{\alpha_1+\eta^\prime}{4}\, \|\dev \sym \nabla (\curl\, u)\|^2+\frac{\alpha_1-\eta^\prime}{4}\,\| \skw \nabla (\curl\, u)\|^2\right].\notag
\end{align}
Mindlin \cite[p. 425]{Mindlin62} explained the relations between  Toupin's constitutive equations \cite{Toupin62} and  Grioli's \cite{Grioli60} constitutive equations and concluded that the obtained equations in the linearized theory are identical, since the extra constitutive parameter $\eta^\prime$ of Grioli's model does not explicitly appear in the equations of motion  but enters only the boundary conditions. The same extra constitutive coefficient appears in  Mindlin and Eshel's (III) and Grioli's version \eqref{KMTe0}.

\item
  {\bf the modified - symmetric couple stress model - the conformal model}.  On the other hand, in the conformal case  \cite{Neff_Jeong_IJSS09,Neff_Paris_Maugin09} one may consider that $\alpha_2=0$, which makes the second order couple stress tensor $\widetilde{m}$ symmetric and trace free \cite{dahler1963theory}.  This conformal curvature case has been
considered  by Neff in  \cite{Neff_Jeong_IJSS09}, the curvature energy having the form
\begin{align}
W_{\rm curv}(D^2 u)&=\mu\,L_c^2\,\frac{\alpha_1}{4}\, \|\sym \nabla (\curl\, u)\|^2=\mu\,L_c^2\,\alpha_1\, \|\dev \sym \Curl(\sym \nabla  u)\|^2.
\end{align}
Indeed, there are two major reasons uncovered  in \cite{Neff_Jeong_IJSS09} for using the modified couple stress model. First, in order to avoid singular stiffening behaviour for smaller and smaller samples in bending \cite{Neff_Jeong_bounded_stiffness09} one has to take $\alpha_2=0$. Second, based on a homogenization procedure invoking an intuitively appealing  natural ``micro-randomness" assumption (a strong statement of microstructural isotropy) requires conformal invariance, which is again equivalent to $\alpha_2=0$. Such a model is still well-posed \cite{Neff_JeongMMS08}, leading to existence and uniqueness results with only one additional material length scale parameter, while it is {\bf not} pointwise uniformly positive definite.

\item {\bf the skew-symmetric couple stress model}.
  { Hadjesfandiari and Dargush} strongly advocate \cite{hadjesfandiari2011couple,hadjesfandiari2013fundamental,hadjesfandiari2013skew} the opposite extreme case, $\alpha_1=0$ and $\alpha_2>0$, i.e. they  propose  the  curvature  energy
\begin{align}
W_{\rm curv}(D^2 u)&=\mu\,L_c^2\,\frac{\alpha_2}{4}\, \|\skw \nabla({\rm curl}\, u)\|^2=\mu\,L_c^2\,\frac{\alpha_2}{2}\,  \|\axl \skw \nabla({\rm curl}\, u)\|^2\\
&=\mu\,L_c^2\,\frac{\alpha_2}{8}\, \|{\rm curl}\,({\rm curl}\, u)\|^2\notag=\mu\,L_c^2\, {\alpha_2}\,\| \skw \Curl(\sym \nabla  u)\|^2\notag.
\end{align}
In that model the nonlocal force stresses and the couple stresses are both assumed to be skew-symmetric. Their reasoning, based in fact on an incomplete understanding of boundary conditions (see \cite{MadeoGhibaNeffMunchKM}) is critically discussed   and generally refuted in \cite{NeffGhibagegenHD}, while mathematically it is also well-posed.
 \end{itemize}

\subsection{Auxiliary results}\label{auxiliarsect}

Further on, we consider a simply connected domain $\Omega\subseteq \mathbb{R}^{3\times 3}$. The starting point is given by the well-known  Nye's formula  \cite{Nye53,Neff_curl06}
\begin{align}
-\Curl\, \overline{A}&=(\nabla \axl \overline{A})^T-\tr[(\nabla \axl \,\overline{A})^T]\, \id ,\notag\\
\overline{\kappa}:=\nabla(\axl \overline{A} )&= -(\Curl\, \overline{A})^T+\frac{1}{2}\tr[(\Curl \,\overline{A})^T]\, \id= \alpha^T-\frac{1}{2}\tr[\alpha]\, \id\,\quad\quad  \text{ ``Nye's curvature tensor"},
\end{align}
for all skew-symmetric matrices $\overline{A}\in \so(3)$, where $\alpha:=-\Curl \overline{A}$ is the micro-dislocation density tensor.

\begin{proposition}\label{lemaaxlcurl} Let $p:\Omega\subseteq \mathbb{R}^3 \rightarrow\mathbb{R}^{3\times 3}$ be given. The formula
\begin{align}
\nabla[\axl \skw p]=[\Curl(\sym\, p)]^T
\end{align}
holds true if and only if there is $u\in C^2(\Omega)$ such that $p=\nabla u$.
\end{proposition}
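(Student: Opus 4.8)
The plan is to turn the claimed identity, in both directions, into the compatibility statement $\Curl p=0$, and then to use the Poincar\'e lemma on the simply connected domain $\Omega$. Write $e:=\sym p$ and $A:=\skw p$, so that $p=e+A$ with $A(x)\in\so(3)$ pointwise, and the asserted formula becomes $(\nabla[\axl A])^{T}=\Curl e$. Two ingredients drive everything. First, Nye's formula (already recorded above) applied to the skew field $A$:
\begin{align}
\Curl A=-(\nabla[\axl A])^{T}+\tr\!\big[(\nabla[\axl A])^{T}\big]\,\id .
\label{nye-aux-lemaaxlcurl}
\end{align}
Second, the elementary fact that $\tr[\Curl S]=0$ for every symmetric tensor field $S$ (immediate from pairing the antisymmetric $\epsilon_{ijk}$ against the symmetry $S_{ij}=S_{ji}$). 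Throughout I would assume $p\in C^{1}(\Omega)$ so that $\nabla[\axl\skw p]$ and $\Curl p$ make classical sense.

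\textbf{Direction ``$p=\nabla u\Rightarrow$ formula''.} If $p=\nabla u$ with $u\in C^{2}(\Omega)$, then $\Curl p=0$ since each row of $\nabla u$ is a gradient; hence $\Curl A=-\Curl e$. Substituting this into \eqref{nye-aux-lemaaxlcurl} and taking the trace, the term $\tr[\Curl e]$ drops out (symmetry of $e$), leaving $0=-\tr[(\nabla[\axl A])^{T}]+3\,\tr[(\nabla[\axl A])^{T}]$, so $\tr[(\nabla[\axl A])^{T}]=0$. Then \eqref{nye-aux-lemaaxlcurl} collapses to $-\Curl e=-(\nabla[\axl A])^{T}$, i.e. $\nabla[\axl\skw p]=[\Curl(\sym p)]^{T}$. (Alternatively this direction is a one line index check using $u_{i,jk}=u_{i,kj}$, but the route above is the one that also gives the converse.)

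\textbf{Direction ``formula $\Rightarrow p=\nabla u$''.} Assume $(\nabla[\axl A])^{T}=\Curl e$. Taking the trace of this and using $\tr[\Curl e]=0$ gives $\tr[(\nabla[\axl A])^{T}]=0$, so \eqref{nye-aux-lemaaxlcurl} reduces to $\Curl A=-(\nabla[\axl A])^{T}=-\Curl e$. Therefore $\Curl p=\Curl e+\Curl A=0$. Since $\Omega$ is simply connected and $p\in C^{1}(\Omega)$, the Poincar\'e lemma applied row by row yields $u\in C^{2}(\Omega)$ with $p=\nabla u$, which finishes the argument.

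\textbf{Where the difficulty lies.} The only non-routine move is realizing that Nye's formula is the correct vehicle and, coupled with it, the trace trick that annihilates the spherical part $\tr[(\nabla[\axl A])^{T}]\,\id$ in \eqref{nye-aux-lemaaxlcurl}; this is precisely what forces $\Curl A$ and $\Curl e$ to cancel and isolates $\Curl p=0$ as the content of the identity. The supporting facts ($\Curl$ of a gradient vanishes, $\tr[\Curl S]=0$ for symmetric $S$, and the Poincar\'e lemma on a simply connected set) are standard; the only other thing to state carefully is the regularity bookkeeping, namely that $\nabla[\axl\skw p]$ being classically defined corresponds to $p\in C^{1}$ and hence to $u\in C^{2}$.
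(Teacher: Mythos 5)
Your proof is correct and follows essentially the same route as the paper: both directions rest on Nye's formula for the skew part together with the vanishing of $\tr[\Curl S]$ for symmetric $S$ (equivalently $\operatorname{div}\curl=0$), reducing the identity to $\Curl p=0$ and then integrating on the simply connected domain. The only difference is cosmetic — in the converse you take the trace of the hypothesis directly to kill the spherical term, whereas the paper first derives $\Curl p=\operatorname{div}(\axl\skw p)\,\id$ and then compares the two trace computations ($3\operatorname{div}$ versus $2\operatorname{div}$) to conclude $\operatorname{div}(\axl\skw p)=0$; your bookkeeping is slightly more economical but the ingredients are identical.
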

\begin{proof}
Let us first prove that
\begin{align}
\nabla[\axl (\skw \nabla u)]=[\Curl(\sym\, \nabla u)]^T, \qquad \text{for all} \qquad u\in C^2(\Omega).
\end{align}
On the one hand, using  Nye's formula for $A=\skw \nabla u$, we obtain
\begin{align}
-\Curl\, (\skw \nabla u)&=(\nabla[ \axl (\skw \nabla u)])^T-\tr[(\nabla[ \axl (\skw \nabla u)])^T]\,\id ,
\end{align}
which implies
\begin{align}\label{v25}
-\Curl(\skw \nabla u)&=(\nabla[ \axl (\skw \nabla u)])^T-\frac{1}{2}\tr[(\nabla (\curl u))^T]\,  \id\\
&=(\nabla[ \axl (\skw \nabla u)])^T-\frac{1}{2}{\rm div}(\curl  u)\, \id=(\nabla [\axl (\skw \nabla u)])^T.\notag
\end{align}
On the other hand,
$
\Curl(\nabla u)=0, \ \nabla u=\sym\nabla u+\skw \nabla u.
$
Thus, we deduce
\begin{align}
\Curl(\sym \nabla u)&=\Curl(\nabla u-\skw \nabla u)=\Curl(\nabla u)-\Curl(\skw \nabla u)\\
&\!\!\!\overset{\eqref{v25}}{=}\Curl(\nabla u)+(\nabla [\axl (\skw \nabla u)])^T-\frac{1}{2}{\rm div}(\curl  u)\,  \id=(\nabla [\axl (\skw \nabla u)])^T.\notag
\end{align}
This establishes the first part of the claim.

Now, we prove that $\nabla[\axl \skw p]=[\Curl(\sym\, p)]^T$ implies that there is a function $u\in C^{2}(\Omega)$ such that $p=\nabla u$. Using again  Nye's formula, we obtain
\begin{align}
(\nabla \axl \skw p)^T\overset{\text{(Nye)}}{=}-(\Curl\, (\skw p))+\tr[\nabla [\axl \skw p]]\,  \id.
\end{align}
Hence, our new hypothesis is
$
\Curl(\sym p){=}(\nabla [\axl \skw p])^T,
$
which implies
$
\Curl(\sym p){=}-(\Curl\, \skw p)+\tr[\nabla[ \axl \skw p]]\,  \id.
$
Hence, we obtain
\begin{align}
\Curl(\sym p+ \skw p)&=\tr[\nabla [\axl \skw p]]\,  \id\qquad
 \Leftrightarrow\qquad
\Curl(p)=\tr[\nabla [\axl \skw p]]\, \id,
\end{align}
or, in the  equivalent form
\begin{align}\label{axltrcurl}
\Curl(p)={\rm div}(\axl \skw p)\, \id.
\end{align}
We have obtained the formula
\begin{align}\label{ptaxl}
\tr[\Curl p]=3\,{\rm div}(\axl \skw p).
\end{align}

Let us also remark that  considering a matrix $B\in\mathbb{R}^{3\times 3}$, we have
\begin{align}\label{axlcurldiv}
\tr[\Curl( \skw B)\big]=2(b_{1,1}+b_{2,2}+b_{3,3})=2\,{\rm div}\, b\,,\qquad b=\axl(\skw B).
\end{align} Therefore, from \eqref{axlcurldiv} we also have obtained
\begin{align}\label{axlcurldiv11}
\tr[\Curl( \skw p)\big]=2\,{\rm div}\, [\axl (\skw p)]\,.
\end{align}
Moreover, for a matrix $B\in \mathbb{R}^{3\times 3}$, we have that
\begin{align}
\tr (\Curl B)&=(B_{13,2}-B_{31,2})+(B_{21,3}-B_{12,3})+(B_{32,1}-B_{23,1}).\notag
\end{align}
We deduce
$
\tr(\Curl S)=0 \  \text{for all} \  S\in {\rm Sym}(3).
$
Hence,
\begin{align}\label{axlcurldiv12}
\tr[\Curl( \sym p)\big]=0\, \quad \forall\, p\in \mathbb{R}^{3\times 3}.
\end{align}
The relations \eqref{axlcurldiv11} and \eqref{axlcurldiv12} lead to
$
\tr[\Curl p\big]=2\,{\rm div}\, [\axl (\skw p)]\,,
$
and together with
\eqref{ptaxl} to
\begin{align}
{\rm div}\, [\axl (\skw p)]=0 .
\end{align}
Using \eqref{axltrcurl}, we obtain
$
\Curl p=0.
$
Since $\Omega$ is an open domain in $\R^3$, it follows that there is an vector $u$, such that $p=\nabla u$ and the proof is complete.
\end{proof}
\begin{corollary}\label{corollaryaxlcurl} For $u\in C^2(\Omega)$ the following formula holds true
\begin{align}
\nabla[\axl (\skw \nabla u)]=[\Curl(\sym\, \nabla u)]^T.
\end{align}
\end{corollary}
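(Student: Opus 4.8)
The plan is to observe that this corollary is an immediate specialization of Proposition \ref{lemaaxlcurl}. Indeed, the hypothesis of the ``if'' direction of that proposition --- namely that $p:\Omega\to\mathbb{R}^{3\times 3}$ be of the form $p=\nabla u$ for some $u\in C^2(\Omega)$ --- is trivially satisfied by taking $p=\nabla u$ with the given $u$. Hence Proposition \ref{lemaaxlcurl} yields directly
\begin{align*}
\nabla[\axl(\skw\,\nabla u)]=[\Curl(\sym\,\nabla u)]^T.
\end{align*}
So strictly speaking there is nothing to prove beyond quoting the proposition; I expect the author to do precisely this.

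Should one wish to make the corollary self-contained, the alternative is to reproduce the short computation that constitutes the first half of the proof of Proposition \ref{lemaaxlcurl}. First I would apply Nye's formula to the skew-symmetric field $\overline{A}=\skw\,\nabla u$, obtaining
\begin{align*}
-\Curl(\skw\,\nabla u)=(\nabla[\axl(\skw\,\nabla u)])^T-\tr\big[(\nabla[\axl(\skw\,\nabla u)])^T\big]\,\id .
\end{align*}
Then I would use $\tr[\nabla[\axl(\skw\,\nabla u)]]=\tfrac12\,{\rm div}(\curl u)=0$ to kill the trace term, giving $-\Curl(\skw\,\nabla u)=(\nabla[\axl(\skw\,\nabla u)])^T$. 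Finally, from the Schwarz identity $\Curl(\nabla u)=0$ together with the decomposition $\nabla u=\sym\,\nabla u+\skw\,\nabla u$ one gets $\Curl(\sym\,\nabla u)=-\Curl(\skw\,\nabla u)=(\nabla[\axl(\skw\,\nabla u)])^T$; transposing yields the claim.

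There is no genuine obstacle here: the only point requiring a moment's care is the vanishing of the trace term, which rests on the elementary identity ${\rm div}(\curl u)=0$ (equivalently $\tr[\Curl(\sym\,\nabla u)]=0$, already recorded as \eqref{axlcurldiv12}). Everything else is a direct application of Nye's formula and of $\Curl\circ\nabla=0$.
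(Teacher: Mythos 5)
Your proposal is correct and matches the paper: the corollary is indeed just the first half of the proof of Proposition \ref{lemaaxlcurl} (the direction $p=\nabla u$), and your fallback computation --- Nye's formula applied to $\skw\,\nabla u$, killing the trace via ${\rm div}(\curl u)=0$, then using $\Curl(\nabla u)=0$ --- is precisely the argument given there.
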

\begin{corollary}\label{corollaryaxlcurl2} For $u\in C^2(\Omega)$ the following formula holds true
\begin{align}
[\nabla {\rm curl}\, u]^T=\Curl[(\nabla u)^T].
\end{align}
Therefore, $(\nabla u)^T\in H(\Curl;\Omega)$ is equivalent to ${\rm curl}\, u\in H^1(\Omega)$.
\end{corollary}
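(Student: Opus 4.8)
The plan is to reduce everything to Corollary~\ref{corollaryaxlcurl} together with the elementary fact that $\Curl$ annihilates gradients, so that essentially no new computation is needed. First I would decompose $(\nabla u)^T=\sym\nabla u-\skw\nabla u$, which holds because transposition fixes the symmetric part and reverses the sign of the skew part. Applying the (row-wise, linear) operator $\Curl$ and using $\Curl(\nabla u)=0$ --- equivalently $\Curl(\skw\nabla u)=-\Curl(\sym\nabla u)$ --- gives
\begin{align*}
\Curl[(\nabla u)^T]=\Curl(\sym\nabla u)-\Curl(\skw\nabla u)=2\,\Curl(\sym\nabla u).
\end{align*}

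Next I would invoke Corollary~\ref{corollaryaxlcurl}, i.e. the master identity $\Curl(\sym\nabla u)=[\nabla(\axl\,\skw\nabla u)]^T$, together with the kinematic relation $2\,\axl(\skw\nabla u)=\curl u$ recalled above. Substituting yields
\begin{align*}
\Curl[(\nabla u)^T]=2\,[\nabla(\axl\,\skw\nabla u)]^T=[\nabla(\curl u)]^T,
\end{align*}
which is the asserted identity for $u\in C^2(\Omega)$.

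For the equivalence of the two Sobolev regularities, I would first observe that transposition is an isometry of $L^2(\Omega;\R^{3\times 3})$, so $(\nabla u)^T\in L^2(\Omega)$ if and only if $\nabla u\in L^2(\Omega)$, i.e. $u\in H^1(\Omega)$; in that case $\curl u\in L^2(\Omega)$ holds automatically. Since both sides of the identity are first order linear differential operators, it extends from $C^2(\Omega)$ to all of $H^1(\Omega)$ in the distributional sense by density. Hence $\Curl[(\nabla u)^T]\in L^2(\Omega)$ if and only if $\nabla(\curl u)\in L^2(\Omega)$, i.e. $\curl u\in H^1(\Omega)$; combining the two statements, $(\nabla u)^T\in H(\Curl;\Omega)$ is equivalent to $\curl u\in H^1(\Omega)$.

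The computation itself is routine; the only points requiring a little care are the bookkeeping of transposes and of the factor $2$ relating $\axl\,\skw\nabla u$ and $\curl u$, and the (standard) density argument needed to pass from the pointwise $C^2$ identity to the statement about the space $H(\Curl;\Omega)$. I do not expect any genuine obstacle here, since Corollary~\ref{corollaryaxlcurl} already carries all the analytic content.
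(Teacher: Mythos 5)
Your derivation is correct and takes essentially the route the paper intends: Corollary~\ref{corollaryaxlcurl2} is stated without a separate proof as a direct consequence of Corollary~\ref{corollaryaxlcurl}, and your reduction via $(\nabla u)^T=\sym\nabla u-\skw\nabla u$, $\Curl(\nabla u)=0$ and $2\,\axl(\skw\nabla u)=\curl u$ makes that explicit with the factor of $2$ and the transposes handled correctly. The Sobolev equivalence is argued at (at least) the paper's level of rigor --- you even make explicit the tacit assumption $\nabla u\in L^2(\Omega)$ needed for the reverse implication, which the text leaves implicit.
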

As consequence of the above remark, it follows that if ${\rm curl}\, u\in H^1(\Omega)$, then $(\nabla u)^T. \tau\in L^2(\partial \Omega)$  for any tengential direction $\tau$ at the boundary and, since
$\langle (\nabla u)^T. \tau,n\rangle=\langle \tau,(\nabla u).n\rangle$, it results that  $(\id-n\otimes n)\,(\nabla u).n\in L^2(\partial \Omega)$, in the sense of trace.

Let us also recall the Saint-Venant compatibility condition
\begin{proposition}\label{compatibilitySV}{\rm(see e.g. \cite{Ciarlet88})} Let a symmetric tensor field $\widehat{\varepsilon}:\Omega\subseteq \mathbb{R}^3 \rightarrow{\rm Sym}(3)$ be given. Then,
\begin{align}
 {\rm \bf inc} (\varepsilon):=\Curl[(\Curl \widehat{\varepsilon})^T]=0 \qquad \Leftrightarrow \qquad \text{there is} \quad u\in C^2(\Omega) \quad \text{such that}\quad \widehat{\varepsilon}=\sym(\nabla u).
\end{align}
\end{proposition}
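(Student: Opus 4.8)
The plan is to prove the two implications separately, using Corollary \ref{corollaryaxlcurl} (equivalently Proposition \ref{lemaaxlcurl}) as the engine and simple connectedness of $\Omega$ through the Poincar\'e lemma applied row-by-row to matrix fields.

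The implication ``$\widehat{\varepsilon}=\sym\nabla u\ \Rightarrow\ \mathbf{inc}(\varepsilon)=0$'' is immediate: by Corollary \ref{corollaryaxlcurl} one has $(\Curl\widehat{\varepsilon})^T=(\Curl(\sym\nabla u))^T=\nabla[\axl(\skw\nabla u)]$, i.e. $(\Curl\widehat{\varepsilon})^T$ is a gradient field, and the $\Curl$ of a gradient vanishes row by row (as used already in the proof of Proposition \ref{lemaaxlcurl}), so $\mathbf{inc}(\varepsilon)=\Curl[(\Curl\widehat{\varepsilon})^T]=0$.

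For the converse, assume $\mathbf{inc}(\varepsilon)=\Curl[(\Curl\widehat{\varepsilon})^T]=0$ and set $B:=\Curl\widehat{\varepsilon}$. The hypothesis says that each row of $B^T$ is curl-free, so since $\Omega$ is simply connected the Poincar\'e lemma gives a vector field $a:\Omega\to\mathbb{R}^3$ with $B^T=\nabla a$, that is $(\Curl\widehat{\varepsilon})^T=\nabla a$. Now I would set $p:=\widehat{\varepsilon}+\anti(a)$ and verify the hypothesis of Proposition \ref{lemaaxlcurl} for this $p$: since $\widehat{\varepsilon}$ is symmetric and $\anti(a)$ is skew, $\skw p=\anti(a)$, hence $\axl(\skw p)=a$ and $\nabla[\axl(\skw p)]=\nabla a$; on the other hand $\sym p=\widehat{\varepsilon}$, so $[\Curl(\sym p)]^T=B^T=\nabla a$. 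Thus $\nabla[\axl(\skw p)]=[\Curl(\sym p)]^T$, and Proposition \ref{lemaaxlcurl} yields $u\in C^2(\Omega)$ with $p=\nabla u$, whence $\sym\nabla u=\sym p=\widehat{\varepsilon}$. Equivalently, one may check directly that $\Curl(\widehat{\varepsilon}+\anti(a))=0$: Nye's formula gives $-\Curl\anti(a)=(\nabla a)^T-\mathrm{div}(a)\,\id$, and $\mathrm{div}\,a=\tr B=\tr(\Curl\widehat{\varepsilon})=0$ by \eqref{axlcurldiv12}, so $\Curl\anti(a)=-(\nabla a)^T=-B$ and $\Curl(\widehat{\varepsilon}+\anti(a))=B-B=0$; then a second application of the Poincar\'e lemma produces $u$ with $\nabla u=\widehat{\varepsilon}+\anti(a)$.

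The routine parts are the two invocations of the Poincar\'e lemma and the $\sym/\skw/\axl/\anti$ bookkeeping. The one point needing care is the handling of the trace (spherical) term: in the Nye's-formula version the skew correction $\anti(a)$ cancels $B$ \emph{only} because $\mathrm{div}\,a=0$, which is exactly where the symmetry of $\widehat{\varepsilon}$ enters, via $\tr(\Curl S)=0$ for $S\in\mathrm{Sym}(3)$ (equation \eqref{axlcurldiv12}); in the Proposition \ref{lemaaxlcurl} version this cancellation is already built into that proposition. A secondary, purely technical issue is regularity: for $\mathbf{inc}(\varepsilon)$ to be classically meaningful one takes $\widehat{\varepsilon}$ sufficiently smooth (say $C^2$), which then propagates through $a$ to $u$; alternatively the whole argument runs verbatim in the sense of distributions.
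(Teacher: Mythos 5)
Your proof is correct. Note first that the paper itself does not prove Proposition \ref{compatibilitySV}; it is quoted from \cite{Ciarlet88}, so there is no internal argument to compare against, and what you supply is essentially the classical Saint-Venant construction — two applications of the Poincar\'e lemma on the simply connected $\Omega$ (assumed at the start of Section \ref{auxiliarsect}), first to produce the infinitesimal rotation vector $a$ from the curl-free rows of $(\Curl\widehat{\varepsilon})^T$, then to integrate the reconstructed field $\widehat{\varepsilon}+\anti(a)$ — but rephrased through the paper's own machinery. The reduction to Proposition \ref{lemaaxlcurl} with $p=\widehat{\varepsilon}+\anti(a)$ is a nice touch: it makes explicit that ${\rm \bf inc}(\widehat{\varepsilon})=0$ is precisely the integrability condition needed to manufacture a $p$ satisfying the first-order compatibility identity $\nabla[\axl(\skw p)]=[\Curl(\sym p)]^T$, and it delegates the trace bookkeeping to that proposition. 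Your alternative direct verification via Nye's formula is equally sound, and you correctly isolate the one step where the symmetry of $\widehat{\varepsilon}$ is indispensable, namely ${\rm div}\, a=\tr(\Curl\widehat{\varepsilon})=0$ by \eqref{axlcurldiv12}, without which $\Curl\anti(a)$ would not cancel $B=\Curl\widehat{\varepsilon}$. The easy implication via Corollary \ref{corollaryaxlcurl} (a gradient field has vanishing row-wise $\Curl$) is fine, and your closing regularity remark is appropriate: classically one needs $\widehat{\varepsilon}\in C^2$ (equivalently $u\in C^3$) for ${\rm \bf inc}$ to be taken pointwise, or else a distributional reading — a point the paper itself glosses over.
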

We note that
\begin{align}
 {\rm \bf inc} (p):=\Curl[(\Curl \sym p)^T]&\in{\rm Sym}(3)\\
 \text{while}\ \ \  \Curl [(\Curl \skew\, p)^T]&\in \so(3) \quad\quad \forall \  p\in\mathbb{R}^{3\times3}.\notag
\end{align}
We also remark that a direct consequence of  Proposition \ref{lemaaxlcurl} is the following first order compatibility condition
\begin{proposition}\label{compatibility} Let $p:\Omega\subseteq \mathbb{R}^3 \rightarrow\mathbb{R}^{3\times 3}$ be given. Then,
\begin{align}
{\rm \bf INC}(p):=[\Curl(\sym\, p)]^T-\nabla[\axl \skw p] \qquad \Leftrightarrow \qquad \text{there is} \quad u\in C^2(\Omega) \quad \text{such that}\quad p=\nabla u.
\end{align}
\end{proposition}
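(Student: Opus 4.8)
The statement is nothing but a rephrasing of Proposition~\ref{lemaaxlcurl}: once we read the (evidently intended) equation ${\rm \bf INC}(p)=0$, it says precisely
\[
[\Curl(\sym\, p)]^T-\nabla[\axl \skw p]=0 \qquad\Longleftrightarrow\qquad \nabla[\axl \skw p]=[\Curl(\sym\, p)]^T,
\]
which is exactly the identity whose validity is characterized in Proposition~\ref{lemaaxlcurl}. So the plan is simply to invoke that proposition. No new computation is required; I would only make the two implications explicit.

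First I would treat the direction ``$p=\nabla u$ for some $u\in C^2(\Omega)$ $\Rightarrow$ ${\rm \bf INC}(p)=0$''. This is immediate from Corollary~\ref{corollaryaxlcurl} (equivalently, the first part of the proof of Proposition~\ref{lemaaxlcurl}), where Nye's formula applied to $\overline A=\skw\nabla u$ together with $\Curl(\nabla u)=0$ and ${\rm div}(\curl u)=0$ yields $\nabla[\axl(\skw\nabla u)]=[\Curl(\sym\nabla u)]^T$, i.e. ${\rm \bf INC}(\nabla u)=0$.

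For the converse, ``${\rm \bf INC}(p)=0$ $\Rightarrow$ $p=\nabla u$ for some $u\in C^2(\Omega)$'', I would quote the second part of the proof of Proposition~\ref{lemaaxlcurl} verbatim: Nye's formula rewrites the hypothesis $\Curl(\sym p)=(\nabla[\axl\skw p])^T$ as $\Curl(p)=\tr[\nabla[\axl\skw p]]\,\id = {\rm div}(\axl\skw p)\,\id$; combining the trace identities \eqref{axlcurldiv11} and \eqref{axlcurldiv12} (namely $\tr[\Curl(\skw p)]=2\,{\rm div}[\axl(\skw p)]$ and $\tr[\Curl(\sym p)]=0$) with \eqref{ptaxl} forces ${\rm div}[\axl(\skw p)]=0$, hence $\Curl p=0$. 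Since $\Omega$ is simply connected, the classical Poincaré lemma then gives a (unique up to a constant) field $u\in C^2(\Omega)$ with $p=\nabla u$.

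There is no genuine obstacle here beyond what is already resolved in Proposition~\ref{lemaaxlcurl}; the only point deserving care is the standing hypothesis that $\Omega$ be simply connected, which is what legitimizes the step $\Curl p=0\Rightarrow p=\nabla u$. Thus the entire proof is one line: apply Proposition~\ref{lemaaxlcurl} to $p$. (I would also silently correct the displayed statement to read ``${\rm \bf INC}(p)=0 \Leftrightarrow \dots$''.)
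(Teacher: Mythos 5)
Your proposal is correct and coincides with the paper's own treatment: the paper states Proposition~\ref{compatibility} as a direct consequence of Proposition~\ref{lemaaxlcurl} (with the implicit reading ${\rm \bf INC}(p)=0$), exactly as you argue, and your explicit spelling-out of the two implications just reproduces the two halves of that earlier proof, including the use of simple connectedness of $\Omega$ for the step $\Curl p=0\Rightarrow p=\nabla u$.
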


We observe  that
$$
{\rm \bf INC}(p)\in\mathbb{R}^{3\times3}\quad\quad \forall \  p\in\mathbb{R}^{3\times3}.
$$
We  recall  the well known first order compatibility condition
\begin{proposition}\label{wcompatibility} Let $p:\Omega\subseteq \mathbb{R}^3 \rightarrow\mathbb{R}^{3\times 3}$ be given. Then,
\begin{align}
\Curl p=0 \quad  ( [\Curl (\sym p)]^T=-[\Curl (\skw p)]^T) \quad \Leftrightarrow \quad \text{there is} \quad u\in C^2(\Omega) \quad \text{such that}\quad p=\nabla u.
\end{align}
\end{proposition}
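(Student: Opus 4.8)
The plan is to reduce the claim to the classical Poincaré lemma, read row by row through the definition of the $\Curl$ operator. First I would dispose of the parenthetical reformulation: since $\Curl$ is linear and $p=\sym p+\skw p$, one has $\Curl p=\Curl(\sym p)+\Curl(\skw p)$, so $\Curl p=0$ is equivalent to $\Curl(\sym p)=-\Curl(\skw p)$, and hence, transposing, to $[\Curl(\sym p)]^T=-[\Curl(\skw p)]^T$; this step needs no assumption on $\Omega$.

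For the implication $p=\nabla u\Rightarrow\Curl p=0$ I would argue directly: writing $p$ in rows, $p=(p_1^T,p_2^T,p_3^T)$ with $p_i=\nabla u_i$, the $i$-th row of $\Curl p$ is $\curl p_i=\curl\nabla u_i=0$, so $\Curl p=0$. For the converse, the condition $\Curl p=0$ means by definition that $\curl p_i=0$ for $i=1,2,3$; since $\Omega$ is open and simply connected, the classical Poincaré lemma yields scalar potentials $u_i\in C^2(\Omega)$ with $p_i=\nabla u_i$, and then $u:=(u_1,u_2,u_3)^T\in C^2(\Omega)$ satisfies $\nabla u=p$. As in the earlier propositions, $p$ is tacitly of class $C^1$, so that $u$ is of class $C^2$.

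Alternatively, the converse can be derived entirely from the apparatus already assembled in this section. Starting from $\Curl(\sym p)=-\Curl(\skw p)$ and applying Nye's formula to the skew field $\skw p$ gives $-\Curl(\skw p)=(\nabla[\axl\skw p])^T-{\rm div}(\axl\skw p)\,\id$, hence $\Curl(\sym p)=(\nabla[\axl\skw p])^T-{\rm div}(\axl\skw p)\,\id$. Taking the trace and invoking \eqref{axlcurldiv12}, i.e. $\tr[\Curl(\sym p)]=0$, yields $0=-2\,{\rm div}(\axl\skw p)$, so ${\rm div}(\axl\skw p)=0$; plugging this back leaves exactly $[\Curl(\sym p)]^T=\nabla[\axl\skw p]$, which is the hypothesis of Proposition \ref{lemaaxlcurl} (equivalently Proposition \ref{compatibility}). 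Therefore there is $u\in C^2(\Omega)$ with $p=\nabla u$.

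I do not expect any genuine obstacle: this is the standard first-order compatibility condition, and the row-wise Poincaré lemma carries the whole argument. The only point demanding mild care is the bookkeeping in the second route — matching the trace terms produced by Nye's formula against $\tr[\Curl(\sym p)]=0$ — but this is routine given the identities \eqref{axlcurldiv11}--\eqref{axlcurldiv12} already at our disposal.
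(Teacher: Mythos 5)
Your primary argument is correct and is exactly what the paper has in mind: the paper states Proposition \ref{wcompatibility} without proof, recalling it as the classical first order compatibility condition, and the intended justification is precisely your route --- $\Curl(\nabla u)=0$ row by row for the forward direction, and the row-wise Poincar\'e lemma on the simply connected domain assumed at the beginning of Section \ref{auxiliarsect} for the converse; your reduction of the parenthetical reformulation via linearity of $\Curl$ and $p=\sym p+\skw p$ is likewise fine.

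One caution about your ``alternative'' converse: deriving it from Proposition \ref{lemaaxlcurl} (equivalently Proposition \ref{compatibility}) is circular relative to the paper's logical ordering, because the paper's proof of Proposition \ref{lemaaxlcurl} itself concludes by invoking exactly the implication $\Curl p=0\ \Rightarrow\ \exists\, u$ with $p=\nabla u$, i.e.\ the statement you are proving. Your Nye-formula bookkeeping there is correct as algebra --- taking the trace of $\Curl(\sym p)=(\nabla[\axl\skw p])^T-{\rm div}(\axl\skw p)\,\id$ and using $\tr[\Curl(\sym p)]=0$ does give ${\rm div}(\axl\skw p)=0$ and hence $[\Curl(\sym p)]^T=\nabla[\axl\skw p]$ --- but as a proof of Proposition \ref{wcompatibility} it presupposes the result through the cited lemma. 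Keep the Poincar\'e-lemma argument as the proof and regard the second route only as a consistency check (it is in effect the content of Corollary \ref{equicompatibility}).
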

Hence, we have the following equivalence
\begin{corollary}\label{equicompatibility} Let $p:\Omega\subseteq \mathbb{R}^3 \rightarrow\mathbb{R}^{3\times 3}$ be given. Then,
\begin{align}
\Curl p=0 \qquad  \Leftrightarrow \qquad {\rm \bf INC}(p)=0.
\end{align}
\end{corollary}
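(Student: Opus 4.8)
The plan is to exploit the fact that both conditions in the equivalence are, on their own, already-established characterizations of $p$ being a gradient. Concretely, I would invoke Proposition~\ref{wcompatibility}, which gives $\Curl p = 0 \Leftrightarrow \exists\, u \in C^2(\Omega)\colon p = \nabla u$, together with Proposition~\ref{compatibility}, which gives ${\rm \bf INC}(p) = 0 \Leftrightarrow \exists\, u \in C^2(\Omega)\colon p = \nabla u$ (the domain $\Omega$ being simply connected, as assumed throughout this subsection). Chaining the two bi-implications through the common middle statement ``$p=\nabla u$ for some $u\in C^2(\Omega)$'' immediately yields $\Curl p = 0 \Leftrightarrow {\rm \bf INC}(p) = 0$, which is exactly the assertion.

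Should a direct argument that does not pass through the potential $u$ be desired, I would instead assemble it out of Nye's formula and the trace identities already derived inside the proof of Proposition~\ref{lemaaxlcurl}. For the implication ``$\Rightarrow$'', I would assume ${\rm \bf INC}(p) = 0$, i.e. $\Curl(\sym p) = (\nabla[\axl\skw p])^T$, substitute this into Nye's formula written for the skew field $\skw p$ to obtain $\Curl p = {\rm div}(\axl\skw p)\,\id$ (which is precisely \eqref{axltrcurl}), and then take the trace to get $\tr[\Curl p] = 3\,{\rm div}(\axl\skw p)$. Comparing with the unconditional identity $\tr[\Curl p] = \tr[\Curl(\sym p)] + \tr[\Curl(\skw p)] = 2\,{\rm div}(\axl\skw p)$, which follows from \eqref{axlcurldiv11} and \eqref{axlcurldiv12}, forces ${\rm div}(\axl\skw p) = 0$; feeding this back into \eqref{axltrcurl} gives $\Curl p = 0$.

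For the converse ``$\Leftarrow$'', I would start from $\Curl p = 0$, deduce $\tr[\Curl(\skw p)] = -\tr[\Curl(\sym p)] = 0$ via \eqref{axlcurldiv12}, hence ${\rm div}(\axl\skw p) = 0$ by \eqref{axlcurldiv11}, so that Nye's formula for $\skw p$ collapses to $\Curl(\skw p) = -(\nabla[\axl\skw p])^T$; combining this with $\Curl(\sym p) = -\Curl(\skw p)$ and transposing yields $[\Curl(\sym p)]^T = \nabla[\axl\skw p]$, that is ${\rm \bf INC}(p) = 0$.

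I do not expect any genuine obstacle: the corollary is essentially a repackaging of Propositions~\ref{lemaaxlcurl}, \ref{compatibility} and \ref{wcompatibility}. The only points requiring a little care are keeping track of transposes and of the coefficient $2$ relating $\tr[\Curl(\skw p)]$ to ${\rm div}(\axl\skw p)$ (with the symmetric part contributing nothing to the trace of $\Curl$); both of these facts are already recorded in the earlier proof, so the short transitivity argument is the one I would actually present.
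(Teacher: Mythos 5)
Your primary argument is exactly what the paper does: the corollary is stated with a simple ``Hence'' after Propositions~\ref{compatibility} and~\ref{wcompatibility}, i.e.\ both conditions are chained through the common characterization $p=\nabla u$ on the simply connected domain, so your proposal is correct and takes essentially the same route. Your alternative direct argument via Nye's formula and the trace identities is also sound, but it merely repackages the computations already carried out inside the proof of Proposition~\ref{lemaaxlcurl}, so it adds nothing beyond the paper's reasoning.
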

We finally  remark that
\begin{align}
\Curl({\rm \bf INC}(p))=\Curl[(\Curl \sym p)^T]={\rm \bf inc}(\sym p).
\end{align}

\subsection{Discussion of invariance properties}

The difference between the $\nabla [\axl(\skew\nabla u)]$ formulation and the $\Curl(\sym\nabla u)$ formulation can be seen when considering the results under superposed incompatible tensor fields:
\begin{remark}
The quantity $[\Curl(\sym\nabla u)]^T$ is invariant under locally adding a skew-symmetric non-constant tensor field  $W(x)\in\so(3)$, i.e.
\begin{align}\label{coninvcurl}
[\Curl(\sym(\nabla u+W(x)))]^T=[\Curl(\sym \nabla u)]^T.
\end{align}
 However, since  $\nabla[\axl \skw p]\neq[\Curl\,(\sym p)]^T$ for general incompatible $p\in\mathbb{R}^{3\times 3}$, $p\neq \nabla u$, the quantity
  $\nabla[\axl (\skw \nabla u)]$ is not invariant under locally adding $W(x)\in\so(3)$, i.e.
  \begin{align}
  \nabla[\axl(\skw (\nabla u+W(x))]\neq \nabla[\axl (\skw \nabla u)].
  \end{align}
 \end{remark}

\begin{remark}
The term $\nabla[\axl (\skw \nabla u)]$ is invariant under locally adding a symmetric, non-constant tensor field $S(x)\in{\rm Sym}(3)$, i.e.
\begin{align}\label{invsymaxl}
\nabla[\axl\,(\skw (\nabla u+S(x))]=\nabla[\axl (\skw \nabla u)].
\end{align}
 \end{remark}
Let us recall  the Lie-group decomposition ${\rm GL}^+(3)$ and the corresponding Lie-algebra decomposition:
\begin{align}
{\rm GL}^+(3)&=\{{\rm SL}(3)/{\rm SO}(3)\}\cdot {\rm SO}(3)\cdot (\mathbb{R}^+\!\cdot\! \id)  \qquad \quad\qquad \, \text{Lie-group decomposition},\notag\\
T_{\id}{\rm GL}^+(3)=\mathbb{R}^{3\times3}={\mathfrak{gl}}(3)&=\{\sL(3)\cap \Sym(3)\}\oplus\so(3) \oplus\mathbb{R}\!\cdot\! \id \qquad\qquad\quad\   \text{Lie-algebra decomposition}.
\end{align}
The space ${\rm Sym}(3)$ is not a Lie-algebra, it is only a vector space and it does  not have a group structure: the set ${\rm GL}(3)/{\rm SO}(3)={\rm PSym}(3)$ is not a group, neither is the set ${\mathfrak{gl}}(3)\cap \Sym(3)$ a Lie-algebra. Hence, the invariance requirement in \eqref{coninvcurl}, i.e.,  locally adding $W(x)\in\so(3)$ is much more plausible than assuming \eqref{invsymaxl} since it yields $\so(3)$-Lie invariance.

\subsection{Conformal invariance of the curvature energy and group theoretic arguments in favour of the modified couple stress theory}

An infinitesimal conformal mapping \cite{Neff_Jeong_Conformal_ZAMM08,Neff_Jeong_IJSS09} preserves (to first order) angles and shapes of infinitesimal figures. The included inhomogeneity is therefore only a global feature of the mapping (see Figure \ref{capture}). There is locally no shear-type deformation. Therefore it seems natural to require that the second gradient model should not ascribe energy to such deformation modes.

A map $\phi_c:\mathbb{R}^3\rightarrow\mathbb{R}^3$ is infinitesimal conformal if and only if its Jacobian satisfies pointwise $\nabla \phi_c(x)\in \mathbb{R}\cdot \id+\so(3)$, where $\mathbb{R}\cdot \id+\so(3)$ is the conformal Lie-algebra. This implies \cite{Neff_Jeong_Conformal_ZAMM08,Neff_Jeong_IJSS09,Neff_Jeong_bounded_stiffness09} the representation (see Figure \ref{capture})
\begin{align}
\phi_c(x)=\frac{1}{2}\left(2\langle \axl\overline{W},x\rangle \,x-\axl\overline{W}\|x\|^2\right)+[\widehat{p}\, \id+\widehat{A}]\cdot x+\widehat{b}\,,
\end{align}
where $\overline{W},\widehat{A}\in \so(3)$, $\widehat{b}\in \mathbb{R}^3$, $\widehat{p}\in \mathbb{R}$ are arbitrary given constants.
For the infinitesimal conformal mapping $\phi_c$ we note
\begin{align}\label{conf}
\begin{array}{ll}
\nabla \phi_c(x)=[\langle \axl\overline{W},x\rangle+\widehat{p}]\,\id +\anti(\overline{W}.\, x)+\widehat{A},&\qquad
{\rm div}\,\phi_c(x)=\tr[\nabla \phi_c(x)]=3\, [\langle \axl\overline{W},x\rangle+\widehat{p}],\vspace{1.2mm}\\
\skew  \nabla \phi_c(x)=\anti(\overline{W}.\, x)+\widehat{A},&\qquad
\sym \nabla \phi_c(x)=[\langle \axl\overline{W},x\rangle+\widehat{p}]\, \id,\vspace{1.2mm}\\
\dev\sym \nabla \phi_c(x)=0,&\qquad
\nabla \curl \phi_c(x)=2\, \overline{W}\in \so(3),\vspace{1.2mm}\\
\sym \nabla\curl \phi_c(x)=0,&\qquad
\skew\,\nabla \curl \phi_c(x)=2\, \overline{W}.
\end{array}
\end{align}
These relations  are easily established. By {\bf conformal invariance} of the curvature energy term we  mean that the curvature energy vanishes on infinitesimal conformal mappings. This is equivalent to
\begin{align}
W_{\rm curv}(D^2 \phi_c)=0\qquad \text{for all conformal maps} \quad \phi_c,
\end{align}
or in terms of the second order couple stress tensor $\widetilde{m}:=D_{\nabla \curl u}W_{\rm curv}(\nabla \curl u)$,
\begin{align}
\widetilde{m}(D^2 \phi_c)=0 \qquad \text{for all conformal maps} \quad \phi_c.
\end{align}
The classical linear elastic energy still ascribes energy to such a deformation mode, but only related to the bulk modulus, i.e.,
\begin{align}
W_{\rm lin}(\nabla \phi_c)=\underbrace{\mu\, \|\dev\sym \nabla \phi_c\|^2}_{=0}+\frac{2\,\mu+3\,\lambda}{2} \, [\tr(\nabla  \phi_c)]^2=\frac{2\,\mu+3\,\lambda}{2} \, [\tr(\nabla  \phi_c)]^2.
\end{align}
In case of a classical infinitesimal perfect plasticity formulation with von Mises deviatoric flow rule, conformal mappings are precisely those inhomogeneous mappings, that do not lead to plastic flow \cite{Neff_Cosserat_plasticity05}, since the deviatoric stresses remain zero: $\dev \sym \nabla \phi_c=0$.

In that perspective
\[\framebox[1.1\width]{\textit{conformal mappings are ideally elastic transformations and should not lead to moment stresses.}}
\]

\begin{figure}[h!]
\centering
\begin{minipage}[h]{1\linewidth}
\includegraphics[scale=0.5]{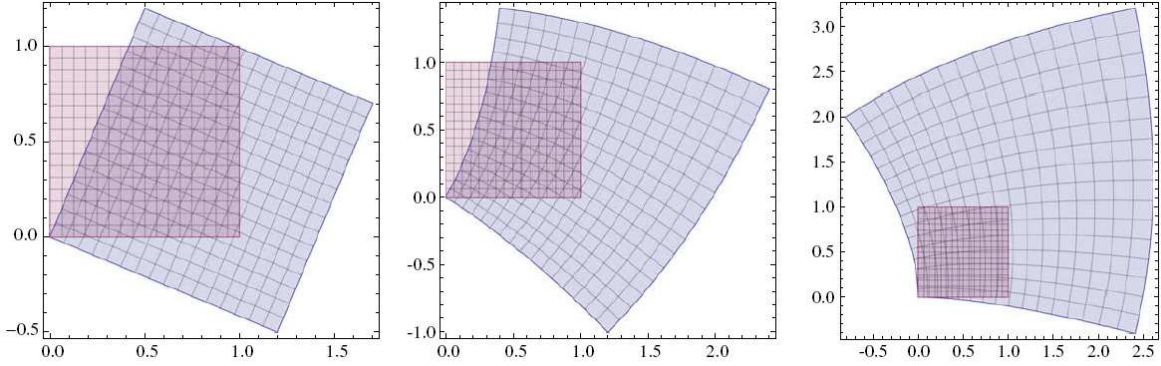}
\centering
\caption{\footnotesize{Infinitesimal conformal mappings \cite{Neff_Jeong_bounded_stiffness09} preserve locally angles and shapes. The corresponding couple stress tensor $\widetilde{m}$ is zero for these deformation modes.}}
\label{capture}
\end{minipage}
\end{figure}%

Using the formulas \eqref{conf}, it can be easily remarked that $\|\nabla [\dev\sym \nabla u]\|^2$, $\|\dev \sym \nabla u\|^2$,\break $\|\dev \sym \nabla (\curl u)\|^2$, $\|\sym \Curl (\sym \nabla u)\|^2=\frac{1}{4}\,\|\sym\nabla (\curl u)\|^2$ are conformally invariant.
Let us note that, using  Lemma \ref{lemaaxlcurl}, we have
\begin{align}
\|\sym \Curl (\sym \nabla u)\|^2&=\|\sym\nabla [\axl (\skw \nabla u)]\|^2=\frac{1}{4}\,\|\sym\nabla (\curl u)\|^2,\\
\|\skw [\Curl(\sym \nabla u)]\|^2&=\|\skw\nabla [\axl (\skw \nabla u)]\|^2=\frac{1}{4}\,\|\skw\nabla (\curl u)\|^2.
\end{align}
Hence $\|\sym \Curl (\sym \nabla u)\|^2=\frac{1}{4}\,\|\sym\nabla (\curl u)\|^2$ is also conformally invariant (use \eqref{conf}$_6$), while
\begin{align}
\|\skw [\Curl (\sym \nabla \phi_c)]\|^2&=\frac{1}{4}\,\|\skw\nabla (\curl \phi_c)\|^2=\|\overline{W}\|^2,
\end{align}
and therefore
$
\|\Curl (\sym\nabla u)\|^2
$ is not conformally invariant, nor is $\|\nabla (\axl (\skw \nabla u))\|^2$ conformally invariant. Nor is $\|\nabla \tr(\sym \nabla u)\|^2=\|\nabla {\rm div} u\|^2$ conformally invariant.

The underlying additional invariance property of the modified couple stress theory is precisely conformal invariance. In the modified couple stress model, these deformations are free of size-effects, while e.g. the Hadjesfandiari and Dargush choice would describe size-effects. In other words, the generated couple stress tensor $\widetilde{m}$ in the modified couple stress model is zero for this inhomogeneous deformation mode, while in the  Hadjesfandiari and Dargush choice $\widetilde{m}$ is constant and skew-symmetric\footnote{This observation is a further development in understanding why the Hadjesfandiari and Dargush \cite{hadjesfandiari2010polar,hadjesfandiari2011couple,hadjesfandiari2014evo} choice is rather meaningless, while mathematically not forbidden \cite{NeffGhibagegenHD}.}.

\subsection{The classical indeterminate couple stress model based on \\$\|\nabla [\axl (\skw \nabla u)]\|^2$ with skew-symmetric nonlocal force-stress}\label{sectaxlb}

We are now re-deriving the classical equations based on the $\nabla [\axl (\skw \nabla u)]$-formulation of the indeterminate couple stress model.  This part does not contain new results, see, e.g., \cite{MadeoGhibaNeffMunchKM} for further details, but is included for setting the stage of our new modelling approach.

Taking free variations $\delta u\in C^\infty(\Omega)$ in the energy $W(e,\widetilde{k})=W_{\rm lin}(e)+W_{\rm curv}(\widetilde{k})$, but using  the following equivalent  curvature energy based on $\widetilde{k}=\nabla [\axl (\skw \nabla u)]=\frac{1}{2}\nabla (\curl u)$\,:
\begin{align}\label{gradeq11}
W_{\rm curv}(\widetilde{k})=& \mu\,L_c^2\,[\alpha_1\, \|\dev\sym \nabla [\axl (\skw \nabla u)]\|^2+
\alpha_2\, \|\skw\nabla [\axl (\skw \nabla u)]\|^2]\\
=& \mu\,L_c^2\,[\alpha_1\, \|\dev\sym \Curl(\sym \nabla u)]\|^2+
\alpha_2\, \|\skw\Curl(\sym \nabla u)]\|^2]\, ,\notag
\end{align}
 we obtain the virtual work principle taking free variations $\delta u\in C^\infty(\Omega)$ in the energy \eqref{gradeq11}
 \begin{align}\label{gradeq211}
\frac{\rm d}{\rm dt}\int_\Omega W(\nabla u+t\nabla \delta u)dv\Big|_{t=0}=&\int_\Omega\bigg[ 2\mu\,\langle\sym \nabla u, \sym \nabla \delta u \rangle+\lambda \tr(\nabla u)\,\tr( \nabla   \delta u)\notag\\&+\mu\,L_c^2\,[2\,  \alpha_1\, \langle \dev\sym \nabla [\axl (\skw \nabla u)],\dev\sym \nabla [\axl (\skw \nabla \delta u)]\rangle \\&+
2\, \alpha_2\, \langle \skw\nabla [\axl (\skw \nabla u)],\skw\nabla [\axl (\skw \nabla \delta u)]\rangle]+\langle f,\delta u\rangle \bigg]dv =0.\notag
\end{align}
The classical divergence theorem
 leads to
\begin{align}\label{germaneq311}
\int_\Omega\langle \Div (\sigma-\widetilde{\tau})+f, \delta u \rangle \, dv-\int_{\partial \Omega}\langle (\sigma-\widetilde{\tau}).\, n, \delta u\rangle \,dv
-\int_{\partial\Omega}&\langle \widetilde{ {m}}.\, n, \axl (\skw \nabla \delta u) \rangle da=0,
\end{align}
for all virtual displacements $ \delta u\in C^\infty(\Omega)$, where   $n$ is the unit outward normal vector at the surface $\partial \Omega$, $\sigma$ is the  symmetric local force-stress tensor
\begin{align}
\sigma=2\, \mu \, \sym \nabla u+\lambda \, \tr(\nabla u)\,\id\! \in {\rm Sym}(3)
\end{align}
 and $\widetilde{\tau}$ represents  the nonlocal force-stress  tensor (which here is automatically skew-symmetric)
 \begin{align}
 \widetilde{\tau}&=\mu\,L_c^2\,[{\alpha_1}\,\anti
{\rm Div}(\dev\sym \nabla [\axl (\skw \nabla u)])+ {\alpha_2}\,\anti {\rm Div}(\skw\nabla [\axl (\skw \nabla u)])\\
&=\mu\,L_c^2\,[\frac{\alpha_1}{2}\anti
{\rm Div}(\dev\sym \nabla (\curl u))+ \frac{\alpha_2}{2}\,\anti {\rm Div}(\skw\nabla (\curl u))]\notag\\
&=\mu\,L_c^2\,\anti
{\rm Div}\left[\frac{\alpha_1}{2}\dev\sym \nabla (\curl u)+ \frac{\alpha_2}{2}\,\skw\nabla (\curl u)\right]\notag=\frac{1}{2}\anti
{\rm Div}[\widetilde{ {m}}]\in \so(3),\notag
 \end{align}
where
\begin{align}
\widetilde{ {m}}&=\mu\,L_c^2\,[{\alpha_1}\sym \nabla (\curl u)+ {\alpha_2}\,\skw\nabla (\curl u)]\notag\\
&=\mu\,L_c^2\,[{\alpha_1}\dev\sym \nabla (\curl u)+ {\alpha_2}\,\skw\nabla (\curl u)]\\
&=\mu\,L_c^2\,[2\,{\alpha_1}\dev\sym \nabla[\axl (\skw \nabla u)]+2\, {\alpha_2}\,\skw\nabla [\axl (\skw \nabla u)]],\notag
\end{align}
is the hyperstress tensor (couple stress tensor) which may or may not be symmetric, depending on the material parameters.

The non-symmetry of force stress is a constitutive assumption.  Thus, if the test function $\delta u\in C^\infty(\Omega)$ also satisfies $\axl (\skw \nabla \delta u)=0$ on $\Gamma$ (equivalently $\curl  \delta u=0$), then we obtain the equilibrium equation
\begin{empheq}[box=\widefbox]{align}
\Div \bigg\{&\underbrace{2\, \mu \, \sym \nabla u+\lambda \, \tr(\nabla u)\,\id}_{\text{\rm symmetric local force stress}\  \sigma\in {\rm Sym}(3)} \notag\\&-\underbrace{\mu\,L_c^2\,[\anti\{
{\rm Div}(\underbrace{\alpha_1\, \dev\sym \nabla [\axl (\skw \nabla u)]+ \alpha_2\,\skw\nabla [\axl (\skw \nabla u)]}_{\text{\rm hyperstress } \frac{1}{2}\widetilde{m}\in \gl(3)})\}]\bigg\}}_{\text{\rm completely skew-symmetric nonlocal force stress}\  \widetilde{\tau}\in\so(3)}+f=0,
\end{empheq}
or equivalently
 \begin{align}\label{ec11}
 \Div \,\widetilde{\sigma}_{\rm total}+f=0,
 \end{align}
 where
 \begin{align}
 \widetilde{\sigma}_{\rm total}=\sigma-\widetilde{\tau}\not\in{\rm Sym}(3).
 \end{align}
The complete consistent boundary conditions for this formulation is presented for the first time in \cite{MadeoGhibaNeffMunchKM,MadeoGhibaNeffMunchCRM} and recapitulated in Figure \ref{limitmodel003} and Figure \ref{limitmodel00}.

\section{The new isotropic gradient elasticity model with symmetric nonlocal force stress and symmetric hyperstresses}\label{str-grad-sapp}
\setcounter{equation}{0}
As {\it independent constitutive variables} for our  novel gradient elastic model we choose now
\begin{align}\label{213}
\varepsilon=\sym \nabla u, \quad \quad \quad \widehat{k}=\Curl(\sym \nabla u)=\Curl \varepsilon.
\end{align}

We use again the orthogonal Lie-algebra decomposition of $\mathbb{R}^{3\times3}$
\begin{align}\label{tratoraxi}
\Curl (\sym \nabla u)&={\dev\sym(\Curl (\sym \nabla u))}+
{\skw\Curl (\sym \nabla u)}\, .
\end{align}
The term $\frac{1}{3}\tr(\Curl (\sym \nabla u))\,\id$ is missing since $\tr(\Curl (\sym \nabla u))=0$ anyway (already $\tr(\Curl S)=0$ for $S\in{\rm Sym}(3)$).
The model is derived from the free energy $W(e,\widehat{k})=W_{\rm lin}(e)+W_{\rm curv}(\widehat{k})$, with
\begin{align}\label{gradeq}
W_{\rm lin}(\varepsilon)&=\,\mu\, \|\sym \nabla u\|^2+\frac{\lambda}{2}\, [\tr(\sym \nabla u)]^2=\,\mu\, \|\dev \sym \nabla u\|^2+\frac{\kappa}{2}\, [\tr(\sym \nabla u)]^2,\\
W_{\rm curv}(\widehat{k})&=\mu\, L_c^2\,\Big[\,\alpha_1\, \underbrace{\|\dev\sym\Curl (\sym \nabla u)\|^2}_{\text{conformally invariant}}+
\alpha_2\, \underbrace{\|\skw\Curl (\sym \nabla u)\|^2}_{\text{not conformally invariant}}\Big]\, ,\notag
\end{align}
where $\alpha_1$, $\alpha_2$ are non-negative constitutive curvature coefficients and $\kappa=\frac{2\mu+3\lambda}{3}$ is the infinitesimal bulk modulus, while $\mu$ is the classical shear modulus.

The {\bf hyperstress}-tensor (moment stress tensor, couple stress tensor)
$$\widehat{ {m}}:=D_{\widehat{k}}W_{\rm curv}(\widehat{k})=\mu\,L_c^2\,[2\,\alpha_1 \,\dev\sym\Curl (\sym \nabla u)+2\,\alpha_2\,\skw\Curl (\sym \nabla u)]$$
 is symmetric in the conformal case $\alpha_2=0$, while the nonlocal force stress tensor is always symmetric, see eq. \eqref{nonlocalcurl}.

Due to isotropy, the curvature energy $W_{\rm curv}(k)$ involves in principle only 2 additional constitutive constants. Taking free variations $\delta u\in C^\infty(\Omega)$ in the energy \eqref{gradeq}, we obtain the virtual work principle
 \begin{align}\label{gradeq2}
\frac{\rm d}{\rm dt}\int_\Omega W(\nabla u+t\nabla \delta u)dv \Big|_{t=0}=&\int_\Omega\bigg[\quad  2\,\mu\,\langle\sym \nabla u, \sym \nabla \delta u \rangle+\lambda \tr(\nabla u)\,\tr( \nabla   \delta u)\notag\\&\qquad +\mu\,L_c^2\,[2\,  \alpha_1\, \langle \dev\sym\Curl (\sym \nabla u),\dev\sym\Curl (\sym \nabla \delta u)\rangle \\&\qquad +
2\, \alpha_2\, \langle \skw\Curl (\sym \nabla u),\skw\Curl (\sym \nabla \delta u)\rangle]+\langle f,\delta u\rangle \bigg]dv =0,\notag
\end{align}
where $f$ is the body force per unit volume.
We have the formulas
\begin{align}\label{formule}
{\rm div} (\varphi_i\, Q_i)&=\langle Q_i, \nabla \, \varphi_i\rangle+\varphi_i\, {\rm div}\,  Q_i\,\qquad \qquad  \text{not summed} ,\\
{\rm div}\,  (R_i\times S_i)&=\langle S_i, \curl \, R_i\rangle-\langle R_i, \curl\,  S_i\rangle\, \quad  \ \ \text{not summed} ,\notag
\end{align}
for all $C^1$-functions $\varphi_i:\Omega\rightarrow \mathbb{R}$ and $Q_i,P_i,S_i:\Omega\rightarrow \mathbb{R}^{3}$, where $\varphi_i$ are the components of the vector $\varphi$ and $Q_i, P_i,S_i$ are the rows of the matrix $Q$, $P$ and $S$, respectively,  where $\times$ denotes the vector product.
If we take in \eqref{formule}
 $
R_i=[\dev\sym \Curl (\sym \nabla u)]_i,  S_i=(\sym \nabla \delta u)_i\, ,
$
we get
\begin{align}
\sum\limits_{i=1}^3{\rm div}\,  ([ \dev\sym\Curl (\sym \nabla u)]_i\times (\sym \nabla \delta u)_i)=&\sum\limits_{i=1}^3\langle (\sym \nabla \delta u)_i, \curl \, [ \dev\sym\Curl (\sym \nabla u)]_i\rangle\\&-\sum\limits_{i=1}^3\langle [ \dev\sym\Curl (\sym \nabla u)]_i, \curl\, (\sym \nabla \delta u)_i\rangle\, .\notag
\end{align}
Hence, we obtain
\begin{align}
\sum\limits_{i=1}^3{\rm div}\,  ([ \dev\sym\Curl (\sym \nabla u)]_i\times (\sym \nabla \delta u)_i)=&\langle (\sym \nabla \delta u), \Curl \, [ \dev\sym\Curl (\sym \nabla u)]\rangle\\&-\langle [ \dev\sym\Curl (\sym \nabla u)], \Curl (\sym \nabla \delta u)\rangle\, .\notag
\end{align}
Doing a similar calculus, but choosing
$
R_i=[\skw \Curl (\sym \nabla u)]_i,  S_i=(\sym \nabla \delta u)_i ,
$
we obtain
\begin{align}
\sum\limits_{i=1}^3{\rm div}\,  ([ \skw\Curl (\sym \nabla u)]_i\times (\sym \nabla \delta u)_i)=&\langle (\sym \nabla \delta u), \Curl \, [ \skw\Curl (\sym \nabla u)]\rangle\\&-\langle [ \skw\Curl (\sym \nabla u)], \Curl (\sym \nabla \delta u)\rangle\, .\notag
\end{align}
The above formulas lead, for all variations $\delta u\in C^\infty(\Omega)$, to
 \begin{align}\label{curvatureg}
\int_\Omega\bigg[ &\alpha_1\, \langle \dev\sym\Curl (\sym \nabla u),\Curl (\sym \nabla \delta u)\rangle+
 \alpha_2\, \langle \skw\Curl (\sym \nabla u),\Curl (\sym \nabla \delta u)\rangle\bigg] dv\notag\\
&=\int_\Omega\bigg[ \alpha_1\, \langle\sym \Curl \, [ \dev\sym\Curl (\sym \nabla u)],  \nabla \delta u, \rangle+\alpha_2\, \langle \sym \Curl \, [ \skw\Curl (\sym \nabla u)], \nabla \delta u, \rangle \\&\quad- \sum\limits_{i=1}^3{\rm div}\, \bigg[\alpha_1\, ([ \dev\sym\Curl (\sym \nabla u)]_i\times (\sym \nabla \delta u)_i)+\alpha_2 ([ \skw\Curl (\sym \nabla u)]_i\times (\sym \nabla \delta u)_i)\bigg] dv\notag\\
&=\int_\Omega\bigg[ \alpha_1\, \langle\sym \Curl \, [ \dev\sym\Curl (\sym \nabla u)],  \nabla \delta u \rangle+\alpha_2\, \langle \sym \Curl \, [ \skw\Curl (\sym \nabla u)], \nabla \delta u \rangle\notag \\&\quad- \sum\limits_{i=1}^3{\rm div}\, \bigg[\alpha_1\, ([ \dev\sym\Curl (\sym \nabla u)]_i\times (\sym \nabla \delta u)_i)+\alpha_2 ([ \skw\Curl (\sym \nabla u)]_i\times (\sym \nabla \delta u)_i)\bigg] dv\,.\notag
\end{align}
Therefore, using the divergence theorem and a special format of the partial integration which is suggested by the matrix $\Curl$-operator, it follows that\footnote{This is an extra constitutive assumption since it is finally the form of the partial integration that determines, on the one hand, which force-stress tensor is generated and, on the other hand, which boundary condition is obtained. It is only the $\Curl$-operator that seems to suggest this choice-but it remains a choice! }
 \begin{align}\label{curvatureg1}
\int_\Omega\bigg[ &\alpha_1\, \langle \dev\sym\Curl (\sym \nabla u),\Curl (\sym \nabla \delta u)\rangle+
 \alpha_2\, \langle \skw\Curl (\sym \nabla u),\Curl (\sym \nabla \delta u)\rangle\bigg] dv\\
&=\int_\Omega\bigg[ \alpha_1\, \langle\sym \Curl \, [ \dev\sym\Curl (\sym \nabla u)],  \nabla \delta u \rangle+\alpha_2\, \langle \sym \Curl \, [ \skw\Curl (\sym \nabla u)], \nabla \delta u \rangle\bigg] dv \notag\\&\quad- \int_{\partial\Omega}\bigg[\sum\limits_{i=1}^3\langle \alpha_1\, ([ \dev\sym\Curl (\sym \nabla u)]_i\times (\sym \nabla \delta u)_i)+\alpha_2  ([ \skw\Curl (\sym \nabla u)]_i\times (\sym \nabla \delta u)_i),n\rangle da\notag
\\
&=-\int_\Omega\bigg[ \alpha_1\, \langle\Div\{\sym \Curl \, [ \dev\sym\Curl (\sym \nabla u)]+\alpha_2\, \langle \sym \Curl \, [ \skw\Curl (\sym \nabla u)]\}, \delta u \rangle\bigg] dv \notag\\&
\qquad+\int_{\partial\Omega}\langle\bigg[ \alpha_1\, \sym \Curl \, [ \dev\sym\Curl (\sym \nabla u)]+\alpha_2\, \sym \Curl \, [ \skw\Curl (\sym \nabla u)\bigg].\, n, \delta u \rangle da\notag\\
&\qquad- \int_{\partial\Omega}\bigg[\sum\limits_{i=1}^3\langle \alpha_1\, [ \dev\sym\Curl (\sym \nabla u)]_i+\alpha_2  [ \skw\Curl (\sym \nabla u)]_i, (\sym \nabla \delta u)_i\times n\rangle da,\notag
\end{align}
where $n$ is the unit outward normal vector at the surface $\partial \Omega$.
Hence, the relation \eqref{gradeq2} leads to
\begin{align}\label{germaneq3}
\int_\Omega\Div \bigg\{2&\, \mu \, \sym \nabla u+\lambda \, \tr(\nabla u)\,\id \\&+ \mu\,L_c^2\,[2\, \alpha_1 \sym \Curl \, [ \dev\sym\Curl (\sym \nabla u)]+2\,\alpha_2\,\sym \Curl \, [ \skw\Curl (\sym \nabla u)]]\bigg\}+f, \delta u \rangle \, dv\notag\\
-\int_{\partial \Omega}&\bigg[  \langle \bigg[2\mu\,\langle\sym \nabla u+\lambda\, \tr(\nabla u)\,\id \bigg].\, n,  \delta u\rangle da\notag
\\
-\int_{\partial\Omega}&\langle2\,\mu\,L_c^2\,\bigg[\alpha_1\, \sym \Curl \, [ \dev\sym\Curl (\sym \nabla u)]+\alpha_2\, \sym \Curl \, [ \skw\Curl (\sym \nabla u)]\bigg].\, n, \delta u \rangle da\notag\\
\qquad+\int_{\partial\Omega}&\bigg[\sum\limits_{i=1}^3 2\, \mu\,L_c^2\,\langle\alpha_1\, [ \dev\sym\Curl (\sym \nabla u)]_i+\alpha_2  [ \skw\Curl (\sym \nabla u)]_i, (\sym \nabla \delta u)_i\times n\rangle da=0\notag,
\end{align}
for all variations $ \delta u\in C^\infty(\Omega)$.

We can write the above variational formulation, for all variations $ \delta u\in C^\infty(\Omega)$, in the following form\footnote{$\langle a\times b,c\rangle=-\langle a, c\times b\rangle$.}
\begin{align}\label{germaneq3n}
\int_\Omega\langle\Div (\sigma+ \widehat{\tau})+f, \delta u \rangle \, dv-\int_{\partial\Omega}  \langle (\sigma+\widehat{\tau}).\, n,  \delta u\rangle da
-\int_{\partial\Omega}\sum\limits_{i=1}^3 \langle \widehat{ {m}}_i\times n, (\sym \nabla \delta u)_i\rangle da=0,
\end{align}
where
\begin{align}\label{nonlocalcurl}
\sigma &=2\, \mu \, \sym \nabla u+\lambda \, \tr(\nabla u)\, \id \in {\rm Sym}(3), \qquad\qquad \qquad \qquad\qquad \qquad\quad\  \text{(local force-stress)}\notag\vspace{1.5mm}\\
\widehat{\tau}&=\mu\,L_c^2\,\sym\{ 2\,\alpha_1 \Curl \, [ \dev\sym\Curl (\sym \nabla u)]+2\,\alpha_2\, \Curl \, [ \skw\Curl (\sym \nabla u)]\}\\
&=\sym\Curl (\widehat{ {m}})\in {\rm Sym}(3), \qquad \qquad \qquad\qquad\qquad\qquad\qquad\qquad\qquad\text{(non-local force stress)}\notag\vspace{1.5mm}\\
\widehat{ {m}}&=\mu\,L_c^2\,\{2\,\alpha_1\dev\sym[\Curl (\sym \nabla u)]+2\,\alpha_2\, \skw[\Curl (\sym \nabla u)]\}\in \gl(3) \qquad (\in {\rm Sym}(3)\ \  \text{for} \ \ \alpha_2=0).\notag
\end{align}
We call $\sigma$ the local force stress tensor, $\widehat{\tau}$  the non-local force stress tensor and  $\widehat{ {m}}=D_{\widehat{k}} W_{\rm curv}(\widehat{k})$ the hyperstress tensor (couple stress tensor).

Thus, if the test function $\delta u\in C^\infty(\Omega)$ also satisfies $(\sym \nabla \delta u)_i\times n=0$ (or equivalently $(\sym \nabla \delta u). \tau =0$ for all tangential vectors $\tau$ at $\Gamma$), then we obtain the equilibrium equation
\begin{empheq}[box=\widefbox]{align}
  \Div \{\!\!\!\!\!\!&\underbrace{2\, \mu \, \sym \nabla u+\lambda \, \tr(\nabla u)\, \id}_{\text{symmetric local force stress}\  \sigma\,\in\, {\rm Sym}(3)} \notag\\&\quad + \underbrace{\mu\,L_c^2\,\sym \Curl \, [ \underbrace{2\, \alpha_1 \dev\sym\Curl (\sym \nabla u)+2\,\alpha_2\,   \skw\Curl (\sym \nabla u)}_{\text{hyperstress} \ \widehat{ {m}}\in \gl(3)}]}_{\text{symmetric nonlocal force stress}\ \widehat{\tau}\,\in\, {\rm Sym}(3)}\}+f=0.
\end{empheq}
The first impulse is to prescribe   on $\Gamma\subseteq\partial \Omega$  the following geometric boundary conditions
 \begin{align}\label{bc1}
 u&=\widehat{u}^0 \qquad \ \qquad \qquad\quad\  \qquad \qquad  \quad \qquad \qquad \qquad\quad  \text{on}\qquad \Gamma,\\
  (\id-n\otimes n)\,(\sym \nabla  u)_i\times n&=(\id-n\otimes n)\,(\sym \nabla  \widehat{u}^0 )_i\times n , \quad i=1,2,3 \qquad \quad\, \text{on} \qquad \Gamma,\notag
 \end{align}
 where $\widehat{u}^0 :\mathbb{R}^3\rightarrow \mathbb{R}^3$ is a prescribed function (i.e. 3+2+2+2=9 boundary conditions), with\footnote{It is always possible to construct a function ${u}_0$ taking on  the desired boundary values and having the needed regularity by solving
 $$
 \left.
 \begin{array}{rcl}
 \int_\Omega \left(\|\sym \nabla u\|^2+\|D^2u\|^2\right) da&\rightarrow&\quad \min.\ \ u,\\
 u\big|_\Gamma&=&u_0,\\
 \curl u\big|_\Gamma&=&\curl u_0\big|_\Gamma
  \end{array}\right\}\quad \Rightarrow\quad u\in H^{2,2}(\Omega).
 $$} $\widehat{u}^0 \in H^{2,2}(\Omega)$,  and the following traction boundary conditions
 on  $\partial \Omega\setminus \overline{\Gamma}$
 \begin{align}\label{bc1}
(\sigma+\widehat{\tau}).n&=\widehat{g}, \   \ \qquad \qquad \qquad \qquad\quad\quad\qquad \quad \qquad\text{on}\qquad \partial \Omega\setminus \overline{\Gamma},\\
 (\id-n\otimes n)\,\widehat{m}_i\times n&=(\id-n\otimes n)\,\widehat{g}_i , \quad i=1,2,3 \  \qquad \quad \quad \ \ \text{on} \qquad \partial \Omega\setminus \overline{\Gamma},\notag
 \end{align}
where $\widehat{g}, \widehat{g}_i:\mathbb{R}^3\rightarrow \mathbb{R}^3$ are prescribed functions (i.e. 3+2+2+2=9 boundary conditions).

However, we need to separate normal and tangential derivatives of the test function  $\delta u$ in \eqref{germaneq3n} which is standard in general strain gradient elasticity, since tangential derivatives of $\delta u$ are not independent of  $\delta u$. Let us define the matrix
\begin{align}
\widehat{M}:= \left(
    \begin{array}{c}
      \widehat{m}_1\times n\_\_\_\_ \\
    \widehat{m}_2\times n\_\_\_\_\\
    \widehat{m}_3\times n\_\_\_\_ \\
    \end{array}
  \right), \qquad \text{where}\qquad \widehat{m}:= \left(
    \begin{array}{c}
      \widehat{m}_1 \_\_\_\_\\
    \widehat{m}_2\_\_\_\_\\
    \widehat{m}_3 \_\_\_\_\\
    \end{array}
  \right).
\end{align}
With the help of this matrix $\widehat{M}$, we may write
\begin{align}
\sum\limits_{i=1}^3 \langle \widehat{ {m}}_i\times n, (\sym \nabla  \delta u)_i\rangle=\langle \widehat{M}, \sym \nabla  \delta u\rangle=
\langle \sym \widehat{M}, \nabla  \delta u\rangle.
\end{align}

At this point, it must be considered that the tangential trace of
the gradient of virtual displacement can be integrated by parts once
again and that the surface divergence theorem can be applied to this
tangential  part of $\nabla\delta{u}$. Before doing so, one needs
to introduce (see also \cite{NthGrad,TheseSeppecher,dell2012beyond} for details)
two second order tensors $T$ and $Q$ which are  the two projectors
on the tangent plane and on the normal to the considered surface, respectively.
As it is well known from differential geometry, such projectors actually
allow to split a given vector or tensor field in one part projected
on the plane tangent to the considered surface and one projected on
the normal to such surface (see e.g. \cite{NthGrad}). Let $\left\{ \tau^{1},\tau^{2}\right\} $
be an orthonormal local basis of the tangent plane to the considered
surface at point $p$ and let ${n}$ be the unit normal vector
at the same point. We can introduce the quoted projectors as
\begin{align}
T&=\tau^{1}\otimes\tau^{1}+\tau^{2}\otimes\tau^{2}=\id -{n}\otimes{n}\quad (T_{ij}=\delta_{ij}-n_in_j),\qquad {Q}={n}\otimes{n}\qquad (Q_{ij}=n_in_j).
\end{align}

 In our abbreviations,  the surface divergence theorem means \cite[p.~58, ex.~7]{gurtin2010mechanics}
\begin{equation}
\int_{\partial S}\langle T.\, v,{\nu}\rangle da=\int_{\partial S}\langle  v,T.\,{\nu}\rangle da=\int_{\partial S}\langle v, {\nu}\rangle ds\,,\label{eq:Surface_Div_Th}
\end{equation}
for any field $v\in \mathbb{R}^3$ and $\nu=\tau\times n$.
Regarding the boundary conditions, similar  as in \cite{Mindlin68}, we obtain
$$\begin{array}{ll}
\dd\sum\limits_{i=1}^3&\langle \widehat{ {m}}_i\times n, (\sym \nabla  \delta u)_i\rangle =\langle \sym \widehat{M}, \nabla  \delta u\, \id\rangle =\langle \sym \widehat{M},\nabla\delta u\,(T+Q)\rangle \notag \vspace*{-3mm}\\
&=\langle (\sym \widehat{M}),(\nabla\delta u)\, T\rangle +\langle (\sym \widehat{M}),(\nabla\delta u)\,Q\rangle =\langle (\sym \widehat{M})\,T,\nabla\delta u\rangle +\langle (\sym \widehat{M}),(\nabla\delta u)\,Q\rangle \\
&=\langle (\sym \widehat{M})\,T\, T,\nabla\delta u\rangle +\langle (\sym \widehat{M}),(\nabla\delta u)\,Q\rangle
=\langle \, T,T(\sym \widehat{M})\nabla\delta u\rangle +\langle (\sym \widehat{M}),(\nabla\delta u)\,Q\rangle
\notag\\
&=\langle \, T,T(\sym \widehat{M})\nabla\delta u\rangle +\langle (\sym \widehat{M}),(\nabla\delta u)\,n\otimes n\rangle.\notag
\end{array}
$$
The last term on the right hand side  may be rewritten in the form
\begin{equation}\begin{array}{ll}\label{gh1}
\langle (\sym \widehat{M}),(\nabla\delta u)\,Q\rangle &=\dd\frac{1}{2}
\langle  \widehat{M},(\nabla\delta u)\,Q+Q(\nabla\delta u)^T\rangle=\dd\frac{1}{2}
\langle  \widehat{M}\,Q,\nabla\delta u\rangle+\frac{1}{2}
\langle  \widehat{M},Q(\nabla\delta u)^T\rangle\vspace{2mm}\\
&=\dd\frac{1}{2}
\langle  \widehat{M}\,n\otimes n,\nabla\delta u\rangle+\frac{1}{2}
\langle  \widehat{M},n\otimes n(\nabla\delta u)^T\rangle.
\end{array}
\end{equation}
Thus, we deduce
\begin{equation}\begin{array}{ll}\label{gh2}
\dd\sum\limits_{i=1}^3&\langle \widehat{ {m}}_i\times n, (\sym \nabla  \delta u)_i\rangle
=\langle \, T,T(\sym \widehat{M})\nabla\delta u\rangle +\dd\langle (\sym \widehat{M}),(\nabla\delta u)\,Q\rangle
\vspace*{-3mm}\\
&=\langle \, T,T(\sym \widehat{M})\nabla\delta u\rangle +\dd\langle (\sym \widehat{M}),(\nabla\delta u)\,n\otimes n\rangle=\langle \, T,T(\sym \widehat{M})\nabla\delta u\rangle +\dd\langle\{ n\otimes [(\nabla\delta u).n]\} (\sym \widehat{M}),\id\rangle\notag
\\
&=\langle \, T,T(\sym \widehat{M})\nabla\delta u\rangle +\dd\langle n\otimes \{(\sym \widehat{M})(\nabla\delta u).n\},\id\rangle
=\langle \, T,T(\sym \widehat{M})\nabla\delta u\rangle +\dd\langle n,(\sym \widehat{M})(\nabla\delta u).n\rangle
\notag\\
&=\langle \, T,T(\sym \widehat{M})\nabla\delta u\rangle +\dd\langle (\sym \widehat{M}).n,(\nabla\delta u).n\rangle
=\langle \, T,T(\sym \widehat{M})\nabla\delta u\rangle +\dd\langle (\sym \widehat{M}).n,\frac{\partial \delta u}{\partial n}\rangle  \,.\notag
\end{array}
\end{equation}
We can therefore recognize in the last term of this formula that the normal
derivative
\begin{align}
\frac{\partial}{\partial n}\,\delta u=(\nabla\delta u).n=(\delta u_{i,h}n_{h})_i
\end{align}
of the test function  field $\delta u$ (the virtual displacement) appears.
As for the other term, it can be manipulated suitably integrating
by parts and then using the surface divergence theorem (\ref{eq:Surface_Div_Th}),
so that we can finally write the last summand  from \eqref{germaneq3n} in the form
\begin{align}\label{gh3}
\int_{\partial \Omega}\sum\limits_{i=1}^3\langle \widehat{ {m}}_i\times n, (\sym \nabla  \delta u)_i\rangle da=&\int_{ \Gamma}\sum\limits_{i=1}^3\langle \widehat{ {m}}_i\times n, (\sym \nabla  \delta u)_i\rangle da+\int_{\partial \Omega\setminus \overline{\Gamma}}\sum\limits_{i=1}^3\langle \widehat{ {m}}_i\times n, (\sym \nabla  \delta u)_i\rangle da.
\end{align}
We deduce by gathering the results in \eqref{gh1}-\eqref{gh3} that the last integral on the right hand side is given by
\begin{align}
&\int_{\partial \Omega\setminus \overline{\Gamma}}\sum\limits_{i=1}^3\langle \widehat{ {m}}_i\times n, (\sym \nabla  \delta u)_i\rangle da=\int_{\partial \Omega\setminus \overline{\Gamma}}\langle \, T,T(\sym \widehat{M})\nabla\delta u\rangle da +\int_{\partial \Omega\setminus \overline{\Gamma}}\dd\langle (\sym \widehat{M}).n,(\nabla\delta u).n\rangle  da.
\end{align}
Since
\begin{align}\{\nabla[\underbrace{T(\sym \widehat{M})\delta u}_{\in \mathbb{R}^3}]\}_{ik}&=
\{T(\sym \widehat{M})).\delta u\}_{i,k}\notag=
\{\big(T(\sym \widehat{M})\big)_{ij}(\delta u)_{j}\}_{,k}\notag\\
&=
\big(T(\sym \widehat{M})\big)_{ij,k}(\delta u)_{j}+
\big(T(\sym \widehat{M})\big)_{ij}(\delta u)_{j,k}\\
&=
\big(T(\sym \widehat{M})\big)_{ij,k}(\delta u)_{j}+
\big(T(\sym \widehat{M})\big)_{ij}(\nabla \delta u)_{jk}\notag\\
&=
\big(T(\sym \widehat{M})\big)_{ij,k}(\delta u)_{j}+\{T(\sym \widehat{M})\,\nabla\delta u\}_{ik},\notag
\end{align}
we obtain
\begin{align}
\int_{\partial \Omega\setminus \overline{\Gamma}}\sum\limits_{i=1}^3\langle \widehat{ {m}}_i\times n, (\sym \nabla  \delta u)_i\rangle da=&\int_{\partial \Omega\setminus \overline{\Gamma}}\langle \, \hspace{-5mm}\underbrace{T,\nabla [T(\sym \widehat{M}).\,\delta u]}_{\text{surface divergence is to be used}}\hspace{-5mm}\rangle  da -\int_{\partial \Omega\setminus \overline{\Gamma}}T_{ik}\big(T(\sym \widehat{M})\big)_{ij,k}(\delta u)_{j}  da\\&+\int_{\partial \Omega\setminus \overline{\Gamma}}\dd\langle (\sym \widehat{M}).n,(\nabla\delta u).n\rangle  da.\notag
\end{align}
In order to write in a compact form the above relation, let us remark that
\begin{align}
\big(T(\sym \widehat{M})\big)_{ij}&=T_{il}(\sym \widehat{M})_{lj}\notag=
T_{il}(\sym \widehat{M})_{jl}=T_{li}(\sym \widehat{M})_{jl}=(\sym \widehat{M})_{jl}T_{li}=\{(\sym \widehat{M})\,T\}_{ji},\notag
\end{align}
and further that
\begin{align}
\big(T(\sym \widehat{M})\big)_{ij,k}\,T_{ik}&=\{(\sym \widehat{M})\,T\}_{ji,k}\,T_{ik}=
\big(\nabla[(\sym \widehat{M})\,T]: T\big)_{j}.
\end{align}
We obtain
\begin{align}
T_{ik}\big(T(\sym \widehat{M})\big)_{ij,k}(\delta u)_{j}=\big(\nabla[(\sym \widehat{M})\,T]: T\big)_{j}(\delta u)_{j}=
\langle\nabla[(\sym \widehat{M})\,T]: T,\delta u\rangle.
\end{align}
We deduce by gathering the results in \eqref{gh1}-\eqref{gh3} that the last integral on the right hand side is given by
\begin{align}
&\int_{\partial \Omega\setminus \overline{\Gamma}}\sum\limits_{i=1}^3\langle \widehat{ {m}}_i\times n, (\sym \nabla  \delta u)_i\rangle da\notag\\&=\int_{\partial \Omega\setminus \overline{\Gamma}}\langle \, \hspace{-5mm}\underbrace{T,\nabla [T(\sym \widehat{M}).\,\delta u]}_{\text{surface divergence is to be used}}\hspace{-5mm}\rangle da-\int_{\partial \Omega\setminus \overline{\Gamma}}\langle\nabla[(\sym \widehat{M})\,T]: T,\delta u\rangle da+\,\int_{\partial \Omega\setminus \overline{\Gamma}}\langle (\sym \widehat{M}).n,(\nabla\delta u).n\rangle  da\notag\\
&= -\int_{\partial \Omega\setminus \overline{\Gamma}}\langle  \nabla[(\sym \widehat{M})\,T]: T,\delta u\rangle da+\,\int_{\partial \Omega\setminus \overline{\Gamma}}\langle (\sym \widehat{M}).n,(\nabla\delta u).n\rangle  da+\int_{\partial(\partial \Omega\setminus \overline{\Gamma})}\langle [(\sym \widehat{M})]^-. \nu^-, \delta u\rangle ds.\notag
\end{align}
In the above computation $ \nabla[(\sym \widehat{M})\,T]$ is not a matrix, rather a third order tensor and  $ \nabla[(\sym \widehat{M})\,T]: T\in \mathbb{R}^3$ is a contraction operation, i.e. $$\{\nabla[(\sym \widehat{M})\,T]: T\}_i=\{(\anti[(\sym \widehat{M})\,T\}_{ij,k}.T_{jk}.$$

Similar, we handle the corresponding integral on $\partial \Gamma$ from \eqref{gh3}
 \begin{align}
\int_{ \Gamma}\sum\limits_{i=1}^3&\langle \widehat{ {m}}_i\times n, (\sym \nabla  \delta u)_i\rangle da\\
&= -\int_{\Gamma}\langle  \nabla[(\sym \widehat{M})\,T]: T,\delta u\rangle da+\,\int_{\Gamma}\langle (\sym \widehat{M}).n,(\nabla\delta u).n\rangle  da+\int_{\partial\Gamma}\langle [(\sym \widehat{M})]^+. \nu^+, \delta u\rangle ds.\notag
\end{align}

Therefore, the  variational formulation \eqref{germaneq3n} can  be rewritten as
\begin{align}\label{germaneq3n0}
\int_\Omega\langle\Div (\sigma+ \widehat{\tau})+f, \delta u \rangle \, dv
&-\int_{\partial \Omega}  \underbrace{\langle\underbrace{ (\sigma+\widehat{\tau}).\, n-\nabla[(\sym \widehat{M})\,T]: T}_{\begin{array}{c}\footnotesize{\delta u}-\text{\footnotesize{surface contact forces depending}}\vspace{-0.5mm}\\
 \text{\footnotesize{on the curvature of the boundary}}\end{array}},  \delta u\rangle}_{\delta u-\text{independent first order variation}} da\\
&-\,\int_{\partial  \Omega}\hspace{-0.2cm}\underbrace{\langle \underbrace{(\sym \widehat{M}).n}_{\text{double force}},(\nabla\delta u).n\rangle}_{\begin{array}{c}\footnotesize{\delta u}-\text{\footnotesize{independent second order}}\vspace{-0.5mm}\\
 \text{\footnotesize{normal variation of the gradient}}\end{array}}\hspace{-0.2cm}  da-
\int_{\partial\Gamma}\underbrace{\langle ([\sym \widehat{M}]^+-[\sym \widehat{M}]^-).\, \nu, \delta u\rangle}_{\text{``edge line force"}} ds=0,\notag
\end{align}
for all variations $ \delta u\in C^\infty(\Omega)$, where we have used that for the regular surface $\partial \Omega$ it holds $\nu^+=-\nu^-=\nu$. Moreover, we also obtain
\begin{align}
\widehat{M}. n=\left[\left(
    \begin{array}{c}
      \widehat{m}_1\times n \\
    \widehat{m}_2\times n \\
    \widehat{m}_3\times n \\
    \end{array}
  \right).n\right]=
  \left[\left(
    \begin{array}{c}
      \langle\widehat{m}_1\times n,n\rangle \\
     \langle\widehat{m}_2\times n,n\rangle \\
     \langle\widehat{m}_3\times n,n\rangle \\
    \end{array}
  \right).n\right]=0.
\end{align}
Hence
\begin{align}\label{simplifM}
(\sym \widehat{M}).n
=\frac{1}{2} \widehat{M}^T.n.
\end{align}
On the other hand, we deduce
\begin{align}
\widehat{M}^T.n&=T\, \widehat{M}^T.n+Q\, \widehat{M}^T.n=(\id-n\otimes n)\, \widehat{M}^T.n+n\otimes n\, \widehat{M}^T.n\notag\\
&=(\id-n\otimes n)\, \widehat{M}^T.n+[n\otimes n\, \widehat{M}^T].n
=(\id-n\otimes n)\, \widehat{M}^T.n+n\otimes [\widehat{M}.\, n].n\notag\\
&=(\id-n\otimes n)\, \widehat{M}^T.n+n\otimes \left[\left(
    \begin{array}{c}
      \widehat{m}_1\times n \\
    \widehat{m}_2\times n \\
    \widehat{m}_3\times n \\
    \end{array}
  \right).n\right].n
  =(\id-n\otimes n)\, \widehat{M}^T.n+n\otimes \left[\left(
    \begin{array}{c}
      \langle\widehat{m}_1\times n,n\rangle \\
     \langle\widehat{m}_2\times n,n\rangle \\
     \langle\widehat{m}_3\times n,n\rangle \\
    \end{array}
  \right)\right].n\notag
  \\
&=(\id-n\otimes n)\, \widehat{M}^T.n+[n\otimes 0].n=(\id-n\otimes n)\, \widehat{M}^T.n\,.\notag
\end{align}
In view of \eqref{simplifM}, we see
\begin{align}
(\sym\widehat{M}).n=(\id-n\otimes n)\, (\sym \widehat{M}).n\,.
\end{align}
Therefore, finally we get from \eqref{germaneq3n}
\begin{align}\label{germaneq3n0}
\int_\Omega\langle\Div (\sigma+ &\widehat{\tau})+f, \delta u \rangle \, dv
-\int_{\partial \Omega}  \underbrace{\langle (\sigma+\widehat{\tau}).\, n-\nabla[(\sym \widehat{M})\,T]: T,  \delta u\rangle}_{\delta u-\text{independent first order variation}} da\\
&-\,\int_{\partial  \Omega}\underbrace{\langle (\id-n\otimes n)(\sym\widehat{M}).n,(\id-n\otimes n)(\nabla\delta u).n\rangle}_{\begin{array}{c}\delta u-\text{\footnotesize{independent second order}}\vspace{-0.5mm}\\
 \text{\footnotesize{normal variation of gradient}}\end{array}}  da-\int_{\partial\Gamma}\underbrace{\langle ([\sym \widehat{M}]^+-[\sym \widehat{M}]^-).\, \nu, \delta u\rangle}_{ \text{\footnotesize{``edge line forces"}}} ds=0,\notag
\end{align}
for all variations $ \delta u\in C^\infty(\Omega)$. An equivalent form, replacing simply $\widehat{\tau}=\sym\Curl (\widehat{ {m}})$, is
\begin{align}\label{germaneq3n01}
\int_\Omega\langle\Div (\sigma+& \sym\Curl (\widehat{ {m}}))+f, \delta u \rangle \, dv
-\int_{\partial \Omega}  \underbrace{\langle (\sigma+\sym\Curl (\widehat{ {m}})).\, n-\nabla[(\sym \widehat{M})\,T]: T,  \delta u\rangle}_{\delta u-\text{independent first order variation}} da\\
&-\,\int_{\partial  \Omega}\underbrace{\langle (\id-n\otimes n)(\sym \widehat{M}).n,(\id-n\otimes n)(\nabla\delta u).n\rangle}_{\begin{array}{c}\delta u-\text{\footnotesize{independent second order}}\vspace{-0.5mm}\\
 \text{\footnotesize{normal variation of the gradient}}\end{array}} da-
\int_{\partial\Gamma}\underbrace{\langle ([\sym \widehat{M}]^+-[\sym \widehat{M}]^-).\, \nu, \delta u\rangle}_{ \text{\footnotesize{``edge line forces"}}} ds=0.\notag
\end{align}
\subsection{Formulation of the complete boundary value problem}
\subsubsection{Equilibrium equation}
In terms of the  symmetric  force-stress tensor $\sigma$
 and of the nonlocal force-stress tensor $\widehat{\tau}$ which is also here symmetric, while  the hyperstress $\widehat{ {m}}\in \gl(3)$ is symmetric only for $\alpha_2=0$,
 the equilibrium equations may now be written in the format\footnote{Here, infinitesimal frame-indifference amounts to $W(\nabla u)=W(\nabla u+\overline{W}), \forall\, \overline{W}\in\so(3)$, which is obviously satisfied.}
 \begin{align}\label{ec1}
 \Div \widehat{\sigma} _{\rm total}+f=0,
 \end{align}
 where the symmetric total force stress\footnote{Vidoli et al. call this tensor the ``effective stress tensor'' \cite{IsolaSciarraVidoliPRSA}.} is given by
$
 \widehat{\sigma}_{\rm total}=\sigma +\widehat{\tau} \in {\rm Sym}(3).
 $

 \subsubsection{Geometric (essential) boundary conditions}
 To the above equilibrium equation, we adjoin on $\Gamma\subseteq\partial \Omega$  the following boundary conditions
 \begin{align}\label{bc111}
 u(x)&=\widehat{u}^0 (x) \ \  \qquad\qquad \qquad \qquad\quad\quad  \text{on}\qquad\, \Gamma,\qquad\qquad\qquad\qquad\  (3\  \text{bc})\\
  [(\id-n\otimes n)(\nabla u).n]\, (x)&= [(\id-n\otimes n)(\nabla \widehat{u}^0 ).n] (x)  \qquad \ \ \text{on} \qquad \Gamma, \ \qquad\qquad\qquad\qquad (2\  \text{bc})\notag
 \end{align}
 where $\widehat{u}^0 :\mathbb{R}^{3}\rightarrow\mathbb{R}^{3}$ is a prescribed function (i.e. 3+2=5 boundary conditions). We assume that $\widehat{u}^0 \in H^{2,2}(\Omega)$  for simplicity and transparency. If $\nabla \widehat{u}^0 \in H^1(\Omega)$, then all tangential and normal traces of $\nabla \widehat{u}^0 $ at $\Gamma$ exist. Therefore, we may evaluate $\nabla \widehat{u}^0 $ at $\Gamma$.

\subsubsection{Traction  boundary conditions}

Corresponding to the geometric  boundary conditions, we have to prescribe the following traction boundary conditions
  \begin{align}\label{bc100}
  \hspace{-0.5cm}
  \left.
        \begin{array}{rcl}
 \{(\sigma+\widehat{\tau}).n-\nabla[(\sym \widehat{M})\,(\id-n\otimes n)]: (\id-n\otimes n)\}\,(x)\!\!&=& \!\!\widehat{t}(x),\vspace{1.2mm}\\
  \dd[(\id-n\otimes n)(\sym \widehat{M}).n]\,(x)\!\!&=&\!\![(\id-n\otimes n)\,\widehat{g}]\,(x),
 \end{array}
 \right\}
 \quad
  &x\in \partial \Omega\setminus {\overline{\Gamma}}
  \quad
 \begin{array}{r}
(3\  \text{bc})\vspace{1.2mm}\\
(2\  \text{bc})
 \end{array}\\
\hspace{-2cm} \begin{array}{rcl}
 \{([\sym \widehat{M}]^+-[\sym \widehat{M}]^-).\,\nu\}\,(x)\!\!&=&\!\!\widehat{\pi}(x),\hspace{2.35cm}
 \end{array}
 \quad
  &x\in \partial {\Gamma}
  \quad \quad\ \
 \begin{array}{r}
(3\  \text{bc})
 \end{array}\notag
 \end{align}
 where  $t,  g:\mathbb{R}^{3}\rightarrow\mathbb{R}^{3}$ are prescribed functions  on $\partial \Omega\setminus \overline{\Gamma}$ (i.e. 3+2=5 boundary conditions), while $\pi:\mathbb{R}^3\rightarrow\mathbb{R}^3$ is prescribed on $\partial \Gamma=\partial (\partial \Omega\setminus \overline{\Gamma})$ and leads to 3 boundary conditions on $\partial \Gamma$.

\begin{remark}
If $\Gamma=\partial \Omega$ then the solution in the $\Curl(\sym \nabla u)$-formulation and in the $\nabla[ {\rm axl}(\skw \nabla u)]$-formulation are the same, since the Euler-Lagrange equations are the same and the geometric boundary conditions are the same. Differences appear only if $\Gamma\neq \partial \Omega$ due to different specifications of traction boundary conditions.
\end{remark}
\subsection{Existence and uniqueness of the solution in the $\Curl(\sym \nabla u)$-formulation}\label{existences}

In the linear couple stress theory with constrained rotations, Hlav{\'a}{\v c}ek and Hlav{\'a}{\v c}ek \cite[Remark 2, p. 426]{Hlavacek69} recognized the couple stress model already in the form  \eqref{energiepolar2} but did not give an existence result. There are many existence and uniqueness results  for the indeterminate couple stress model in its classical anti-symmetric formulation. Recently, optimal results have been obtained in \cite{Neff_JeongMMS08,Jeong_Neff_ZAMM08}.
In this section we  establish an existence
theorem for the solution of the boundary value problem $(\mathcal{P})$ defined by \eqref{ec1}, \eqref{bc111} and \eqref{bc100}, where $\widehat{t}=0$, $\widehat{g}=0$, $\widehat{h}=0$, $\widehat{u}^0 =0$ and $ (\id-n\otimes n)(\nabla \widehat{u}^0 ).n=0$ for simplicity only.

\begin{lemma}\label{lemma30}
Let  $u\in {\rm H}_0^1(\Omega;\Gamma)$ be such that $\sym \Curl (\sym \nabla  u)\in{\rm L}^2(\Omega)$. Then, $\Curl (\sym \nabla  u)\in{\rm L}^2(\Omega)$ and there is a positive constant $c^+$ such that
\begin{align}\label{lema30exp}
\int_\Omega \Big[\|\sym \nabla u\|^2+\|\sym \Curl (\sym \nabla  u)\|^2\Big] dv&\geq c^+\int_\Omega \Big[\|\sym \nabla u\|^2+\|\Curl (\sym \nabla  u)\|^2\Big] dv.
\end{align}
\end{lemma}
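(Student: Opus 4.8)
The plan is to transfer the whole estimate onto the single vector field $\curl u$ and then to invoke two Korn-type inequalities. By Corollary~\ref{corollaryaxlcurl}, which extends from $C^2$ to ${\rm H}^1$ by density, one has $[\Curl(\sym\nabla u)]^T=\nabla[\axl(\skw\nabla u)]=\tfrac12\,\nabla(\curl u)$, hence $\Curl(\sym\nabla u)=\tfrac12\,(\nabla\curl u)^T$, $\|\Curl(\sym\nabla u)\|^2=\tfrac14\|\nabla(\curl u)\|^2$, and $\sym\Curl(\sym\nabla u)=\tfrac12\,\sym\nabla(\curl u)$, $\|\sym\Curl(\sym\nabla u)\|^2=\tfrac14\|\sym\nabla(\curl u)\|^2$. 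Since $u\in{\rm H}_0^1(\Omega;\Gamma)$ we have $\nabla u\in{\rm L}^2(\Omega)$, so $\curl u=2\,\axl(\skw\nabla u)\in{\rm L}^2(\Omega)$, and the hypothesis $\sym\Curl(\sym\nabla u)\in{\rm L}^2(\Omega)$ is exactly the statement $\sym\nabla(\curl u)\in{\rm L}^2(\Omega)$.

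The first step will be to apply Korn's second inequality in its \emph{strong} form: a field $w\in{\rm L}^2(\Omega)$ with $\sym\nabla w\in{\rm L}^2(\Omega)$ already lies in ${\rm H}^1(\Omega)$, with $\|\nabla w\|_{{\rm L}^2}\le c_1\,(\|w\|_{{\rm L}^2}+\|\sym\nabla w\|_{{\rm L}^2})$. On the bounded smooth domain $\Omega$ this comes from the identity $w_{i,kl}=\varepsilon_{il,k}+\varepsilon_{ik,l}-\varepsilon_{kl,i}$ with $\varepsilon=\sym\nabla w$ (the same identity recalled in Section~1.2), which puts $\nabla\nabla w$ into ${\rm H}^{-1}$, combined with Ne\v{c}as' estimate $\|\nabla f\|_{{\rm H}^{-1}}\simeq\|f\|_{{\rm L}^2}$. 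Taking $w=\curl u$ then gives $\curl u\in{\rm H}^1(\Omega)$, hence $\nabla(\curl u)\in{\rm L}^2(\Omega)$, i.e. $\Curl(\sym\nabla u)\in{\rm L}^2(\Omega)$ — the first assertion — together with $\|\nabla(\curl u)\|_{{\rm L}^2}\le c_1\,(\|\curl u\|_{{\rm L}^2}+\|\sym\nabla(\curl u)\|_{{\rm L}^2})$.

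The second step will be to absorb $\|\curl u\|_{{\rm L}^2}$ into $\|\sym\nabla u\|_{{\rm L}^2}$ using Korn's first inequality with partial Dirichlet data: since $u=0$ on the set $\Gamma\subseteq\partial\Omega$ of positive surface measure, $\|\nabla u\|_{{\rm L}^2}\le c_2\,\|\sym\nabla u\|_{{\rm L}^2}$, whence $\|\curl u\|_{{\rm L}^2}=\sqrt2\,\|\skw\nabla u\|_{{\rm L}^2}\le\sqrt2\,c_2\,\|\sym\nabla u\|_{{\rm L}^2}$. Combining this with the inequality from the first step and translating back through the identities of the first paragraph yields $\|\Curl(\sym\nabla u)\|_{{\rm L}^2}\le C\,(\|\sym\nabla u\|_{{\rm L}^2}+\|\sym\Curl(\sym\nabla u)\|_{{\rm L}^2})$ with $C=C(c_1,c_2)$; squaring, using $(a+b)^2\le 2a^2+2b^2$, and adding $\|\sym\nabla u\|^2_{{\rm L}^2}$ to both sides gives \eqref{lema30exp} with $c^+=(1+2C^2)^{-1}$. (Alternatively, the two steps can be combined into a single compactness–contradiction argument, but this quantitative route is shorter.)

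\textbf{Expected main obstacle.} The only non-routine point is the \emph{strong} form of Korn's second inequality in the first step: one has to upgrade $\curl u\in{\rm L}^2$ and $\sym\nabla(\curl u)\in{\rm L}^2$ to $\curl u\in{\rm H}^1$, equivalently control the a priori merely distributional quantity $\skw\Curl(\sym\nabla u)=-\tfrac12\,\skw\nabla(\curl u)$ in ${\rm L}^2$. This is where the boundedness and regularity of $\Omega$ (via Ne\v{c}as' lemma) genuinely enter, whereas the standing assumption that $\Gamma$ has positive surface measure is what makes the second step (Korn's first inequality) available and is responsible for the clean constant $c^+$.
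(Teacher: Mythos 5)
Your proof is correct and follows essentially the same route as the paper: both arguments combine the identity $\nabla[\axl(\skw \nabla u)]=[\Curl(\sym\nabla u)]^T$ (so everything is phrased in terms of $\curl u$, respectively $\axl(\skw\nabla u)$), Korn's first inequality on ${\rm H}_0^1(\Omega;\Gamma)$ to control $\|\curl u\|_{L^2}$ by $\|\sym\nabla u\|_{L^2}$, and the second (boundary-condition-free, strong) Korn inequality applied to $\curl u$ to recover the full gradient $\nabla(\curl u)$, i.e.\ $\Curl(\sym\nabla u)$, in ${\rm L}^2$. The only difference is presentational — you chain norm estimates and spell out the Ne\v{c}as-type justification of the strong Korn inequality, while the paper chains the corresponding integral inequalities and simply invokes that inequality.
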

\begin{proof}
For $u\in {\rm H}_0^1(\Omega;\Gamma)$, the first Korn's inequality implies that there is a positive constant $c^+$ such that
\begin{align}
\int_\Omega \|\sym \nabla u\|^2 dv=\int_\Omega \left(\frac{1}{2}\|\sym \nabla u\|^2+\frac{1}{2}\|\sym \nabla u\|^2\right) dv\geq\frac{1}{2}\int_\Omega \|\sym \nabla u\|^2 dv+ \frac{c^+}{2}  \int_\Omega \|\nabla u\|^2 dv.
\end{align}
On the other hand the orthogonality of $\sym$ and $\skew$ implies
\begin{align}
\int_\Omega \|\nabla u\|^2 dv\geq \int_\Omega \|\skew \nabla u\|^2 dv.
\end{align}
Therefore, there is another positive constant $c^+$ such that
\begin{align}\label{ine1}
\int_\Omega \Big[\|\sym \nabla u\|^2&+\|\sym \Curl (\sym \nabla  u)\|^2\Big] dv=\int_\Omega \Big[\frac{1}{2}\|\sym \nabla u\|^2+\frac{1}{2}\|\sym \nabla u\|^2+\|\sym \nabla [\axl (\skw \nabla u)]\|^2\Big] dv\notag\\
&\geq c^+\int_\Omega \Big[\|\sym \nabla u\|^2+\|\skew \nabla u\|^2+\|\sym \nabla [\axl (\skw \nabla u)]\|^2\Big] dv\\
&\geq c^+\int_\Omega \Big[\|\sym \nabla u\|^2+\|\axl \skew \nabla u\|^2+\|\sym \nabla [\axl (\skw \nabla u)]\|^2 \Big]dv.\notag
\end{align}
Moreover, since $\axl \skew \nabla u\in {\rm L}^2(\Omega)$, the second  Korn's inequality\footnote{Since $\curl u$ is divergence free we also have the following Maxwell type inequality \cite{BNPS3,BNPS2}:
\begin{align*}
\|\nabla [\curl u]\|_{L^2(\Omega)}^2\leq c_M( \|{\rm curl}\,(\curl  u)\|_{L^2(\Omega)}^2\, ,\ \  \text{for}\ \  u\in \{u\in H_0^1(\Omega)\, |\, \curl u\in H(\curl; \Omega)\}.
\end{align*}
} (without boundary conditions and applied to $\axl (\skw \nabla u)$) implies the existence of a positive constant $c^+$ such that
\begin{align}\label{korn2}
\int_\Omega \Big[\|\axl \skew \nabla u\|^2+\|\sym \nabla [\axl (\skw \nabla u)]\|^2\Big] dv
\geq c^+\int_\Omega \Big[\|\axl \skew \nabla u\|^2+\|\nabla [\axl (\skw \nabla u)]\|^2\Big] dv.
\end{align}
Thus, there are  positive constants $c^+,c_1^+$ such that
\begin{align}\label{korn2}
\int_\Omega \Big[\|\sym \nabla u\|^2+\|\sym \Curl (\sym \nabla  u)\|^2\Big] dv&
\geq c_1^+\int_\Omega \Big[\|\sym \nabla u\|^2+\|\axl \skew \nabla u\|^2+\|\nabla [\axl (\skw \nabla u)]\|^2\Big] dv\notag\\
&= c_1^+\int_\Omega \Big[\|\sym \nabla u\|^2+\|\axl \skew \nabla u\|^2+\|\Curl (\sym \nabla  u)\|^2\Big] dv\notag\\
&\geq c^+\int_\Omega \Big[\|\sym \nabla u\|^2+\|\Curl (\sym \nabla  u)\|^2\Big] dv.
\end{align}
The proof is complete.
\end{proof}

Let us consider that we have considered null boundary conditions for simplicity. Hence, in the following we study the existence of the solution in the space
\begin{equation}
{\mathcal{X}_0}\,{=}\,\big\{ u\,{\in}\,{H}^1_0(\Omega; \Gamma)\,|\,\sym \nabla u\in \, H(\Curl; \Omega)\big\}.
\end{equation}
On ${\mathcal{X}_0}$ we define the norm
\begin{equation}
\| u \|_{\mathcal{X}_0}=\left(\|\nabla u\|^2_{L^2(\Omega)}+\|\Curl(\sym \nabla u)\|^2_{L^2(\Omega)} \right)^{\frac{1}{2}},
\end{equation}
and the bilinear form
\begin{align}\label{proscalar}
 (u,v)=\int_\Omega\bigg[& 2\,\mu\,\langle\sym \nabla u, \sym \nabla v \rangle+\lambda \tr(\nabla u)\,\tr( \nabla   v)\notag\\&+\mu\,L_c^2\,[2\,  \alpha_1\, \langle \dev\sym\Curl (\sym \nabla u),\dev\sym\Curl (\sym \nabla v)\rangle \\&+
2\, \alpha_2\, \langle \skw\Curl (\sym \nabla u),\skw\Curl (\sym \nabla v)\rangle]\bigg]dv ,\notag
\end{align}
where
$u,v\in{\mathcal{X}_0}$. Let us define the linear operator  $l:{\mathcal{X}_0}\rightarrow\mathbb{R}$, describing the influence of external loads,
$
l(v)=\int_\Omega \langle f, v \rangle dv$  {for all} $\widetilde{w}\in{\mathcal{X}_0}.
$ We say that $w$ is a weak solution of the problem $(\mathcal{P})$ if and only if
\begin{equation}\label{wf}
(u,v)=l(v)  \ \  \text{ for all } \ \   v\in {\mathcal{X}_0}.
\end{equation}
A classical solution $u\in{\mathcal{X}_0}$
 of the problem $(\mathcal{P})$ is also a weak solution.

\begin{theorem}\label{thex}
Assume that
\begin{itemize}
\item[i)] the constitutive coefficients satisfy $ \ \ \mu>0, \quad 3\, \lambda+2\mu>0, \quad \alpha_1>0,\quad \alpha_2\geq 0$;
\item[ii)] the loads satisfy the regularity condition $f\in L^2(\Omega)$.
\end{itemize}
Then there exists one and only one solution of the problem {\rm (\ref{wf})}.
\end{theorem}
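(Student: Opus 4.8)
The plan is to obtain the result as a direct application of the Lax--Milgram theorem to the bilinear form $(\cdot,\cdot)$ and the bounded linear functional $l$ on the Hilbert space $\mathcal{X}_0$. First I would check that $\mathcal{X}_0$ equipped with $\|\cdot\|_{\mathcal{X}_0}$ is indeed a Hilbert space. By Corollary~\ref{corollaryaxlcurl} one has $\Curl(\sym\nabla u)=\big(\nabla[\axl(\skw\nabla u)]\big)^T$, so (Corollary~\ref{corollaryaxlcurl2}) the condition $\sym\nabla u\in H(\Curl;\Omega)$ is equivalent to $\curl u\in H^1(\Omega)$. Completeness then follows by a routine argument: given a Cauchy sequence $(u^k)$ in $\mathcal{X}_0$, it converges in $H^1_0(\Omega;\Gamma)$ to some $u$ and $\Curl(\sym\nabla u^k)$ converges in $L^2(\Omega)$; since $u\mapsto\Curl(\sym\nabla u)$ is a closed operator on $H^1_0(\Omega;\Gamma)$, the $L^2$-limit is $\Curl(\sym\nabla u)$, hence $u\in\mathcal{X}_0$ and $u^k\to u$ in $\mathcal{X}_0$.

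Next I would dispose of the two easy hypotheses of Lax--Milgram. Continuity of $(\cdot,\cdot)$ follows from Cauchy--Schwarz together with the pointwise estimates $\|\sym\nabla u\|\le\|\nabla u\|$, $|\tr\nabla u|\le\sqrt3\,\|\nabla u\|$, $\|\dev\sym\Curl(\sym\nabla u)\|\le\|\Curl(\sym\nabla u)\|$ and $\|\skw\Curl(\sym\nabla u)\|\le\|\Curl(\sym\nabla u)\|$, which give $|(u,v)|\le C\,\|u\|_{\mathcal{X}_0}\|v\|_{\mathcal{X}_0}$. Since $v\in H^1_0(\Omega;\Gamma)$, the Poincaré inequality yields $\|v\|_{L^2}\le C\|\nabla v\|_{L^2}\le C\|v\|_{\mathcal{X}_0}$, so that $|l(v)|\le\|f\|_{L^2}\|v\|_{L^2}\le C\|f\|_{L^2}\|v\|_{\mathcal{X}_0}$ by assumption ii).

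The essential step is coercivity. From $\mu>0$ and $3\lambda+2\mu>0$ the classical elastic density is pointwise positive definite in $\sym\nabla u$, i.e. $2\mu\|\sym\nabla u\|^2+\lambda(\tr\nabla u)^2\ge c_0\|\sym\nabla u\|^2$ for some $c_0>0$ (split $\sym\nabla u$ into deviatoric and spherical parts). Because $\tr\Curl(\sym\nabla u)=0$ one has $\dev\sym\Curl(\sym\nabla u)=\sym\Curl(\sym\nabla u)$, so with $\alpha_1>0$, $\alpha_2\ge0$ the curvature density is bounded below by $2\mu L_c^2\alpha_1\|\sym\Curl(\sym\nabla u)\|^2$. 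Therefore $(u,u)\ge c_1\int_\Omega\big[\|\sym\nabla u\|^2+\|\sym\Curl(\sym\nabla u)\|^2\big]\,dv$, and Lemma~\ref{lemma30} upgrades this to $(u,u)\ge c_2\int_\Omega\big[\|\sym\nabla u\|^2+\|\Curl(\sym\nabla u)\|^2\big]\,dv$. Finally, since $u\in H^1_0(\Omega;\Gamma)$ with $\Gamma$ of positive surface measure, the first Korn inequality gives $\|\nabla u\|_{L^2}\le C\|\sym\nabla u\|_{L^2}$, and Poincaré controls $\|u\|_{L^2}$ as well; combining, $(u,u)\ge c\,\|u\|_{\mathcal{X}_0}^2$. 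Lax--Milgram now produces the unique $u\in\mathcal{X}_0$ with $(u,v)=l(v)$ for all $v\in\mathcal{X}_0$.

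The hard part is really concentrated in coercivity, and there in Lemma~\ref{lemma30}: bounding the full $\|\Curl(\sym\nabla u)\|_{L^2}$ by its symmetric part plus $\|\sym\nabla u\|_{L^2}$, which under the hood already uses a (second) Korn inequality applied to $\axl(\skw\nabla u)$. Since that lemma is at our disposal, the rest is bookkeeping. I would, however, make explicit the standing hypothesis that $\Gamma\subseteq\partial\Omega$ has positive two-dimensional measure, since otherwise the first Korn inequality, and with it coercivity, breaks down.
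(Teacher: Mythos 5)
Your proposal is correct and follows essentially the same route as the paper: boundedness and coercivity of the bilinear form, the first Korn inequality on $H^1_0(\Omega;\Gamma)$, Lemma~\ref{lemma30} to recover the full $\Curl(\sym\nabla u)$ from its symmetric part, and then Lax--Milgram. The only difference is cosmetic: the paper treats $\alpha_2>0$ directly and invokes Lemma~\ref{lemma30} only for $\alpha_2=0$, whereas you handle both cases uniformly by discarding the $\alpha_2$-term and using the lemma throughout, which is a harmless streamlining.
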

\begin{proof}
Let us first consider the case $\boldsymbol{\alpha_2> 0}$.
The Cauchy-Schwarz inequality, the inequalities $(a\pm b)^2\leq 2(a^2+b^2)$ and the assumption upon the constitutive coefficients lead to
\begin{align}
(u,v)&\leq \dd \,C\, \Bigg[\int _\Omega\biggl(\|\sym \nabla u\|^2+\|\Curl(\sym \nabla u)\|^2\biggl)dv\Bigg]^{\frac{1}{2}}
\Bigg[\int _\Omega\biggl(\|\sym \nabla{v}\|^2+\|\Curl(\sym \nabla v)\|^2\biggl)dv\Bigg]^{\frac{1}{2}}\\
&\leq \dd \,C\, \|w\|_{{\mathcal{X}_0}}\,\,\|\widetilde{w}\|_{{\mathcal{X}_0}}\, ,\notag
\end{align}
which means that $(\cdot,\cdot)$ is bounded. On the other hand, we have
\begin{align}
({u},{u})
=\int_\Omega\bigg[& 2\mu\,\|\sym \nabla u\|^2+\lambda\,[ \tr(\nabla u)]^2\notag\\
&+\mu\,L_c^2\,[2\,  \alpha_1\, \|\dev\sym\Curl (\sym \nabla u)\|^2 +
2\, \alpha_2\, \|\skw\Curl (\sym \nabla u)\|^2]\bigg]dv ,\notag
\end{align}
for all
$u\in {\mathcal{X}_0}$. Moreover, as a consequence of the properties  i) of the constitutive coefficients
 we have that there exists the positive constant $c$
\begin{align}
({u},{u})
&\geq \dd c\,\int _\Omega\biggl(\| \sym\nabla u\|^2+ \|\Curl(\sym \nabla u) \|^2\biggl)\, dv.
\end{align}
From  linearized elasticity we have the first   Korn's inequality \cite{Neff00b}, that is
\begin{align}\label{Korn0}
\| \nabla  u\|_{L^2(\Omega)}\leq C \|\sym {\nabla}\, u\|_{L^2(\Omega)}\, ,
\end{align}
for all functions $u\in H_0^1(\Omega;\Gamma)$ with some constant $C>0$, for bounding the deformation of an elastic medium in terms of the symmetric strains.
Hence, using the Korn's inequality \eqref{Korn0}, it results that there is a positive constant $C$ such that
\begin{align}
({u},{u})
&\geq \dd c\,\int _\Omega\biggl(\|\nabla u\|^2+ \|\Curl(\sym \nabla u) \|^2\biggl)\, dv=c\,\|u\|^2_{{\mathcal{X}_0}}.
\end{align}
Therefore  our bilinear form $(\cdot,\cdot)$ is coercive.  The Cauchy-Schwarz inequality and the Poincar\'{e}-inequality  imply that the linear operator $l(\cdot)$ is bounded. By the Lax-Milgram theorem it follows that (\ref{wf}) has one and only one solution. The proof is complete in  the case ${\alpha_2> 0}$.

Now, we consider the case   $\boldsymbol{\alpha_2=0}$.  Using Lemma \ref{lemma30} it follows that the bilinear form $(\cdot,\cdot)$ is also coercive  for $\alpha_2=0$. Using similar estimates as above the existence follows also in this case and the proof is complete.
\end{proof}

\begin{remark} The Lax-Milgram theorem used in the proof of the previous theorem also offers a continuous dependence result on the load $f$.   Moreover, the weak solution $u$  minimizes  on ${\mathcal{X}_0}$ the  energy functional
\begin{align}
I(u)=\int_\Omega\bigg[& 2\,\mu\,\|\sym \nabla u\|^2+\lambda[ \tr(\nabla u)]^2\notag\\
&+\mu\,L_c^2\,[2\,  \alpha_1\, \|\dev\sym\Curl (\sym \nabla u)\|^2 +
2\, \alpha_2\, \|\skw\Curl (\sym \nabla u)\|^2]-\langle f,u\rangle \bigg]dv.\notag
\end{align}

\end{remark}
Let us consider $v\in C_0^\infty(\Omega;\Gamma)$ and $u$ a solution of  problem \eqref{ec1}--\eqref{bc100}. Then we obtain
\begin{align}
(u,v)=&\int_{ \Omega}  \langle f,v\rangle dv+\int_{\partial \Omega} \langle (\sigma+\sym\Curl (\widehat{ {m}})).\, n-\nabla[(\sym \widehat{M})\,T]: T,   v\rangle da\\
&+\,\int_{\partial  \Omega}\langle (\id-n\otimes n)(\sym \widehat{M}).n,(\id-n\otimes n)(\nabla v).n\rangle da+
\int_{\partial\Gamma}\langle ([\sym \widehat{M}]^+-[\sym \widehat{M}]^-).\, \nu,  v\rangle ds\notag\\
=&\int_{ \Omega}  \langle f,v\rangle dv+
\int_{\partial\Omega}  \langle (\sigma+\sym\Curl (\widehat{ {m}})).\, n,  v\rangle da
-\int_{\partial\Omega}\sum\limits_{i=1}^3 \langle \widehat{ {m}}_i\times n, (\sym \nabla v)_i\rangle da.\notag
\end{align}

Therefore, the corresponding existence results assures  that there exists the weak solution $u$  minimizing   on $C_0^\infty(\Omega;\Gamma)$ the  energy functional
\begin{align}
I(u)=\int_\Omega\bigg[& 2\,\mu\,\|\sym \nabla u\|^2+\lambda[ \tr(\nabla u)]^2\notag\\
&+\mu\,L_c^2\,[2\,  \alpha_1\, \|\dev\sym\Curl (\sym \nabla u)\|^2 +
2\, \alpha_2\, \|\skw\Curl (\sym \nabla u)\|^2]\bigg]dv-\int_{ \Omega}  \langle f,u\rangle dv\notag\\&-\int_{\partial \Omega} \langle (\sigma+\sym\Curl (\widehat{ {m}})).\, n-\nabla[(\sym \widehat{M})\,T]: T,  u \rangle da\\
&-\,\int_{\partial  \Omega}\langle (\id-n\otimes n)(\sym \widehat{M}).n,(\id-n\otimes n)(\nabla u).n\rangle da-
\int_{\partial\Gamma}\langle ([\sym \widehat{M}]^+-[\sym \widehat{M}]^-).\, \nu,u\rangle ds\notag\\
=\int_\Omega\bigg[& 2\,\mu\,\|\sym \nabla u\|^2+\lambda[ \tr(\nabla u)]^2\notag\\
&+\mu\,L_c^2\,[2\,  \alpha_1\, \|\dev\sym\Curl (\sym \nabla u)\|^2 +
2\, \alpha_2\, \|\skw\Curl (\sym \nabla u)\|^2]\bigg]dv-\int_{ \Omega}  \langle f,u\rangle dv\notag\\
&-\int_{\partial\Omega}  \langle (\sigma+\sym\Curl (\widehat{ {m}})).\, n,  u\rangle da
+\int_{\partial\Omega}\sum\limits_{i=1}^3 \langle \widehat{ {m}}_i\times n, (\sym \nabla v)_i\rangle da.\notag
\end{align}

\subsection{Traction boundary condition in the $\Curl(\sym \nabla u)$-formulation \\versus the $\nabla [\axl(\skew\nabla u)]$-formulation}\label{curlaxltractform}

In this section we compare the possible traction boundary conditions in  the $\nabla [\axl(\skew\nabla u)]$-formulation and the $\Curl (\sym \nabla  u)$-formulation. The conclusion is summarized in Figure \ref{limitmodel00} and Figure \ref{posbc}.
 Prescribing $\delta u$ and $(\id-n\otimes n).\curl u$ on the boundary means that we have prescribed independent geometrical boundary conditions, this is also the argumentation of Mindlin and Tiersten \cite{Mindlin62}, Koiter \cite{Koiter64}, Sokolowski \cite{Sokolowski72},  etc. However, the prescribed traction conditions remain not independent, in the sense that $\widetilde{g}$ leads to a further energetic conjugate, besides $\widetilde{t}$, of $u$. From this reason we claim that, in order to prescribe independent geometric boundary conditions and their corresponding completely independent energetic conjugate (traction boundary conditions), we have to prescribe $u$ and $(\id -n\otimes n)\nabla u.n$. In other words, we prescribe
\begin{align}
\int_{ \partial \Omega}\langle \widetilde{t}, u\rangle\, da+\int_{ \partial \Omega}\langle\widehat{g},   (\id -n\otimes n)\,\nabla  u.n \rangle da,
\end{align}
in which now $ u$ and $(\id -n\otimes n)\nabla  u.n$ are independent and $\widehat{g}$ does not produce work against $ u$, see \cite{MadeoGhibaNeffMunchKM} for further detailed explanations. This  type of independent boundary conditions are also correctly considered already by Bleustein \cite{bleustein1967note}, but for the full strain gradient elasticity case only.
In order to have a complete overview on the subject, in Table 1 we also summarize the equivalent form of the equilibrium equations.

\newpage

\begin{center}
{\small Table 1. Euler-Lagrange equations in various formulations \medskip}
\begin{tabular}{|l|l|}
\hline&\\
{\bf Euler-Lagrange equations } &{\bf Euler-Lagrange equations } \\
{\bf in direct tensor format} &{\bf in indices} \\ &\\
\hline&
\\\footnotesize{$\begin{array}{l}
\text{Euler-Lagrange equations for}\  \boldsymbol{\Curl (\sym \nabla  u)} \vspace{2mm}\\
{\rm Div}(\sigma+\widehat{\tau})+f=0\vspace{2mm}\\
\sigma=D_{\sym \nabla u} W_{\rm lin}(\sym \nabla u)\in {\rm Sym}(3)\vspace{2mm}\\
\widehat{\tau}=\sym (\Curl \widehat{m})\in {\rm Sym}(3)\vspace{2mm}\\
\widehat{m}=D_{\Curl(\sym \nabla u)} W_{\rm curv}(\Curl(\sym \nabla u)), \ \ \ \text{second order}
\end{array}$} & \footnotesize{$\begin{array}{l}
\text{Euler-Lagrange equations for}\  \boldsymbol{\Curl (\sym \nabla  u)} \vspace{2mm}\\
(\sigma_{ij}+\widehat{\tau}_{ij})_{,j}+f_i=0\vspace{2mm}\\
\sigma_{ij}=D_{\frac{1}{2}(u_{i,j}+u_{j,i})} W_{\rm lin}(\frac{1}{2}(u_{i,j}+u_{j,i}))\in {\rm Sym}(3)\vspace{2mm}\\
\widehat{\tau}_{ij}=\frac{1}{2}\left(\epsilon_{ilk}\widehat{m}_{jk,l}+\epsilon_{jlk}\widehat{m}_{ik,l}\right)\in {\rm Sym}(3)\vspace{2mm}\\
\widehat{m}=D_{\frac{1}{2}\epsilon_{ilk}\left(u_{j,kl}+u_{k,jl}\right)} W_{\rm curv}\left(\frac{1}{2}\epsilon_{ilk}\left(u_{j,kl}+u_{k,jl}\right)\right)
\end{array}$}  \\&\\
\hline&
 \\\footnotesize{
$\begin{array}{l}
\text{Euler-Lagrange equations for}\  \boldsymbol{\nabla [\axl\skw \nabla   u]} \vspace{2mm}\\
{\rm Div}(\sigma-\widetilde{\tau})+f=0\vspace{2mm}\\
\sigma=D_{\sym \nabla u} W_{\rm lin}(\sym \nabla u)\in {\rm Sym}(3)\vspace{2mm}\\
\widetilde{\tau}=\frac{1}{2}\anti({\rm Div}\,  \widetilde{m})\in \so(3)\vspace{2mm}\\
\widetilde{m}=D_{\nabla [\axl (\skw \nabla u)]} W_{\rm curv}(\nabla [\axl (\skw \nabla u)]), \ \text{second order}
\end{array}$} & \footnotesize{
$\begin{array}{l}
\text{Euler-Lagrange equations for}\  \boldsymbol{\nabla [\axl\skw \nabla   u] }\vspace{2mm}\\
(\sigma_{ij}-\widetilde{\tau}_{ij})_{,j}+f_i=0\vspace{2mm}\\
\sigma_{ij}=D_{\frac{1}{2}(u_{i,j}+u_{j,i})} W_{\rm lin}(\frac{1}{2}(u_{i,j}+u_{j,i}))\in {\rm Sym}(3)\vspace{2mm}\\
\widetilde{\tau}_{ij}=\frac{1}{2}\epsilon_{jik}  \widetilde{m}_{kl,l}\in \so(3)\vspace{2mm}\\
\widetilde{m}=D_{\left(\epsilon_{ijk}u_{k,j}\right)_{,m}} \widetilde{W}_{\mathrm{curv}}\left(\left(\epsilon_{ijk}u_{k,j}\right)_{,m}\right)
\end{array}$} \\&\\
\hline&
 \\\footnotesize{
$\begin{array}{l}
\text{Euler-Lagrange equations for} \ \boldsymbol{\nabla [\axl\skw \nabla   u]} \ \text{(3rd order)}\vspace{2mm}\\
{\rm Div}(\sigma-\widetilde{\tau})+f=0\vspace{2mm}\\
\sigma=D_{\sym \nabla u} W_{\rm lin}(\sym \nabla u)\in {\rm Sym}(3)\vspace{2mm}\\
\widetilde{\tau}=\Div \widetilde{\mathfrak{m}}=\frac{1}{2}\anti({\rm Div}\,  \widetilde{m})\in \so(3)\vspace{2mm}\\
\widetilde{\mathfrak{m}}=D_{\nabla\nabla u } [W_{\rm curv}(\nabla [\axl (\skw \nabla u)])], \qquad \quad\ \  \text{third order}
\end{array}$} & \footnotesize{
$\begin{array}{l}
\text{Euler-Lagrange equations for} \ \boldsymbol{\nabla [\axl\skw \nabla   u]} \ \text{(3rd order)}\vspace{2mm}\\
(\sigma_{ij}-\widetilde{\mathfrak{m}}_{ijk,k})_{,j}+f_i=0\vspace{2mm}\\
\sigma_{ij}=D_{\frac{1}{2}(u_{i,j}+u_{j,i})} W_{\rm lin}(\frac{1}{2}(u_{i,j}+u_{j,i}))\in {\rm Sym}(3)\vspace{2mm}\\
\widetilde{\tau}_{ij}=\widetilde{\mathfrak{m}}_{ijk,k}=\frac{1}{2}\epsilon_{jik}  \widetilde{m}_{kl,l}\in \so(3)\vspace{2mm}\\
\widetilde{\mathfrak{m}}_{ijk}=D_{u_{,ijk}}\widetilde{W}_{\mathrm{curv}}\left(\left(\epsilon_{ijk}u_{k,j}\right)_{,m}\right)
\end{array}$} \\&\\
\hline
\end{tabular}
\end{center}

\begin{figure}
\setlength{\unitlength}{1mm}
\begin{center}
\begin{picture}(10,46)
\thicklines

\put(10,21){\oval(125,52)}
\put(-48,35)
{
$
\left\{
\begin{array}{ll}
\delta u \quad \text{and} \quad \nabla \delta u\times n& \quad\text{cannot independently be prescribed}\vspace{2mm}\\
\delta u \quad \text{and} \quad \nabla \delta u.\, \tau& \quad\text{cannot independently be prescribed}\vspace{2mm}\\
\delta u \quad \text{and} \quad (\id-n\otimes n)\,\nabla \delta u& \quad\text{cannot independently be prescribed}\vspace{2mm}\\
\end{array}
\right.
$
}
\put(-48,20)
{
$
\ \ \,
\begin{array}{ll}
\delta u \quad \text{and} \quad \nabla \delta u.\, n& \text{can be  independently prescribed (\textbf{3 bc})}\vspace{2mm}\\
\delta u \quad \text{and} \quad (\id-n\otimes n)\,\nabla \delta u.\, n& \text{can be  independently prescribed  (\textbf{2 bc})}
\end{array}
$
}
\put(-48,10)
{
$\ \ \,
\begin{array}{ll}
\delta u \quad \text{and} \quad \langle \curl \delta u ,n\rangle& \qquad \quad\ \text{\, cannot independently be prescribed}
\end{array}
$
}
\put(-48,1)
{
$
\left\{
\begin{array}{ll}
\delta u \quad \text{and} \quad \langle \curl \delta u ,\tau\rangle& \ \text{can be independently prescribed  (\textbf{2 bc})}\vspace{2mm}\\
\delta u \quad \text{and} \quad (\id-n\otimes n). \curl  \delta u& \ \text{can be  independently prescribed  (\textbf{2 bc})}
\end{array}
\right.
$
}
\end{picture}
\end{center}
\caption{The possible independent geometrically boundary conditions. The joining  bracket means that the conditions are equivalent.}\label{posbc}
\end{figure}

We outline that   there exists a relation between the allowed traction boundary conditions in the $\Curl(\sym \nabla u)$-formulation and those from the  $\nabla [\axl(\skew\nabla u)]$-formulation which we take from \cite{MadeoGhibaNeffMunchKM}
\newpage
 \begin{align}
\underbrace{-\int_{\partial \Omega}\langle (\sigma-\widetilde{\tau}).\, n, \delta u\rangle \,da
-\int_{\partial\Omega}\langle \widetilde{ {m}}.\, n, \axl(\skw \nabla \delta u) \rangle\, da}_{\text{axl-formulation}}\hspace{2cm}\\
=
\underbrace{-\int_{\partial\Omega}\langle(\sigma+\widehat{\tau}).\, n,\delta u \rangle da-\int_{\partial\Omega}\sum\limits_{i=1}^3 \langle \widehat{ {m}}_i\times n, (\sym \nabla \delta u)_i\rangle\, da}_{\text{Curl-formulation}}.\notag
\end{align}
or after splitting up with use of the surface divergence theorem on both sides
\begin{align}\label{germaneq3n01}
&\underbrace{-\int_{\partial \Omega} \underbrace{ \langle (\sigma+\sym\Curl (\widehat{ {m}})).\, n-\nabla[(\sym \widehat{M})\,(\id-n\otimes n)]: (\id-n\otimes n)}_{(1)},  \delta u\rangle da}_{(a)}\\
&-\underbrace{\int_{\partial  \Omega}\langle \underbrace{(\id-n\otimes n)(\sym \widehat{M}).n}_{(2)},\nabla\delta u.n\rangle  da}_{(b)}-
\underbrace{\int_{\partial\Gamma}\langle \underbrace{([\sym \widehat{M}]^+-[\sym \widehat{M}]^-).\,\nu}_{(3)}, \delta u\rangle ds}_{(c)}\notag\\
=
&\underbrace{-\int_{\partial \Omega}\langle \underbrace{(\sigma-\frac{1}{2}\anti
{\rm Div}[\widetilde{ {m}}]).\, n-\frac{1}{2}\nabla[(\anti(\widetilde{ {m}}.\, n))\,(\id-n\otimes n)]: (\id-n\otimes n)}_{(1^\prime)}, \delta u\rangle \,da}_{(a^\prime)}\notag\\
&-\underbrace{
\int_{\partial \Omega}\langle \underbrace{\frac{1}{2}(\id -n\otimes n)\anti(\widetilde{ {m}}.\, n).n}_{(2^\prime)},  \nabla \delta u.n \rangle da}_{(b^\prime)}-\underbrace{
\int_{\partial \Gamma}\langle \underbrace{\frac{1}{2}([\anti(\widetilde{ {m}}.\, n)]^+-[\anti(\widetilde{ {m}}.\, n)]^-).\, \nu}_{(3^\prime)}, \delta u\rangle ds}_{(c^\prime)},\notag
\end{align}
for all variations $ \delta u\in C^\infty(\Omega)$. Naively, we might expect that the quantities involved have to be equal term by term, i.e. $(1)=(1^\prime), (2)=(2^\prime),(3)=(3^\prime)$ or
$(a)=(a^\prime), (b)=(b^\prime),(c)=(c^\prime)$. However, this is not true, see Appendix \ref{notthesamebc}.

\subsection{Principle of virtual work in the indeterminate couple stress model}
\subsubsection{Principle of virtual work in Cosserat theory}
Let us first recall that in the Cosserat theory with independent fields of displacement and microrotations the internal energy has the form $W(\nabla u, \overline{A},\nabla \overline{A})$ in which $u:\Omega\rightarrow\mathbb{R}^3$ is the displacement and $\overline{A}:\Omega \rightarrow\so(3)$ is the infinitesimal  microrotation. The virtual work principle of the Cosserat theory is given by
\begin{align}
\mathcal{P}^{\rm int}=\mathcal{P}^{\rm ext},
\end{align}
where
\begin{align}
\mathcal{P}^{\rm int}&=\frac{\rm d}{\rm dt}\int_\Omega W((\nabla u+t\nabla \delta u), \overline{A}+t\delta \overline{A},\nabla (\overline{A}+t\delta \overline{A}))\Big|_{t=0},\notag\\
\mathcal{P}^{\rm ext}&=\int_\Omega\langle \underbrace{f}_{\text{body force}}, u\rangle\, dv+\int_\Omega\langle \underbrace{\axl(\mathfrak{M})}_{\text{body couple}}, \axl(\overline{A})\rangle\, dv\notag\\
&\quad+\int_{\partial \Omega\setminus \overline{\Gamma}}\langle \underbrace{t}_{\text{surface tractions}}, u\rangle\, da+\int_{\partial \Omega\setminus \overline{\Gamma}}\langle\underbrace{\axl(\mathfrak{G})}_{\text{surface couple}}, \axl(\overline{A}) \rangle\, da,
\end{align}
with
$f:\partial \Omega\setminus \overline{\Gamma}\subseteq \mathbb{R}^3\rightarrow\mathbb{R}^3$, $t:\partial \Omega\setminus \overline{\Gamma}\subseteq \mathbb{R}^3\rightarrow\mathbb{R}^3$, $\mathfrak{M}:\Omega \subseteq\mathbb{R}^3 \rightarrow\so(3)$, $\mathfrak{G}:\overline{\Omega} \subseteq\mathbb{R}^3 \rightarrow\so(3)$.
 From this virtual work principle, one obtains the equilibrium equations
\begin{align}
\Div \sigma+f&=0,\\
\Div \mathfrak{m}+\axl(\skew\, \sigma)+\axl(\mathfrak{M})&=0,\notag
\end{align}
where $\sigma=D_{\sym \nabla u}  W(\sym \nabla u, \overline{A})$ and $\mathfrak{m}=D_{\nabla \overline{A}}  W( \nabla u, \overline{A}, \nabla \overline{A})$, and the boundary conditions
\begin{align}
\sigma.\, n={t},\qquad  \mathfrak{m}.\, n=\axl(\mathfrak{G}).
\end{align}
In order to obtain these equilibrium equations and the form of the boundary conditions, we have used the fact that $u$ and $\overline{A}$ are independent constitutive variables.

\subsubsection{Virtual work principle in the $\Curl(\sym (\nabla u))$-formulation of the indeterminate couple stress model}
Using again the fact that  $u$ and $2\,\curl u=\axl(\skew \nabla u)\thicksim \overline{A}$ are not independent constitutive variables, and the identity
$$
\nabla[\axl (\skw \nabla u)]=[\Curl(\sym \nabla u)]^T
$$
we consider the energy  $W(\sym (\nabla u),\Curl(\sym (\nabla u)) )$ and the following new form of the  virtual work principle
\begin{align}
\mathcal{P}^{\rm int}=\mathcal{P}^{\rm ext},
\end{align}
where
\begin{align}
\mathcal{P}^{\rm int}&=\frac{\rm d}{\rm dt}\int_\Omega W(\sym (\nabla u+t\nabla \delta u),\Curl(\sym (\nabla u+t\nabla \delta u)) )\Big|_{t=0},\notag\\
\mathcal{P}^{\rm ext}&
=\int_\Omega\langle \underbrace{f}_{\text{body forces}}, u\rangle dv+\int_\Omega\langle \underbrace{\mathcal{M}}_{\text{body couple}},\sym \nabla u \rangle dv\\
&\quad+\int_{\partial \Omega\setminus \overline{\Gamma}}\langle \underbrace{t}_{\text{surface traction}}, u\rangle da+\int_{\partial \Omega\setminus \overline{\Gamma}}\langle \underbrace{g}_{\text{surface double tractions}},(\id-n\otimes n)\nabla u.n \rangle da+\oint_{\partial\Gamma} \langle \underbrace{\pi}_{\text{edge line force}}, \nu\rangle ds,\notag
\end{align}
with $f:\Omega\subseteq \mathbb{R}^3\rightarrow\mathbb{R}^3$, ${t}:{\partial \Omega\setminus \overline{\Gamma}} \subseteq \mathbb{R}^3\rightarrow\mathbb{R}^3$, $\mathcal{M}:\Omega \subseteq\mathbb{R}^3 \rightarrow\so(3)$, ${g}:{\partial \Omega\setminus \overline{\Gamma}} \subseteq \mathbb{R}^3 \rightarrow\mathbb{R}^3$ and $\pi:\partial \Gamma \subseteq \mathbb{R}^3 \rightarrow\mathbb{R}^3$.

 From this virtual work principle, we obtain the equilibrium equations
\begin{align}\label{ec11}
 \Div \,(\sigma+\widehat{\tau}+\mathcal{M})+\underbrace{\Div \,\mathcal{M}+ f}_{\text{the total body force}}=0,
 \end{align}
where
\begin{align}
\sigma&=D_{\sym \nabla u}  W(\sym \nabla u, \Curl(\sym \nabla u)),\notag\\
\widehat{\tau}&=\sym[\Curl \widehat{m}]\in {\rm Sym}(3),\\
\widehat{m}&=D_{\Curl(\sym \nabla u)}  W(\sym \nabla u,\Curl(\sym \nabla u))\notag
\end{align}
and the following traction boundary conditions
  \begin{align}
  \left.
         \begin{array}{rcl}
 [(\sigma+\widehat{\tau}).n-\nabla[(\sym \widehat{M})\,(\id-n\otimes n)]: (\id-n\otimes n)]\,(x)&=& \underbrace{t(x)-\mathcal{M}.n,}_{\text{the total traction condition}}\vspace{1.2mm}\\
  \dd[(\id-n\otimes n)(\sym \widehat{M}).n]\,(x)&=&[(\id-n\otimes n)\,g]\,(x),
 \end{array}
 \right\}
 \quad
  x\in \partial \Omega\setminus {\overline{\Gamma}}
  \quad
 \begin{array}{r}
(3\  \text{bc})\vspace{1.2mm}\\
(2\  \text{bc})
 \end{array}\\
 \begin{array}{rcl}
\{([\sym \widehat{M}]^+-[\sym \widehat{M}]^-).\,\nu\}\,(x)&=&\pi,\hspace{3.55cm}
 \end{array}
 \quad
  x\in \partial {\Gamma}
  \quad \quad\
 \begin{array}{r}
(3\  \text{bc})
 \end{array}\notag
 \end{align}
  where  $\Gamma$ is an arbitrary open subset of $\partial \Omega$.

In Subsection \ref{curlaxltractform} and Appendix \ref{notthesamebc} we  show  that the traction boundary conditions in the $\Curl(\sym (\nabla u))$-formulation do not coincide pointwise with those arising from $\nabla[\axl (\skw \nabla u)]$-formulation, even if the boundary virtual power works are identical.

\begin{figure}
\setlength{\unitlength}{1mm}
\begin{center}
\begin{picture}(10,95)
\thicklines

\put(5,85){\oval(165,50)}
\put(-72,105){\bf\footnotesize{Incomplete-standard boundary conditions in the $\nabla [\axl (\skw \nabla u)]$--formulation \cite{Mindlin62}}}
\put(-72,100){\footnotesize{Geometric (essential) boundary conditions \ (3+2) [correct]}}
 \put(-72,95){\footnotesize{$u\big|_{\Gamma}=\widetilde{u}^0\in\mathbb{R}^{ 3}, \qquad (\id-n\otimes n).\curl u\big|_{\Gamma}=(\id-n\otimes n).\curl  \widetilde{u}^0\in\mathbb{R}^{3},\quad \text{or}\ \ (\id-n\otimes n)\nabla  u.n\big|_{\Gamma}=(\id-n\otimes n)\,\nabla  \widetilde{u}^0.n\in\mathbb{R}^{3}$}}
 \put(-72,90){\footnotesize{Mechanical (traction) boundary conditions \ (3+2) \textbf{[erroneous]}}}
 \put(-72,85){\footnotesize{$\left((\sigma-\widetilde{\tau}).\, n
 -\frac{1}{2} n\times \nabla[\langle n,(\sym \widetilde{ {m}}).n\rangle]\right)\big|_{\partial \Omega\setminus\overline{\Gamma}}=\widetilde{t}$, \qquad\qquad\qquad\qquad $\widetilde{\tau}=\Div \widetilde{\mathfrak{m}}=\frac{1}{2}\anti({\rm Div}\,  \widetilde{m})\in \so(3)$}}
 \put(-43,80){\footnotesize{$(\id -n\otimes n)\,\widetilde{ {m}}.\, n\big|_{\partial \Omega\setminus\overline{\Gamma}}=(\id-n\otimes n)\,\widetilde{h}$}}
\put(78,85){\footnotesize{3 bc}}
\put(78,80){\footnotesize{2 bc}}
\put(-72,75){\footnotesize{Boundary virtual work }}
\put(-72,70){\footnotesize{$-\int_{\partial \Omega}\langle (\sigma-\widetilde{\tau}).\, n, \delta u\rangle \,da
-\int_{\partial\Omega}\langle \widetilde{ {m}}.\, n, \axl (\skw \nabla \delta u) \rangle\, da=
0\qquad \Leftrightarrow$}}
\put(-72,65){\footnotesize{$-\int_{\partial \Omega}\langle \Big\{(\sigma-\widetilde{\tau}).\, n -\frac{1}{2} n\times \underbrace{\nabla[\langle n,(\sym \widetilde{ {m}}).n\rangle]}_{\text{normal curvature}}\Big\}, \delta u\rangle \,da
-\int_{\partial\Omega}\langle \widetilde{ {m}}.n,\Big\{(\id-n\otimes n)\,[\axl (\skw \nabla \delta u)]\Big\}\rangle\,
 da=0$}}

\put(5,52){\Huge $\Updownarrow$}

\put(5,23){\oval(165,50)}
\put(-72,45){\bf\footnotesize{Incomplete-standard boundary conditions in the $\nabla [\axl (\skw \nabla u)]$--formulation, written in indices \cite{lubarda2003effects}}}
\put(-72,40){\footnotesize{Geometric (essential) boundary conditions \ (3+2) [correct]}}
 \put(-72,35){\footnotesize{$u_i\big|_{\Gamma}=\widetilde{u}^0_{,i}\in\mathbb{R}^{ 3}, \qquad \ \left(\epsilon_{ikl}u_{l,k}-\epsilon_{jkl}u_{l,k}n_jn_i\right)\big|_{\Gamma}= \epsilon_{ikl}\widetilde{u}^0_{l,k}-\epsilon_{jkl}\widetilde{u}^0_{l,k}n_jn_i$}}
 \put(-71.7,30){\footnotesize{$\quad \ \ \qquad \qquad \qquad \quad \text{or}\ \ \left(u_{i,k}n_k-u_{j,k}n_kn_jn_i\right)\big|_{\Gamma}=\widetilde{u}^0_{i,k}n_k-\widetilde{u}^0_{j,k}n_kn_jn_i$}}
 \put(-72,25){\footnotesize{Mechanical (traction) boundary conditions \ (3+2) \textbf{[erroneous]}}}
 \put(-72,20){\footnotesize{$\left((\sigma_{ij}-\widetilde{\tau}_{ij})\, n_j
 -\frac{1}{2} \epsilon_{ikl}n_k (\widetilde{m}_{ij}n_in_j)_{,l}\right)\big|_{\partial \Omega\setminus\overline{\Gamma}}=\widetilde{t}_i$, \qquad\qquad\qquad\qquad $\widetilde{\tau}_{ij}=\frac{1}{2}\epsilon_{jik}  \widetilde{m}_{kl,l}\in \so(3)$}}
 \put(-52.5,15){\footnotesize{$\left(\widetilde{ {m}}_{ik}n_k-\widetilde{ {m}}_{jk}n_kn_jn_i\right)\big|_{\partial \Omega\setminus\overline{\Gamma}}=\widetilde{h}_i-\widetilde{h}_jn_jn_i$}}
\put(78,20){\footnotesize{3 bc}}
\put(78,15){\footnotesize{2 bc}}
\put(-72,10){\footnotesize{Boundary virtual work }}
\put(-72,5){\footnotesize{$-\int_{\partial \Omega}\left((\sigma_{ij}-\widetilde{\tau}_{ij})\, n_j
 \right) \delta u_i \,da
-\int_{\partial\Omega}\langle \widetilde{ {m}}.\, n, \axl (\skw \nabla \delta u) \rangle\, da=
0\qquad \Leftrightarrow$}}
\put(-72,0){\footnotesize{$-\int_{\partial \Omega}\left((\sigma_{ij}-\widetilde{\tau}_{ij})\, n_j
 -\frac{1}{2} \epsilon_{ikl}n_k (\widetilde{m}_{ij}n_in_j)_{,l}\right) \delta u_i \,da
-\frac{1}{2}\,\int_{\partial\Omega}\left(\widetilde{ {m}}_{ik}n_k-\widetilde{ {m}}_{jk}n_kn_jn_i\right)\,\left(\epsilon_{ikl}\delta u_{l,k}-\epsilon_{jkl}\delta u_{l,k}n_jn_i\right)\,
 da=0$}}

\end{picture}
\end{center}
\caption{The incompletely  boundary conditions in the $\nabla [\axl (\skw \nabla u)]$--formulation which have been employed hitherto by all authors to our knowledge. The virtual displacement is denoted by $\delta u\in C^\infty(
\Omega, \mathbb{R}^3)$. The number of traction boundary conditions is correct, but the split into independent variations at the boundary is incomplete, as shown in \cite{MadeoGhibaNeffMunchKM}.}\label{limitmodel003}
\end{figure}

\begin{figure}
\setlength{\unitlength}{1mm}
\begin{center}
\begin{picture}(10,100)
\thicklines

\put(5,87){\oval(165,62)}
\put(-72,115){\bf\footnotesize{Boundary conditions in the $\nabla [\axl (\skw \nabla u)]$--formulation in terms of gradient elasticity}}
\put(-72,110){\bf\footnotesize{ and third order moment tensors, see \cite{MadeoGhibaNeffMunchKM}}}
\put(-72,105){\footnotesize{Geometric (essential) boundary conditions \ (3+2)}}
 \put(-72,100){\footnotesize{$u\big|_{\Gamma}=\widetilde{u}^0\in\mathbb{R}^{ 3}, \qquad (\id-n\otimes n)\nabla  u.n\big|_{\Gamma}=(\id-n\otimes n)\,\nabla  \widetilde{u}^0.n\in\mathbb{R}^{3},\quad \text{or}\ \ (\id-n\otimes n).\curl u\big|_{\Gamma}=(\id-n\otimes n).\curl  \widetilde{u}^0\in\mathbb{R}^{3}$}}
 \put(-72,95){\footnotesize{Mechanical (traction) boundary conditions \ (3+2)}}
 \put(-67.5,90){\footnotesize{$\left((\sigma-\Div \widetilde{\mathfrak{m}}).\, n-\nabla [(\widetilde{\mathfrak{m}}.\, n)\,(\id-n\otimes n)]: (\id-n\otimes n)\right)\big|_{\partial \Omega\setminus\overline{\Gamma}}=\widetilde{t}$,\qquad \quad  $\widetilde{\mathfrak{m}}=D_{\nabla\nabla u}[\widetilde{W}_{\mathrm{curv}}(\nabla [\axl (\skw \nabla u)])]$}}
 \put(-22.2,85){\footnotesize{$(\id -n\otimes n)[\widetilde{\mathfrak{m}}.\, n].n\big|_{\partial \Omega\setminus\overline{\Gamma}}=(\id-n\otimes n)\,\widetilde{h}$}}
\put(-21.3,80){\footnotesize{$ ([\widetilde{\mathfrak{m}}.\, n]^+-[\widetilde{\mathfrak{m}}.\, n]^-).  \, \nu\big|_{\partial \Gamma}=\widetilde{\pi}$\ \ ``edge line force" on $\partial \Gamma$}}
\put(78,90){\footnotesize{3 bc}}
\put(78,85){\footnotesize{2 bc}}
\put(78,80){\footnotesize{3 bc}}
\put(-72,75){\footnotesize{Boundary virtual work}}
\put(-72,70){\footnotesize{$-\int_{\partial \Omega}\langle (\sigma-\Div \widetilde{\mathfrak{m}}).\, n, \delta u\rangle \,da
-\int_{\partial\Omega}\langle \widetilde{\mathfrak{m}}.\, n, \nabla \delta u \rangle\, da=
0\qquad \Leftrightarrow$}}
\put(-72,65){\footnotesize{$ -\int_{\partial \Omega}\langle (\sigma-\Div \widetilde{\mathfrak{m}}).\, n-\nabla [(\widetilde{\mathfrak{m}}.\, n)\,(\id-n\otimes n)]:(\id-n\otimes n), \delta u\rangle \,da-\int_{\partial \Omega}\langle (\id -n\otimes n)[\widetilde{\mathfrak{m}}.\, n].n,  \nabla \delta u.n \rangle da$}}
\put(-72,60){\footnotesize{$ -
\int_{\partial \Gamma}\langle ([\widetilde{\mathfrak{m}}.\, n]^+-[\widetilde{\mathfrak{m}}.\, n]^-).  \, \nu, \delta u\rangle ds=0$}}

\put(5,50){{\huge $\Updownarrow$} \  \footnotesize{equivalent}}

\put(5,21){\oval(165,50)}
 \put(-72,40){\footnotesize{\bf Boundary conditions  (3+2) in the $\nabla [\axl (\skw \nabla u)]$--formulation in terms of gradient elasticity, third order}}
 \put(-72,35){\bf\footnotesize{  moment tensors,  and written in indices, see \cite{MadeoGhibaNeffMunchKM}}}
 \put(-72,30){\footnotesize{Geometric (essential) boundary conditions \ (3+2)}}
 \put(-72,25){\footnotesize{$u_i\big|_{\Gamma}=\widetilde{u}^0_{ i}, \qquad (u_{i,k}n_k-u_{j,k}n_jn_k n_i)\big|_{\Gamma}=\widetilde{u}^0_{i,k}n_k-\widetilde{u}^0_{j,k}n_in_jn_k  $,}}
  \put(-79,20){\footnotesize{$\hspace{2cm} \text{or}\ \ (\epsilon_{ikl}u_{l,k}-\epsilon_{jkl} u_{l,k}n_jn_i)\big|_{\Gamma}=(\epsilon_{ikl}\widetilde{u}^0_{l,k}-\epsilon_{jkl} \widetilde{u}^0_{l,k}n_jn_i) $,}}
 \put(-72,15){\footnotesize{Mechanical (traction) boundary conditions \ (3+2)}}
 \put(-72,10){\footnotesize{$\left[\left(\sigma_{ij}-\widetilde{\mathfrak{m}}_{ijk,k}\right)\, n_{j}-\left(\widetilde{\mathfrak{m}}_{ipk}\,n_k-\widetilde{\mathfrak{m}}_{ijk}\, n_{k}n_jn_p\right)_{,h}\left(\delta_{ph}-n_p n_h\right)\right]\big|_{\partial \Omega\setminus\overline{\Gamma}}=\widetilde{t}_i$,\quad  $\widetilde{\mathfrak{m}}_{ijk}=D_{u_{,ijk}}\widetilde{W}_{\mathrm{curv}}\left(\left(\epsilon_{ijk}u_{k,j}\right)_{,m}\right)$}}
 \put(-25.5,5){\footnotesize{$\left(\widetilde{\mathfrak{m}}_{ijp}\, n_{j}-\widetilde{\mathfrak{m}}_{pjk}\, n_{k}n_{j}n_i\right)\,n_p\big|_{\partial \Omega\setminus\overline{\Gamma}}=\widetilde{h}_i-\widetilde{h}_pn_pn_i$}}
\put(-24,0){\footnotesize{$ ([\widetilde{\mathfrak{m}}_{pjk}\, n_{k}]^+-[\widetilde{\mathfrak{m}}_{pjk}\, n_{k}]^-).  \, \nu_j\big|_{\partial \Gamma}=\widetilde{\pi}_p$\ \ ``edge line force" on $\partial \Gamma$}}
\put(78,10){\footnotesize{3 bc}}
\put(78,5){\footnotesize{2 bc}}
\put(78,0){\footnotesize{3 bc}}
\end{picture}
\end{center}
\caption{The complete-standard boundary conditions in the $\nabla [\axl (\skw \nabla u)]$--formulation in terms of a third order couple stress tensor coming from full gradient elasticity.  The virtual displacement is denoted by $\delta u\in C^\infty(
\Omega, \mathbb{R}^3)$.  The summation convention was used in index notations. }\label{limitmodel03order}
\end{figure}

\newpage

\begin{figure}
\setlength{\unitlength}{1mm}
\begin{center}
\begin{picture}(10,170)
\thicklines

\put(5,156){\oval(165,69)}
\put(-72,185){\bf \footnotesize{Correct boundary conditions in the $\boldsymbol{\nabla[\axl(\skew\nabla u)]}$-formulation ($\nabla (\curl u)$-formulation), see \cite{MadeoGhibaNeffMunchKM}}}
\put(-72,180){\footnotesize{Geometric (essential) boundary conditions \ (3+2)}}
 \put(-72,175){\footnotesize{$u\big|_{\Gamma}=\widetilde{u}^0\in\mathbb{R}^{ 3}, \qquad (\id-n\otimes n)\nabla  u.n\big|_{\Gamma}=(\id-n\otimes n)\,\nabla  \widetilde{u}^0.n\in\mathbb{R}^{3},\quad \text{or}\ \ (\id-n\otimes n).\curl u\big|_{\Gamma}=(\id-n\otimes n).\curl  \widetilde{u}^0\in\mathbb{R}^{3}$}}
 \put(-72,170){\footnotesize{Mechanical (traction) boundary conditions \ (3+2)}}
 \put(-75,165){\footnotesize{$\left((\sigma-\widetilde{\tau}).\, n
 -\frac{1}{2} n\times \nabla[\langle n,(\sym \widetilde{ {m}}).n\rangle]-\frac{1}{2}\nabla[(\anti[(\id-n\otimes n)\widetilde{ {m}}.\, n])(\id-n\otimes n)]: (\id-n\otimes n)\right)\big|_{\partial \Omega\setminus\overline{\Gamma}}=\widetilde{t}$,}}
 \put(2,160){\footnotesize{$(\id -n\otimes n)\anti[(\id-n\otimes n)\widetilde{ {m}}.\, n].n\big|_{\partial \Omega\setminus\overline{\Gamma}}=(\id-n\otimes n)\,\widetilde{h}$}}
\put(-24,155){\footnotesize{``edge line force" on $\partial \Gamma$: $\langle ([\anti[\widetilde{ {m}}.\, n]]^+-[\anti[\widetilde{ {m}}.\, n]]^-).  \, \nu\big|_{\partial \Gamma}=\widetilde{\pi}$}}
\put(81.5,165){\footnotesize{3 bc}}
\put(81.5,160){\footnotesize{2 bc}}
\put(81.5,155){\footnotesize{3 bc}}
\put(-72,150){\footnotesize{Boundary virtual work}}
\put(-72,145){\footnotesize{$-\int_{\partial \Omega}\langle (\sigma-\widetilde{\tau}).\, n, \delta u\rangle \,da
-\int_{\partial\Omega}\langle \widetilde{ {m}}.\, n, \axl (\skw \nabla \delta u) \rangle\, da=
0\qquad \Leftrightarrow$}}
\put(-72,140){\footnotesize{$-\int_{\partial \Omega}\langle \Big\{(\sigma-\widetilde{\tau}).\, n -\frac{1}{2} n\times \underbrace{\nabla[\langle n,(\sym \widetilde{ {m}}).n\rangle]}_{\text{normal curvature}}\Big\}, \delta u\rangle \,da
+\int_{\partial\Omega}\langle \widetilde{ {m}}.n,\Big\{(\id-n\otimes n)\,[\axl (\skw \nabla \delta u)]\Big\}\rangle\,
 da=0$}}

 \put(-72,130){\footnotesize{$ -\int_{\partial \Omega}\langle (\sigma-\widetilde{\tau}).\, n
 -\frac{1}{2} n\times \nabla[\langle n,(\sym \widetilde{ {m}}).n\rangle]-\frac{1}{2}\nabla[(\anti[(\id-n\otimes n)\widetilde{ {m}}.\, n])\,(\id-n\otimes n)]: (\id-n\otimes n), \delta u\rangle \,da\notag$}}

\put(-72,125){\footnotesize{$ \qquad-\frac{1}{2}
\int_{\partial \Omega}\langle (\id -n\otimes n)\anti[(\id-n\otimes n)\widetilde{ {m}}.\, n].n,  \nabla \delta u.n \rangle da-\frac{1}{2}
\int_{\partial\Gamma}\langle ([\anti[\widetilde{ {m}}.\, n]]^+-[\anti[\widetilde{ {m}}.\, n]]^-).  \, \nu, \delta u\rangle ds\notag =0$}}

\put(5,115){{\huge $\Updownarrow$} \  \footnotesize{equivalent}}

\put(5,82){\oval(165,60)}
\put(-72,105){\bf \footnotesize{Alternative equivalent correct boundary conditions in the $\boldsymbol{\nabla[\axl(\skew\nabla u)]}$-formulation,  see \cite{MadeoGhibaNeffMunchKM}}}
\put(-72,100){\footnotesize{Geometric (essential) boundary conditions \ (3+2)}}
 \put(-72,95){\footnotesize{$u\big|_{\Gamma}=\widetilde{u}^0\in\mathbb{R}^{ 3}, \qquad (\id-n\otimes n)\nabla  u.n\big|_{\Gamma}=(\id-n\otimes n)\,\nabla  \widetilde{u}^0.n\in\mathbb{R}^{3},\quad \text{or}\ \ (\id-n\otimes n).\curl u\big|_{\Gamma}=(\id-n\otimes n).\curl  \widetilde{u}^0\in\mathbb{R}^{3}$}}
 \put(-72,90){\footnotesize{Mechanical (traction) boundary conditions \ (3+2)}}
 \put(-72,85){\footnotesize{$\left((\sigma-\widetilde{\tau}).\, n-\frac{1}{2}\nabla[(\anti(\widetilde{ {m}}.\, n))\,(\id-n\otimes n)]: (\id-n\otimes n)\right)\big|_{\partial \Omega\setminus\overline{\Gamma}}=\widetilde{t}$,}}
 \put(-28,80){\footnotesize{$(\id -n\otimes n)\anti[\widetilde{ {m}}.\, n].n\big|_{\partial \Omega\setminus\overline{\Gamma}}=(\id-n\otimes n)\,\widetilde{h}$}}
\put(-35.5,75){\footnotesize{$ ([\anti[\widetilde{ {m}}.\, n]]^+-[\anti[\widetilde{ {m}}.\, n]]^-).  \, \nu\big|_{\partial \Gamma}=\widetilde{\pi}$\ \ ``edge line force" on $\partial \Gamma$}}
\put(78,85){\footnotesize{3 bc}}
\put(78,80){\footnotesize{2 bc}}
\put(78,75){\footnotesize{3 bc}}
\put(-72,70){\footnotesize{Boundary virtual work}}
\put(-72,65){\footnotesize{$-\int_{\partial \Omega}\langle (\sigma-\widetilde{\tau}).\, n, \delta u\rangle \,da
-\int_{\partial\Omega}\langle \widetilde{ {m}}.\, n, \axl (\skw \nabla \delta u) \rangle\, da=
0\qquad \Leftrightarrow$}}
\put(-72,60){\footnotesize{$ -\int_{\partial \Omega}\langle (\sigma-\widetilde{\tau}).\, n-\frac{1}{2}\nabla[(\anti(\widetilde{ {m}}.\, n))\,(\id-n\otimes n)]: (\id-n\otimes n), \delta u\rangle \,da$}}
\put(-72,55){\footnotesize{$ \qquad-\frac{1}{2}
\int_{\partial \Omega}\langle (\id -n\otimes n)\anti(\widetilde{ {m}}.\, n).n,  \nabla \delta u.n \rangle da-\frac{1}{2}
\int_{\partial \Gamma}\langle ([\anti[\widetilde{ {m}}.\, n]]^+-[\anti[\widetilde{ {m}}.\, n]]^-).  \, \nu, \delta u\rangle ds=0$}}

\put(5,45){{\huge $\Updownarrow$} \  \footnotesize{equivalent}}

\put(5,18){\oval(165,45)}
\put(-72,35){\bf \footnotesize{Alternative equivalent correct boundary conditions in the $\boldsymbol{\nabla[\axl(\skew\nabla u)]}$-formulation,  written in indices}}
\put(-72,30){\footnotesize{Geometric (essential) boundary conditions \ (3+2)}}
 \put(-72,25){\footnotesize{$u_i\big|_{\Gamma}=\widetilde{u}^0_{,i}\in\mathbb{R}^{ 3}, \qquad \ \left(\epsilon_{ikl}u_{l,k}-\epsilon_{jkl}u_{l,k}n_jn_i\right)\big|_{\Gamma}= \epsilon_{ikl}\widetilde{u}^0_{l,k}-\epsilon_{jkl}\widetilde{u}^0_{l,k}n_jn_i$}}
 \put(-71.7,20){\footnotesize{$\quad \ \ \qquad \qquad \qquad \quad \text{or}\ \ \left(u_{i,k}n_k-u_{j,k}n_kn_jn_i\right)\big|_{\Gamma}=\widetilde{u}^0_{i,k}n_k-\widetilde{u}^0_{j,k}n_kn_jn_i$}}
 \put(-72,15){\footnotesize{Mechanical (traction) boundary conditions \ (3+2)}}
 \put(-72,10){\footnotesize{$\left((\sigma_{ij}-\widetilde{\tau}_{ij})\, n_j+\frac{1}{2}(\epsilon_{ihk}\widetilde{m}_{ks}n_s-\epsilon_{ijk}\widetilde{m}_{ks}n_sn_jn_h)_{,p}(\delta_{hp}-n_hn_p)\right)\big|_{\partial \Omega\setminus\overline{\Gamma}}=\widetilde{t}_i$,}}
 \put(-52,5){\footnotesize{$\hspace{2.23cm}(\epsilon_{ipk}\widetilde{m}_{ks}n_s-\epsilon_{jpk}\widetilde{m}_{ks}n_sn_jn_i)\,n_p\big|_{\partial \Omega\setminus\overline{\Gamma}}=\widetilde{h}_i-\widetilde{h}_pn_pn_i,$}}
\put(-59.5,0){\footnotesize{$ \hspace{3.23cm}([\epsilon_{ipk}\widetilde{m}_{ks}n_s]^+-[\epsilon_{ipk}\widetilde{m}_{ks}n_s]^-)  \, \nu_p\big|_{\partial \Gamma}=\widetilde{\pi}_i$\ \ ``edge line force" on $\partial \Gamma$}}
\put(78,10){\footnotesize{3 bc}}
\put(78,5){\footnotesize{2 bc}}
\put(78,0){\footnotesize{3 bc}}
\end{picture}
\end{center}
\caption{The possible boundary conditions in the $\nabla [\axl (\skw \nabla u)]$ and $\Curl(\sym \nabla u)$--formulation. The equivalence of the geometric boundary condition is clear. The virtual displacement is denoted by $\delta u\in C^\infty(
\Omega, \mathbb{R}^3)$. }\label{limitmodel00}
\end{figure}

\newpage

\begin{figure}
\setlength{\unitlength}{1mm}
\begin{center}
\begin{picture}(10,100)
\thicklines

\put(5,82){\oval(165,59)}
\put(-72,107){\bf\footnotesize{Correct boundary conditions in the $\boldsymbol{\Curl(\sym\nabla u)}$--formulation}}
\put(-72,102){\footnotesize{Geometric (essential) boundary conditions \ (3+2)}}
 \put(-72,97){\footnotesize{$u\big|_{\Gamma}=\widehat{u}^0 \in \mathbb{R}^3, \qquad (\id-n\otimes n)\nabla u.n\big|_{\Gamma}=(\id-n\otimes n)\,\nabla\widehat{u}^0 .n\in \mathbb{R}^{3}, \quad \text{or}\ \  (\id-n\otimes n).\curl u\big|_{\Gamma}=(\id-n\otimes n)\,\curl \widehat{u}^0  \in \mathbb{R}^{3}$}}
 \put(-72,92){\footnotesize{Mechanical (traction) boundary conditions \ (3+2)}}
 \put(-72,87){\footnotesize{$(\sigma+\widehat{\tau}).\, n-\nabla[(\sym \widehat{M})\,(\id-n\otimes n)]: (\id-n\otimes n)\big|_{\partial \Omega\setminus\overline{\Gamma}}=\widehat{t}$,}}
\put(-36,82){\footnotesize{$(\id-n\otimes n)(\sym \widehat{M}).n\big|_{\partial \Omega\setminus \overline{\Gamma}}=(\id-n\otimes n)\,\widehat{h}$,}}
\put(-38,77){\footnotesize{$([\sym \widehat{M}]^+-[\sym \widehat{M}]^-). \, \nu\big|_{\partial \Gamma}=\widehat{\pi}$\ \ ``edge line force" on $\partial \Gamma$}}
\put(80,87){\footnotesize{3 bc}}
\put(20,87){\footnotesize{ $\widehat{M}=\left(
                                                 \begin{array}{c}
                                                   \widehat{m}_1\times n \_\_\_ \\
                                                    \widehat{m}_2\times n \_\_\_\\
                                                    \widehat{m}_3\times n\_\_\_
                                                 \end{array}
                                               \right), \ \widehat{m}=\left(
                                                 \begin{array}{c}
                                                   \widehat{m}_1 \_\_\_\\
                                                    \widehat{m}_2 \_\_\_\\
                                                    \widehat{m}_3\_\_\_
                                                 \end{array}
                                               \right)
$}}
\put(80,82){\footnotesize{2 bc}}
\put(80,77){\footnotesize{3 bc}}
  \put(-72,72){\footnotesize{Boundary virtual work}}
\put(-72,67){\footnotesize{$-\int_{\partial\Omega}\langle(\sigma+\widehat{\tau}).\, n,\delta u \rangle da- \int_{\partial\Omega}\sum\limits_{i=1}^3 \langle \widehat{ {m}}_i\times n, (\sym \nabla \delta u )_i\rangle\, da=
0 \quad \qquad \Leftrightarrow$}}
\put(-72,60){\footnotesize{$-\int_{\partial \Omega}  \langle (\sigma+\widehat{\tau}).\, n-\nabla[(\sym \widehat{M})\,(\id-n\otimes n)]: (\id-n\otimes n),  \delta u\rangle da-\int_{\partial \Omega}\langle (\id-n\otimes n)(\sym \widehat{M}).n,(\nabla\delta u).n\rangle  da$}}
\put(-65,55){\footnotesize{$-
\int_{\partial\Gamma}\langle ([\sym \widehat{M}]^+-[\sym \widehat{M}]^-). \, \nu, \delta u\rangle ds=0$}}

\put(5,44){{\huge $\Updownarrow$} \  \footnotesize{equivalent}}

\put(5,18){\oval(165,44)}
\put(-72,35){\bf\footnotesize{Correct boundary conditions in the $\boldsymbol{\Curl(\sym \nabla u)}$--formulation, written in indices}}
\put(-72,30){\footnotesize{Geometric (essential) boundary conditions \ (3+2)}}
 \put(-72,25){\footnotesize{$u_i\big|_{\Gamma}=\widehat{u}^0_{,i}\in\mathbb{R}^{ 3}, \qquad \ \left(\epsilon_{ikl}u_{l,k}-\epsilon_{jkl}u_{l,k}n_jn_i\right)\big|_{\Gamma}= \epsilon_{ikl}\widehat{u}^0_{l,k}-\epsilon_{jkl}\widehat{u}^0_{l,k}n_jn_i$}}
 \put(-71.7,20){\footnotesize{$\quad \ \ \qquad \qquad \qquad \quad \text{or}\ \ \left(u_{i,k}n_k-u_{j,k}n_kn_jn_i\right)\big|_{\Gamma}=\widehat{u}^0_{i,k}n_k-\widehat{u}^0_{j,k}n_kn_jn_i$}}
 \put(-72,15){\footnotesize{Mechanical (traction) boundary conditions \ (3+2)}}
 \put(-72,10){\footnotesize{$(\sigma_{ij}+\widehat{\tau}_{ij})\,n_j-\frac{1}{2}(\epsilon_{ikl}\widehat{m}_{hk}n_l
 +\epsilon_{hkl}\widehat{m}_{ik}n_l-
 \epsilon_{jkl}\widehat{m}_{ik}n_ln_jn_h- \epsilon_{ikl}\widehat{m}_{jk}n_ln_jn_h)_{,p}(\delta_{hp}-n_hn_p)\big|_{\partial \Omega\setminus\overline{\Gamma}}=\widehat{t}_i$,}}
\put(-57,5){\footnotesize{$\hspace{2.25cm}\frac{1}{2}(\epsilon_{ikl}\widehat{m}_{pk}n_l+\epsilon_{pkl}\widehat{m}_{ik}n_l-
 \epsilon_{jkl}\widehat{m}_{pk}n_ln_jn_i- \epsilon_{pkl}\widehat{m}_{jk}n_ln_jn_i)\,n_p\big|_{\partial \Omega\setminus \overline{\Gamma}}=\widehat{h}_i-\widehat{h}_pn_pn_i$,}}
\put(-57.7,0){\footnotesize{$\hspace{3.2cm}\frac{1}{2}([\epsilon_{pkl}\widehat{m}_{ik}n_l+\epsilon_{ikl}\widehat{m}_{pk}n_l]^+
-[\epsilon_{pkl}\widehat{m}_{ik}n_l+\epsilon_{ikl}\widehat{m}_{pk}n_l]^-) \, \nu_p\big|_{\partial \Gamma}=\widehat{\pi}_i$}}
\put(80,10){\footnotesize{3 bc}}
\put(80,5){\footnotesize{2 bc}}
\put(80,0){\footnotesize{3 bc}}
 \end{picture}
\end{center}
\caption{The possible boundary conditions in the $\nabla [\axl (\skw \nabla u)]$ and $\Curl(\sym \nabla u)$--formulation. The equivalence of the geometric boundary condition is clear. The virtual displacement is denoted by $\delta u\in C^\infty(
\Omega, \mathbb{R}^3)$.}\label{limitmodel00}
\end{figure}

\newpage

\section{Relation to the Cosserat-micropolar and micromorphic model}\label{sectcoss}\setcounter{equation}{0}

We have seen that it is irrelevant whether we take $\nabla [\axl (\skw \nabla u)]$ or $\Curl\, (\sym \nabla u)$  as basic curvature measures for the indeterminate couple stress model as long as consistent requirements on $\Gamma= \partial \Omega$ are considered and the following Dirichlet conditions are used  both together
\begin{align}
 u\big|_{\Gamma}=u_0, \quad  (\id-n\otimes n)\nabla  u.n\big|_{\Gamma}= (\id-n\otimes n) \nabla  u_0.n \  \Leftrightarrow \
  u\big|_{\Gamma}=u_0, \quad    (\id-n\otimes n).\curl u\big|_{\Gamma}=  (\id-n\otimes n).\curl u_0.\notag
 \end{align}
  The difference of the formulation appears only when considering mixed Dirichlet-Neumann boundary conditions.
 However, when we want to switch from a 4th-order (gradient elastic) problem to a 2nd-order micromorphic model or Cosserat model \cite{Cosserat09,NeffGhibaMicroModel,Eringen99}, we need to introduce new independent variables and decide about the useful coupling conditions in terms of adding a penalty term. It is also clear that adding more variables it depends on the number of the added fields whether the  new formulation is weaker softer in the language of a finite element context. In general, more degrees of freedom mean weaker response, at the prize of needing to specify more boundary conditions.

We discuss the following cases:
\begin{itemize}
\item[i)] {\bf [Cosserat]} {\bf $\boldsymbol{\skew \nabla u \mapsto \overline{A}\in \so(3)}$}. In the case $\nabla[\axl (\skw \nabla u)]$ we are led to introduce a skew-symmetric variable $\overline{A}\in \so(3)$ instead of $\skw \nabla u$, thus using the curvature tensor  $\nabla \axl(\overline{A})$ together with the coupling $$\mu_c\,\|\skw(\nabla u)-\overline{A}\|^2=\frac{\mu_c}{2}\,\|\curl u-2\axl(\overline{A})\|^2,$$ leading to the classical Cosserat model, with a new penalty parameter $\mu_c>0$ known as the Cosserat couple modulus. To be more precise, the corresponding minimization problem becomes
\begin{align*}
I(u,\overline{A})=\int_\Omega \bigg[&\mu\, \|{\rm sym} \nabla u\|^2+\frac{\lambda}{2}\, [\tr({\rm sym} \nabla u)]^2+\mu_c\|\skw(\nabla u)-\overline{A}\|^2\\
&+\mu\,L_c^2\,\big(\alpha_1\|\dev\sym \nabla \axl(\overline{A})\|^2+\alpha_2\tr[\nabla \axl(\overline{A})]^2+\alpha_2\|\skew \nabla \axl(\overline{A})\|^2\big)\bigg] dV\ \ \rightarrow\ \ \text{min.}
\end{align*}
 w.r.t $u\in H^1_0(\Omega), \ \overline{A}\in H^1_0(\Omega)$. In this case, the force-stress tensor is clearly non-symmetric
\begin{align}
\sigma=2\,\mu\, \sym\nabla u+2\,\mu_c\,(\skew\nabla u-\overline{A})+ \lambda\,\tr(\nabla u)\, \id\not\in {\rm Sym}(3),
\end{align}
and the couple stress tensor (hyperstress tensor)  is given by
\begin{align}
\widetilde{ {m}}=2\,\mu\,L_c^2\,\big(\alpha_1 \,\dev\sym \nabla \axl(\overline{A})+\alpha_2\,\tr[\nabla \axl(\overline{A})]\, \id+\alpha_2\,\skew \nabla \axl(\overline{A})\big),
\end{align}
which is also in general non-symmetric. Note that $\widetilde{m}$ has now 3 independent length scale parameters.
\item[ii)] {\bf [microstrain]} {\bf $ \boldsymbol{\sym \nabla u \mapsto \widehat{\varepsilon}\in {\rm Sym}(3)}$}.
 In the case of starting with the representation $\Curl(\sym \nabla u)$ we are led to introduce a symmetric tensor variable $\widehat{\varepsilon}\in{\rm Sym}(3)$ instead of $\sym \nabla u$, thus using the curvature measure $\Curl \widehat{\varepsilon}$ together with the coupling $$\varkappa^+\|\sym \nabla u-\widehat{\varepsilon}\|^2,$$ leading to a ``microstrain" theory \cite{Forest06,Neff_Forest07}, the minimization problem is now
 \begin{align*}
I(u,\widehat{\varepsilon})=\int_\Omega \bigg[&\mu\, \|{\rm sym} \nabla u\|^2+\frac{\lambda}{2}\, [\tr({\rm sym} \nabla u)]^2+\varkappa^+\|\sym \nabla u-\widehat{\varepsilon}\|^2\\&+\mu\,L_c^2\big(\beta_1\|\dev\sym \Curl \widehat{\varepsilon}\|^2+\beta_3\|\skew \Curl \widehat{\varepsilon}\|^2\big)\bigg] dV\ \ \rightarrow\ \ \text{min.}
\end{align*}
w.r.t $u\in H^1_0(\Omega), \ \widehat{\varepsilon}\in H_0^1(\Curl;\Omega)$, and, in this case, the force-stress tensor is symmetric
\begin{align}
\sigma=2\,\mu\, \sym\nabla u+2\,\varkappa^+\,(\sym\nabla u-\widehat{\varepsilon})+ \lambda\,\tr(\nabla u)\, \id\in {\rm Sym}(3),
\end{align}
and the hyperstress-tensor is given by
\begin{align}
\widehat{ {m}}=2\,\mu\,L_c^2\,\big(\beta_1\, \dev\sym \Curl \widehat{\varepsilon}+\beta_3\,\skew \Curl \widehat{\varepsilon}\big),
\end{align}
which is non-symmetric in general, depending on the material parameters. Note again that $\tr(\Curl \widehat{\varepsilon})=0$, thus the spherical part of the hyperstress tensor vanishes and $\widehat{m}$ features only 2 independent length scale parameters.
\item[iii)] {\bf[micromorphic]} {\bf $ \boldsymbol{\nabla u \mapsto p}$}.
 In this  case  we  may introduce a tensor $p\in\mathbb{R}^{3\times 3}$ instead of $\nabla u$,  and use the coupling $$\varkappa^+\|\nabla u-p\|^2$$ leading to a micromorphic  theory \cite{Eringen99,NeffGhibaMicroModel}, the minimization problem being
 \begin{align*}
I(u,p)=\int_\Omega \bigg[&\mu\, \|{\rm sym} \nabla u\|^2+\frac{\lambda}{2}\, [\tr({\rm sym} \nabla u)]^2+\varkappa^+\|\nabla u-p\|^2\\
&+\mu\,L_c^2\big(\gamma_1\|\dev\sym \Curl (\sym p)\|^2+\gamma_3\|\skew \Curl (\sym p)\|^2\big)\bigg] dV\ \ \rightarrow\ \ \text{min.}
\end{align*}
w.r.t $u\in H^1_0(\Omega), \ p\in H_0^1(\Curl;\Omega)$. We also point out that the force-stress tensor in this formulation will be
 non-symmetric
\begin{align}
\sigma&=2\,\mu\, \sym\nabla u+2\,\varkappa^+\,(\nabla u-P)+ \lambda\,\tr(\nabla u)\, \id\\
&=2\,\mu\, \sym\nabla u+2\,\varkappa^+\,\skw(\nabla u-P)+ 2\,\varkappa^+\,\sym(\nabla u-P)+\lambda\,\tr(\nabla u)\, \id\not\in {\rm Sym}(3)\notag
\end{align}
and the hyperstress tensor   (non-symmetric) is given by
\begin{align}
\widehat{ {m}}=2\,\mu\,L_c^2\,\big(\gamma_1\, \dev\sym \Curl (\sym p)+\gamma_3\,\skew \Curl (\sym p)\big).
\end{align}
In this formulation $\tr(\Curl \sym p)$ does not appear since $\tr(\Curl \sym p)=0, \forall p\in \mathbb{R}^{3\times3}$. Thus the spherical part of the hyperstress tensor vanishes, and $\widehat{m}$ features only 2 independent length scale parameters.
\item[iv)] {\bf [relaxed micromorphic]} for comparison with other extended continuum models we present the {\bf relaxed micromorphic model} \cite{NeffGhibaMicroModel,GhibaNeffExistence,MadeoNeffGhibaW,MadeoNeffGhibaWZAMM,NeffGhibaMadeoLazar}.
 In the relaxed micromorphic model, the minimization problem is of the type
 \begin{align*}
I(u,p)=\int_\Omega \bigg[&\mu\, \|{\rm sym} \nabla u\|^2+\frac{\lambda}{2}\, [\tr({\rm sym} \nabla u)]^2+\varkappa^+\,\|\sym(\nabla u-p)\|^2\\
&+L_c^2\big(\alpha_1\,\|\dev \sym\Curl p\|^2+\alpha_2\,\|\skew\Curl p\|^2+\alpha_3\,[\tr(\Curl p)]^2\big)\bigg] dV
\end{align*}
w.r.t $u\in H^1_0(\Omega), \ p\in H_0^1(\Curl;\Omega)$, and the corresponding force-stress tensor is symmetric
\begin{align}
\sigma=2\,\mu\, \sym\nabla u+2\,\varkappa^+\,\sym(\nabla u-p)+ \lambda\,\tr(\nabla u)\, \id\in{\rm Sym}(3)
\end{align}
and the hyperstress tensor is given by
\begin{align}
\widehat{ {m}}=2\,\mu\,L_c^2\,\big(\alpha_1\,\dev \sym\Curl p+\alpha_2\,\skew\Curl p+\alpha_3 \tr[\Curl p]\, \id\big),
\end{align}
with a non-vanishing spherical part of the hyperstress tensor. Note that $\widehat{m}$ has 3 independent material parameters.

\item[v)] we have also proposed  a  {\bf further relaxed micromorphic model} \cite{NeffGhibaMicroModel,GhibaNeffExistence,
    MadeoNeffGhibaW,MadeoNeffGhibaWZAMM,NeffGhibaMadeoLazar}, in which case
 the minimization problem is of the type \begin{align*}
I(u,p)=\int_\Omega \bigg[&\mu\, \|{\rm sym} \nabla u\|^2+\frac{\lambda}{2}\, [\tr({\rm sym} \nabla u)]^2+\varkappa^+\,\|\sym(\nabla u-p)\|^2\\
&+\mu\,L_c^2\big(\alpha_1\,\|\dev \sym\Curl p\|^2+\alpha_2\,\|\skew\Curl p\|^2\big)\bigg] dV
\end{align*}
w.r.t $u\in H^1_0(\Omega), \ p\in H_0^1(\Curl;\Omega)$,  the corresponding force-stress tensor is symmetric
\begin{align}
\sigma=2\,\mu\, \sym\nabla u+2\,\varkappa^+\,\sym(\nabla u-p)+ \lambda\,\tr(\nabla u)\, \id\in{\rm Sym}(3),
\end{align} and the hyperstress $\widehat{ {m}}$ is trace free
\begin{align}
\widehat{ {m}}=2\,\mu\,L_c^2\,\big(\alpha_1\,\dev \Curl p+\alpha_2\,\skew\Curl p\big).
\end{align}
The further relaxed micromorphic model remains well-posed \cite{GhibaNeffExistence}. A still weaker variant is v) with $\alpha_2=0$. Whether this choice is mathematically well-posed is yet unclear.
\end{itemize}

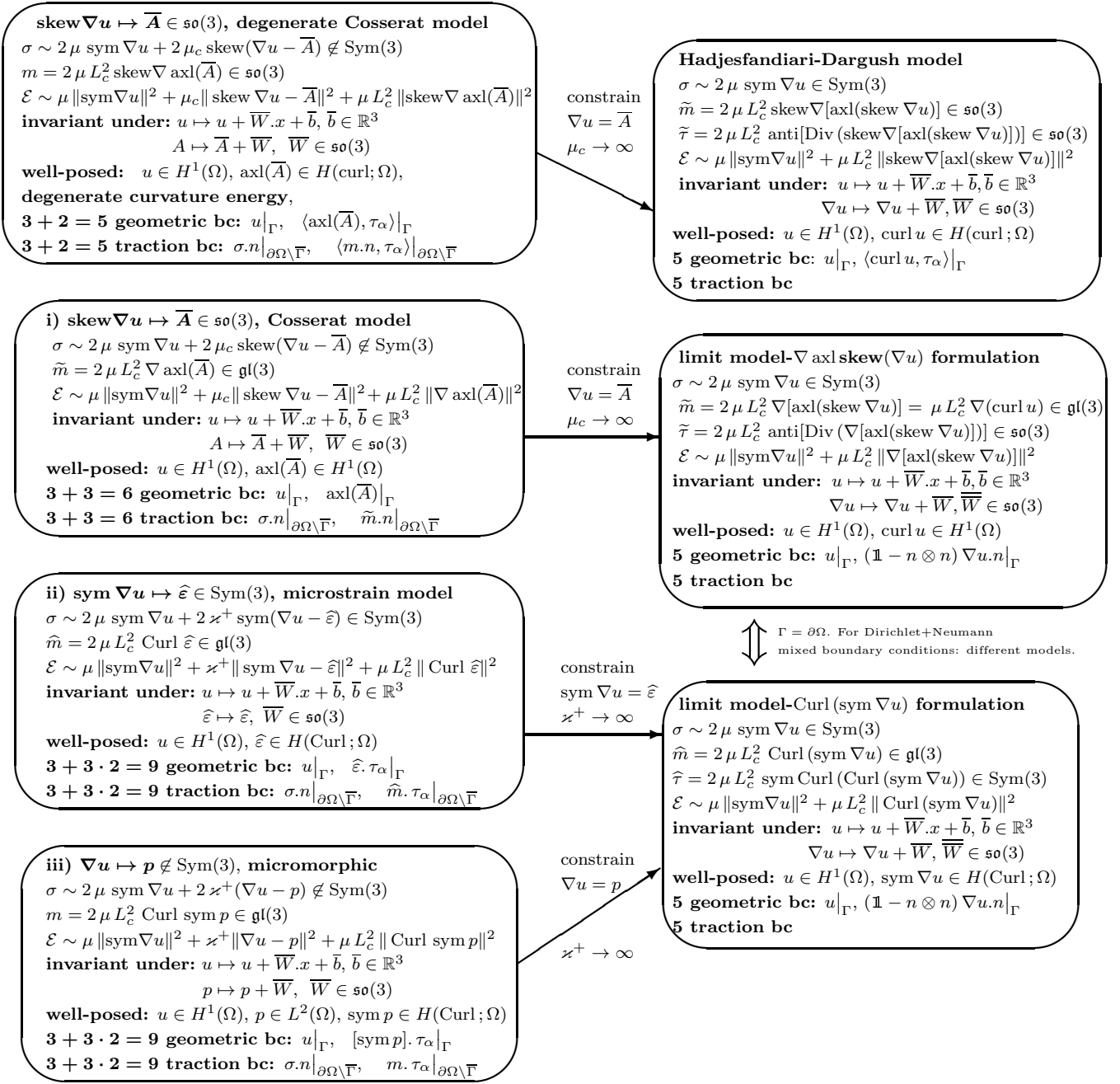
\begin{figure}
\setlength{\unitlength}{1mm}
\begin{center}
\begin{picture}(10,140)
\thicklines

\put(-40,147){\oval(86,42)}
\put(-81,164){\footnotesize{\bf \ \ \ $\boldsymbol{\skew \nabla u \mapsto \overline{A}}\in \so(3)$, degenerate Cosserat model }}
\put(-81,160){\footnotesize{ ${\sigma\sim 2\,\mu\, \sym\nabla u+2\, \mu_c \skw(\nabla u-\overline{A})\not\in {\rm Sym}(3)}$}}
\put(-81,156){\footnotesize{ $m=2\,\mu\,L_c^2\, \skew\nabla \axl(\overline{A})\in \so(3)$}}
\put(-81,152){\footnotesize{ $\mathcal{E}\sim\mu\,\|{\rm sym} \nabla u\|^2+\mu_c\|\skw \nabla u-\overline{A}\|^2+\mu\,L_c^2\,\|\skew\nabla \axl(\overline{A})\|^2$}}
\put(-81,148){\footnotesize{ \bf invariant under: \!$u\mapsto u+\overline{W}.x+\overline{b}, \, \overline{b}\in \mathbb{R}^3$}}
\put(-81,144){\footnotesize{\bf \qquad\qquad\quad \qquad\    $A\mapsto \overline{A}+{\overline{W}}, \,\,\,\overline{W}\in \so(3)$}}
\put(-81,140){\footnotesize{ {\bf well-posed: } $u\in H^1(\Omega)$, $\axl(\overline{A})\in H({\rm curl};\Omega)$, }}
\put(-81,136){\footnotesize{ {\bf degenerate curvature energy}, }}
\put(-81,132){\footnotesize{ \bf   $\boldsymbol{3+2=5}$ geometric bc: $u\big|_\Gamma, \ \  \langle \axl(\overline{A}),\tau_\alpha\rangle\big|_\Gamma$}}
\put(-81,128){\footnotesize{ \bf   $\boldsymbol{3+2=5}$ traction bc: $\sigma.n\big|_{\partial \Omega\setminus \overline{\Gamma}},\ \ \ \langle m.n,\tau_\alpha\rangle\big|_{\partial \Omega\setminus \overline{\Gamma}}$ }}

\put(3,144){\vector(2,-1){19}}
\put(8,148){\footnotesize{$\nabla u=\overline{A}$}}
\put(8,152){\footnotesize{constrain}}
\put(8,144){\footnotesize{$\mu_c\rightarrow\infty$}}

\put(-40,101){\oval(82,38)}
\put(-76,116){\footnotesize{\bf i) $\boldsymbol{\skew \nabla u \mapsto \overline{A}}\in \so(3)$, Cosserat model}}
\put(-76,112){\footnotesize{ ${\sigma\sim 2\,\mu\,\sym\nabla u+2\, \mu_c \skw(\nabla u-\overline{A})\not\in {\rm Sym}(3)}$}}
\put(-76,108){\footnotesize{ \bf $\widetilde{m}=2\,\mu\,L_c^2\, \nabla \axl(\overline{A})\in \gl(3)$}}
\put(-76,104){\footnotesize{ \bf $\mathcal{E}\sim\mu\,\|{\rm sym} \nabla u\|^2+\mu_c\|\skw \nabla u-\overline{A}\|^2\!+\mu\,L_c^2\,\|\nabla \axl(\overline{A})\|^2$}}
\put(-76,100){\footnotesize{ \bf invariant under: \!$u\mapsto u+\overline{W}.x+\overline{b}, \, \overline{b}\in \mathbb{R}^3$}}
\put(-76,96){\footnotesize{ \bf \qquad\qquad\quad \qquad\    $A\mapsto \overline{A}+{\overline{W}}, \,\,\,\overline{W}\in \so(3)$}}
\put(-76,92){\footnotesize{{\bf  well-posed:} $u\in H^1(\Omega)$, $\axl(\overline{A})\in H^1(\Omega)$ }}
\put(-76,88){\footnotesize{\bf $\boldsymbol{3+3=6}$ geometric bc: $u\big|_\Gamma, \ \   \axl(\overline{A})\big|_\Gamma$}}
\put(-76,84){\footnotesize{\bf $\boldsymbol{3+3=6}$ traction bc: $\sigma.n\big|_{\partial \Omega\setminus \overline{\Gamma}},\ \ \  \widetilde{m}.n\big|_{\partial \Omega\setminus \overline{\Gamma}}$}}

\put(-40,57){\oval(82,38)}
\put(-76,72){\footnotesize{\bf ii)  $ \boldsymbol{\sym\nabla u \mapsto \widehat{\varepsilon}}\in{\rm Sym}(3)$, microstrain model}}
\put(-76,68){\footnotesize{\bf ${\sigma\sim 2\,\mu\,\sym\nabla u+ 2\, \varkappa^+ \sym(\nabla u-\widehat{\varepsilon})\in {\rm Sym}(3)}$}}
\put(-76,64){\footnotesize{\bf $\widehat{m}=2\,\mu\,L_c^2\, \Curl \widehat{\varepsilon}\in \gl(3)$}}
\put(-76,60){\footnotesize{\bf $\mathcal{E}\sim\mu\,\|{\rm sym} \nabla u\|^2+\varkappa^+\|\sym \nabla u-\widehat{\varepsilon}\|^2+\mu\,L_c^2\,\|\Curl \widehat{\varepsilon}\|^2$}}
\put(-76,56){\footnotesize{\bf invariant under: \!$u\mapsto u+\overline{W}.x+\overline{b}, \,  \overline{b}\in \mathbb{R}^3$}}
\put(-76,52){\footnotesize{\bf \qquad\qquad\quad \qquad\    $\widehat{\varepsilon}\mapsto \widehat{\varepsilon}, \,\,\overline{W}\in \so(3)$}}
\put(-76,48){\footnotesize{{\bf  well-posed:} $u\in H^1(\Omega)$, $\widehat{\varepsilon}\in H(\Curl;\Omega)$  }}
\put(-76,44){\footnotesize{\bf  $\boldsymbol{3+3\cdot 2=9}$ geometric bc: $u\big|_\Gamma, \ \    \widehat{\varepsilon}.\,\tau_\alpha\big|_\Gamma$ }}
\put(-76,40){\footnotesize{\bf $\boldsymbol{3+3\cdot 2=9}$  traction bc: $\sigma.n\big|_{\partial \Omega\setminus \overline{\Gamma}},\ \ \ \widehat{m}.\,\tau_\alpha\big|_{\partial \Omega\setminus \overline{\Gamma}}$}}

\put(-40,13){\oval(80,38)}
\put(-76,28){\footnotesize{\bf iii) $ \boldsymbol{\nabla u \mapsto p}\not\in {\rm Sym}(3)$}, \textbf{micromorphic}}
\put(-76,24){\footnotesize{\bf ${\sigma\sim  2\,\mu\,\sym\nabla u+ 2\,\varkappa^+(\nabla u-p)\not\in {\rm Sym}(3)}$}}
\put(-76,20){\footnotesize{\bf $m=2\,\mu\,L_c^2\, \Curl \sym p\in \gl(3)$}}
\put(-76,16){\footnotesize{\bf $\mathcal{E}\sim\mu\,\|{\rm sym} \nabla u\|^2+\varkappa^+\|\nabla u-p\|^2+\mu\,L_c^2\,\|\Curl \sym p\|^2$}}
\put(-76,12){\footnotesize{\bf invariant under: \!$u\mapsto u+\overline{W}.x+\overline{b}, \,  \overline{b}\in \mathbb{R}^3$}}
\put(-76,8){\footnotesize{\bf \qquad\qquad\quad \qquad\    $p\mapsto p+{\overline{W}},\,\,\,\overline{W}\in \so(3) $}}
\put(-76,4){\footnotesize{{\bf  well-posed:} $u\in H^1(\Omega)$, $p\in L^2(\Omega)$, $\sym p\in H(\Curl;\Omega)$ }}
\put(-76,0){\footnotesize{\bf $\boldsymbol{3+3\cdot 2=9}$ geometric bc:  $u\big|_\Gamma, \ \    [\sym p].\,\tau_\alpha\big|_\Gamma$  }}
\put(-76,-4){\footnotesize{\bf $\boldsymbol{3+3\cdot 2=9}$  traction bc: $\sigma.n\big|_{\partial \Omega\setminus \overline{\Gamma}},\ \ \  m.\,\tau_\alpha\big|_{\partial \Omega\setminus \overline{\Gamma}}$}}

\put(58,141){\oval(72,42)}
\put(25,158){\footnotesize{ \bf Hadjesfandiari-Dargush model}}
\put(25,154){\footnotesize{ \bf $\sigma\sim 2\,\mu\, \sym\nabla u\in {\rm Sym}(3)$}}
\put(25,150){\footnotesize{ $\widetilde{ {m}}=2\,\mu\,L_c^2\,  \skew \nabla [\axl (\skw \nabla u)]\in\so(3)$}}
\put(25,146){\footnotesize{ $\widetilde{\tau}=2\,\mu\,L_c^2\,  \anti[\Div(\skew\nabla [\axl (\skw \nabla u)])]\in\so(3)$}}
\put(25,142){\footnotesize{ $\mathcal{E}\sim\mu\,\|{\rm sym} \nabla u\|^2+\mu\,L_c^2\,\|\skew \nabla [\axl (\skw \nabla u)]\|^2$}}
\put(25,138){\footnotesize{ \bf invariant under:  $u\mapsto u+\overline{W}.x+\overline{b},  \overline{b}\in \mathbb{R}^3$}}
\put(25,134){\footnotesize{ \bf \ \quad\quad\quad\quad \qquad\    $\nabla u\mapsto \nabla u+\overline{W}, {\overline{W}}\in \so(3)$}}
\put(25,130){\footnotesize{{\bf well-posed:}  $u\in H^1(\Omega)$,  $\curl u\in H(\curl;\Omega)$  }}
\put(25,126){\footnotesize{\bf 5 geometric bc}:  $u\big|_\Gamma$, $\langle \curl u,\tau_\alpha\rangle\big|_{\Gamma}$}
\put(25,122){\footnotesize{\bf  5 traction bc}}

\put(59,92.5){\oval(72,44)}
\put(26,110){\footnotesize{\bf limit model-$\nabla \axl \skew(\nabla u)$ formulation}}
\put(25,106){\footnotesize{\bf $\sigma\sim  2\,\mu\,\sym\nabla u\in {\rm Sym}(3)$}}
\put(25,102){\footnotesize{ $\widetilde{ {m}}=2\,\mu\,L_c^2\,  \nabla [\axl (\skw \nabla u)]=\,\mu\,L_c^2\,  \nabla (\curl u)\in\gl(3)$}}
\put(25,98){\footnotesize{ $\widetilde{\tau}=2\,\mu\,L_c^2\,  \anti[\Div(\nabla [\axl (\skw \nabla u)])]\in\so(3)$}}
\put(25,94){\footnotesize{ $\mathcal{E}\sim\mu\,\|{\rm sym} \nabla u\|^2+\mu\,L_c^2\,\|\nabla [\axl (\skw \nabla u)]\|^2$}}
\put(25,90){\footnotesize{\bf invariant under:  $u\mapsto u+\overline{W}.x+\overline{b},  \overline{b}\in \mathbb{R}^3$}}
\put(25,86){\footnotesize{\bf \qquad\qquad\quad \qquad\    $\nabla u\mapsto \nabla u+\overline{W}, \overline{\overline{W}}\in \so(3)$}}
\put(25,82){\footnotesize{{\bf well-posed:}  $u\in H^1(\Omega)$,  $\curl u\in H^1(\Omega)$  }}
\put(25,78){\footnotesize{\bf  5 geometric bc:} $u\big|_\Gamma$, $(\id-n\otimes n)\,\nabla u.n\big|_{\Gamma}$}
\put(25,74){\footnotesize{\bf   5 traction bc} }

\put(57,37){\oval(67,43)}
\put(26,54){\footnotesize{\bf limit model-$\Curl (\sym \nabla  u)$ formulation}}
\put(25,50){\footnotesize{\bf $\sigma\sim  2\,\mu\,\sym\nabla u\in {\rm Sym}(3)$}}
\put(25,46){\footnotesize{\bf $\widehat{ {m}}=2\,\mu\,L_c^2\, \Curl (\sym \nabla  u)\in\gl(3)$}}
\put(25,42){\footnotesize{\bf $\widehat{\tau}=2\,\mu\,L_c^2\, \sym \Curl (\Curl (\sym \nabla  u))\in {\rm Sym}(3)$}}
\put(25,38){\footnotesize{\bf $\mathcal{E}\sim\mu\,\|{\rm sym} \nabla u\|^2+\mu\,L_c^2\,\|\Curl(\sym \nabla u)\|^2$}}
\put(25,34){\footnotesize{\bf invariant under:  $u\mapsto u+\overline{W}.x+\overline{b}, \, \overline{b}\in \mathbb{R}^3$}}
\put(25,30){\footnotesize{\bf \qquad\qquad\quad \quad\    $\nabla u\mapsto \nabla u+\overline{W},\, \overline{\overline{W}}\in \so(3)$}}
\put(25,26){\footnotesize{{\bf well-posed:}  $u\in H^1(\Omega)$,  $\sym\nabla u\in H(\Curl;\Omega)$  }}
\put(25,22){\footnotesize{\bf  5 geometric bc:} $u\big|_\Gamma$, $(\id-n\otimes n)\,\nabla u.n\big|_{\Gamma}$}
\put(25,18){\footnotesize{\bf   5 traction bc} }

        \put(1,98){\vector(1,0){22}}
      \put(8,100){\footnotesize{$\mu_c\rightarrow\infty$}}
      \put(8,108){\footnotesize{constrain}}
      \put(8,104){\footnotesize{$\nabla u=\overline{A}$}}

      \put(7,60){\footnotesize{constrain}}
      \put(7,56){\footnotesize{$\sym \nabla u=\widehat{\varepsilon}$}}
      \put(7,52){\footnotesize{$\varkappa^+\rightarrow\infty$}}
      \put(1,50){\vector(1,0){22.5}}

       \put(36,63){\huge$ \Updownarrow$}
        \put(42,66){\tiny{$\Gamma=\partial \Omega$. For Dirichlet+Neumann  }}
        \put(42,63){\tiny{mixed boundary  conditions: different models.}}

 \put(7,29){\footnotesize{constrain}}
    \put(7,25){\footnotesize{$\nabla u=p$}}
      \put(7,14){\footnotesize{$\varkappa^+\rightarrow\infty$}}
      \put(0,13){\vector(3,2){23}}
\end{picture}
\end{center}
\caption{Different possibilities of lifting the variants of the 4th.-order indeterminate  couple stress model to a 2nd.-order micromorphic or Cosserat-type formulation formulation. In the penalty case, all the considered alternatives lead to the same limit model provided only geometric  boundary conditions are imposed. It is not surprising that the limit model has some peculiarities since the limit procedure is itself  singular.  Note that different micromorphic or Cosserat type formulations generate  different sets of boundary conditions. Here $\tau_\alpha$, $\alpha=1,2$ denote two independent tangential vectors on the boundary. }\label{limitmodel}
\end{figure}
\begin{figure}
\setlength{\unitlength}{1mm}
\begin{center}
\begin{picture}(0,0)
\thicklines
\put(-5,6){\oval(85,15)}
\put(-45,8){$\nabla [\axl (\skw \nabla u)]=[\Curl (\sym \nabla  u)]^T$ \quad \text{but}\qquad}
\put(-45,0){\quad$\nabla \axl(\skew\, p)\neq[\Curl \sym p]^T \qquad \text{for} \qquad p\neq \nabla u$}
\end{picture}
\end{center}
\caption{Integrability conditions.}\label{integral}
\end{figure}

\begin{figure}
\setlength{\unitlength}{1mm}
\begin{center}
\begin{picture}(10,150)
\thicklines

\put(-35,150){\oval(82,40)}
\put(-72,166){\footnotesize{\bf iv) the relaxed micromorphic model}}
\put(-72,162){\footnotesize{\bf $\sigma\sim\,2\,\mu\,\sym \nabla u+2\,\varkappa^+\sym(\nabla u-p)\in {\rm Sym}(3)$}}
\put(-72,158){\footnotesize{\bf $m=2\,\mu\,L_c^2\, \Curl p\in \gl(3)$}}
\put(-72,154){\footnotesize{\bf $\mathcal{E}\sim\mu\,\|{\rm sym} \nabla u\|^2+\varkappa^+\,\|\sym(\nabla u-p)\|^2+\mu\,L_c^2\,\|\Curl p\|^2$}}
\put(-72,150){\footnotesize{\bf invariant under: \!$u\mapsto u+\overline{W}.x+\overline{b}, \,  \overline{b}\in \mathbb{R}^3$}}
\put(-72,146){\footnotesize{\bf \qquad\qquad\quad \qquad\    $p\mapsto p+\overline{\overline{W}}, \ \ {\overline{W}}\neq \overline{\overline{W}},\,\,\,\overline{W}, \overline{\overline{W}}\in \so(3)$}}
\put(-72,142){\footnotesize{\bf well-posed:}  $u\in H^1(\Omega)$,  $p\in H(\Curl;\Omega)$ }
\put(-72,138){\footnotesize{\bf $\boldsymbol{3+3\cdot 2=9}$ geometric bc:}
$u\big|_\Gamma$, $p.\,\tau_\alpha \big|_{\Gamma}$}
\put(-72,134){\footnotesize{\bf $\boldsymbol{3+3\cdot 2=9}$ traction bc:}
$\sigma.n\,\big|_{\partial \Omega\setminus \overline{\Gamma}}$, $ {m}.\, \tau_\alpha\big|_{\partial \Omega\setminus \overline{\Gamma}}$}

\put(-35,104){\oval(86,38)}
\put(-74,120){\footnotesize{\bf v) the further relaxed micromorphic model}}
\put(-74,116){\footnotesize{\bf $\sigma\sim\,2\,\mu\,\sym \nabla u+2\,\varkappa^+\sym(\nabla u-p)\in {\rm Sym}(3)$}}
\put(-74,112){\footnotesize{\bf $m=2\,\mu\,L_c^2\, \dev\Curl p\in\mathfrak{sl}(3)$}}
\put(-74,108){\footnotesize{\bf $\mathcal{E}\sim\mu\,\|{\rm sym} \nabla u\|^2+\varkappa^+\,\|\sym(\nabla u-p)\|^2+\mu\,L_c^2\,\|\dev \Curl p\|^2$}}
\put(-74,104){\footnotesize{\bf invariant under: \!$u\mapsto u+\overline{W}.x+\overline{b}, \,  \overline{b}\in \mathbb{R}^3$}}
\put(-74,100){\footnotesize{\bf \qquad\qquad\quad \qquad\    $p\mapsto p+\overline{\overline{W}}, \ \ {\overline{W}}\neq \overline{\overline{W}},\,\,\,\overline{W}, \overline{\overline{W}}\in \so(3)$}}
\put(-74,96){\footnotesize{\bf well-posed:}  $u\in H^1(\Omega)$,  $p\in H(\Curl;\Omega)$ }
\put(-74,92){\footnotesize{\bf $\boldsymbol{3+3\cdot 2=9}$ geometric bc:}
$u\big|_\Gamma$, $p.\,\tau_\alpha \big|_{\Gamma}$}
\put(-74,88){\footnotesize{\bf $\boldsymbol{3+3\cdot 2=9}$ traction bc:}
$\sigma.n\,\big|_{\partial \Omega\setminus \overline{\Gamma}}$, $ {m}.\, \tau_\alpha\big|_{\partial \Omega\setminus \overline{\Gamma}}$}

\put(62,140){\oval(71,37)}
\put(30,154){\footnotesize{\bf limit model: incompatible linear elasticity}}
\put(30,150){\footnotesize{\bf $\sigma\sim \,2\,\mu\,\sym p\in {\rm Sym}(3)$}}
\put(30,146){\footnotesize{\bf $m=2\,\mu\,L_c^2\, \Curl p\in\gl(3)$}}
\put(30,142){\footnotesize{$\mathcal{E}\sim\mu\,\|{\rm sym}\, p\|^2+\mu\,L_c^2\,\|\Curl  p\|^2$}}
\put(30,138){\footnotesize{\bf invariant under:  $p\mapsto p+\overline{\overline{W}}, \, \overline{\overline{W}}\in \so(3)$}}
\put(30,134){\footnotesize{\bf well-posed:}    $p\in H(\Curl;\Omega)$ }
\put(30,130){\footnotesize{\bf $\boldsymbol{3\cdot 2=6}$ geometric bc:}
 $p.\,\tau_\alpha \big|_{\Gamma}$}
\put(30,126){\footnotesize{\bf $\boldsymbol{3\cdot 2=6}$ traction bc:}
 $ \Curl p.\, \tau_\alpha\big|_{\partial \Omega\setminus \overline{\Gamma}}$}

\put(61,90){\oval(67,35)}
\put(30,104){\footnotesize{\bf classical linear elasticity}}
\put(30,100){\footnotesize{\bf $\sigma\sim \,2\,\mu\,\sym \nabla u\in {\rm Sym}(3)$}}
\put(30,96){\footnotesize{$\mathcal{E}\sim\mu\,\|{\rm sym}\, \nabla u\|^2$}}
\put(30,92){\footnotesize{\bf invariant under:  $u\mapsto u+\overline{W}.x+\overline{b}$}}
\put(30,88){\footnotesize{\bf \qquad\qquad \qquad\ \    $\nabla u\mapsto \nabla u+\overline{W}, \overline{W}\in\so(3)$}}
\put(30,84){\footnotesize{\bf well-posed:}  $u\in H^1(\Omega)$,  $\nabla u\in H(\Curl;\Omega)$ }
\put(30,80){\footnotesize{\bf $3$ geometric bc:}
$u\big|_\Gamma$}
\put(30,76){\footnotesize{\bf $3$ traction bc:}
$\sigma.n\,\big|_{\partial \Omega\setminus \overline{\Gamma}}$}

\put(12,151){\tiny{constrain}}
 \put(8,148){\tiny{$\sym\nabla u=\sym p$}}
      \put(12,142){\footnotesize{$\varkappa^+\rightarrow\infty$}}
      \put(6,145){\vector(1,0){20}}
      \put(62,115){\footnotesize{$L_c\rightarrow\infty$}}
      \put(60,121){\vector(0,-1){13}}

      \put(61,42){\oval(67,37)}
\put(30,56){\footnotesize{\bf limit model: incompatible linear elasticity}}
\put(30,52){\footnotesize{\bf $\sigma\sim \,2\,\mu\,\sym p\in {\rm Sym}(3)$}}
\put(30,48){\footnotesize{\bf $m=2\,\mu\,L_c^2\, \sym \Curl p\in\gl(3)$}}
\put(30,44){\footnotesize{$\mathcal{E}\sim\mu\,\|{\rm sym}\, p\|^2+\mu\,L_c^2\,\|\sym \Curl  p\|^2$}}
\put(30,40){\footnotesize{\bf invariant under:  $p\mapsto p+\overline{\overline{W}}, \, \overline{\overline{W}}\in \so(3)$}}
  \put(30,36){\footnotesize{\bf well-posed:}    $p\in H(\Curl;\Omega)$ }
\put(30,32){\footnotesize{\bf $3\cdot 2=6$ geometric bc:}
 $p.\,\tau_\alpha \big|_{\Gamma}$}
\put(30,28){\footnotesize{\bf $3\cdot 2=6$ traction bc:}
 $ {m}.\, \tau_\alpha\big|_{\partial \Omega\setminus \overline{\Gamma}}$}

      \put(61,60.5){\vector(0,1){12}}
\put(63,66){\footnotesize{$L_c\rightarrow\infty$}}

\put(18,66){\tiny{constrain}}
  \put(12,70){\vector(1,-2){15.5}}
 \put(18,62){\tiny{$\sym\nabla u=\sym p$}}

      \put(12,142){\footnotesize{$\varkappa^+\rightarrow\infty$}}

\put(-33,61){\oval(90,39)}
\put(-74,76){\footnotesize{\bf vi) another relaxed micromorphic model}}
\put(-74,72){\footnotesize{\bf $\sigma\sim\,2\,\mu\,\sym \nabla u+2\,\varkappa^+\sym(\nabla u-p)\in {\rm Sym}(3)$}}
\put(-74,68){\footnotesize{\bf $m=2\,\mu\,L_c^2\,  \sym \Curl p\in {\rm Sym}(3)$, symmetric}}
\put(-74,64){\footnotesize{\bf $\mathcal{E}\sim\mu\,\|{\rm sym} \nabla u\|^2+\varkappa^+\,\|\sym(\nabla u-p)\|^2+\mu\,L_c^2\,\|\sym \Curl p\|^2$}}
\put(-74,60){\footnotesize{\bf invariant under: \!$u\mapsto u+\overline{W}.x+\overline{b}, \,  \overline{b}\in \mathbb{R}^3$}}
\put(-74,56){\footnotesize{\bf \qquad\qquad\quad \qquad\    $p\mapsto p+\overline{\overline{W}}, \ \ {\overline{W}}\neq \overline{\overline{W}},\,\,\,\overline{W}, \overline{\overline{W}}\in \so(3)$}}
\put(-74,52){\footnotesize{\bf well-posedness not clear :}  $u\in H^1(\Omega)$,  $p\in H(\Curl;\Omega)$  }
\put(-74,48){\footnotesize{\bf $\boldsymbol{3+3\cdot 2=9}$ geometric bc:}
$u\big|_\Gamma$, $p.\,\tau_\alpha \big|_{\Gamma}$}
\put(-74,44){\footnotesize{\bf $\boldsymbol{3+3\cdot 2=9}$ traction bc:}
$\sigma.n\,\big|_{\partial \Omega\setminus \overline{\Gamma}}$, $ {m}.\, \tau_\alpha\big|_{\partial \Omega\setminus \overline{\Gamma}}$}

\put(-28,15){\oval(105,41)}
\put(-77,32){\footnotesize{\bf vii)  relaxed micromorphic model with integral boundary coupling}}
\put(-77,28){\footnotesize{\bf $\sigma\sim\,2\,\mu\,\sym \nabla u+2\,\varkappa^+\sym(\nabla u-p)\in {\rm Sym}(3)$}}
\put(-77,24){\footnotesize{\bf $m=2\,\mu\,L_c^2\,  \Curl p\in\gl(3)$}}
\put(-77,20){\footnotesize{\bf $\mathcal{E}\sim\mu\,\|{\rm sym} \nabla u\|^2+\varkappa^+\,\|\sym(\nabla u-p)\|^2+\mu\,L_c^2\,\|\Curl p\|^2$}}
\put(-25,16){\footnotesize{\bf $+\int_\Gamma \|(\nabla u-p)\times n\|^2 da$}}
\put(-77,12){\footnotesize{\bf invariant under: \!$u\mapsto u+\overline{W}.x+\overline{b}, \,  \overline{b}\in \mathbb{R}^3$}}
\put(-77,8){\footnotesize{\bf \qquad\qquad\quad \qquad\    $p\mapsto p+\overline{\overline{W}}, \ \ {\overline{W}}\neq \overline{\overline{W}},\,\,\,\overline{W}, \overline{\overline{W}}\in \so(3)$}}
\put(-77,4){\footnotesize{\bf well-posed:}  $u\in H^1(\Omega)$,  $p\in H(\Curl;\Omega)$  }
\put(-77,0){\footnotesize{\bf $\boldsymbol{3}$ geometric bc:}
$u\big|_\Gamma$}
\put(-77,-4){\footnotesize{\bf $\boldsymbol{3+3\cdot 2=9}$ traction bc:}
$\sigma.n\,\big|_{\partial \Omega\setminus \overline{\Gamma}}$, $ {m}.\, \tau_\alpha\big|_{\partial \Omega\setminus \overline{\Gamma}}$}

\end{picture}
\end{center}
\caption{For comparison, the same situation as in Fig. \ref{limitmodel} for the relaxed micromorphic model  and the further relaxed micromorphic model \cite{NeffGhibaMicroModel}. The first two micromorphic formulations are well-posed. In our view these models are advantageous as compared to the models in Fig. \ref{limitmodel}. Note also that the boundary conditions for the new microdistortion field $p$ has 6 degrees of freedom, the only physical and mathematical possible choice is $p\times n\big|_{\Gamma}=0$ on $\Gamma$, where Dirichlet-boundary conditions $u\big|_\Gamma=0$ are prescribed. In the case of non-homogeneous (non-zero) boundary prescription $u$ on $\Gamma$, we would need to modify the total  energy by adding  as a boundary term $\int_\Gamma \|(\nabla u-p)\times n\|^2 ds$. This will introduce a certain coupling at the boundary. The model in this  format is still well-posed and awaits to be further investigated. A  nonlinear modification of this model is investigated in \cite{NeffLankeitOsterbrink}. }\label{limitmodel2}
\end{figure}
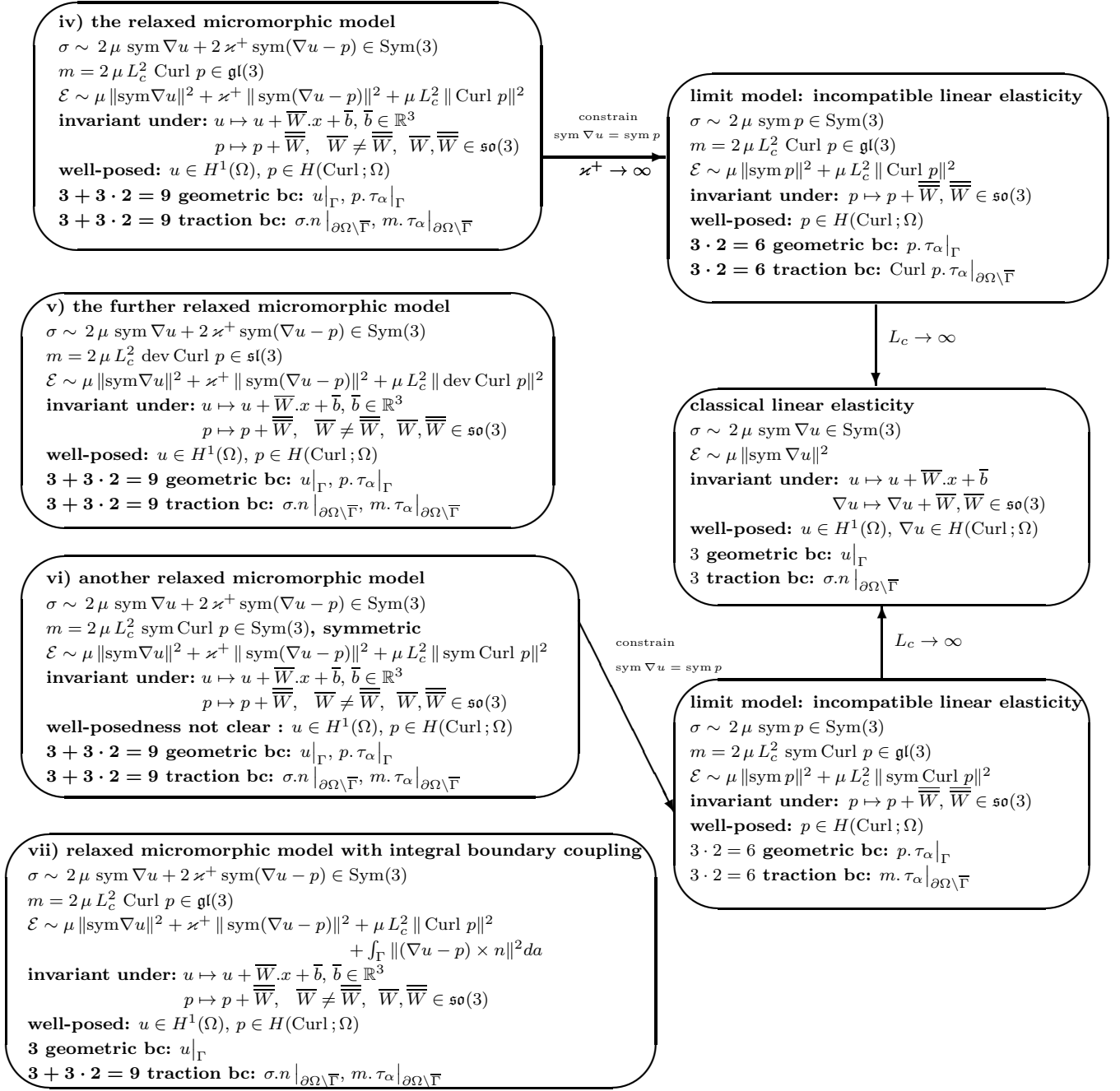

\section{Conclusion}

Our new symmetric-conformal $\Curl(\sym \nabla u)$-reformulation has the following crucial properties setting it apart  from existing formulations of couple-stress models:
\begin{itemize}
\item the local and the nonlocal force stress tensors ($\sigma$, $\widehat{\tau}$) are both symmetric, while the couple stress tensor $\widehat{m}$ is symmetric in the conformally-invariant model.
\item the curvature energy is conformally invariant and the couple stress tensor $\widehat{m}$ vanishes for conformal displacement.
\item the model has only one additional length scale parameter, similar to the modified couple stress model.
\item the model is derived with consistent boundary conditions: either 5 geometrical conditions or 5 mechanical (traction) conditions. The mechanical conditions are separated into force stress tractions and couple stress tractions and correspond to completely independent boundary conditions.
\item for mixed Dirichlet-Neumann boundary conditions the model does not reduce to the modified indeterminate couple stress model.
\end{itemize}

\begin{figure}
\setlength{\unitlength}{1mm}
\begin{center}
\begin{picture}(10,120)
\thicklines
\put(4,134){\oval(32,10)}
\put(-10,135){\footnotesize{\bf {\ Size effects}}}
\put(-10,131){\footnotesize{\bf {"Smaller is stiffer"}}}
\put(20,135){\vector(1,0){8.5}}
\put(5,129){\vector(0,-1){9}}

\put(-58,134){\oval(57,18)}
\put(-84,137){\footnotesize{\bf{Not only rotational interaction $D^2u$}}}
\put(-84,133){\footnotesize{\bf {gradient elasticity: $W_{\rm curv}(D^2 u)$}}}
\put(-84,129){\footnotesize{\bf { $\mathfrak{m}=D_{D^2 u}W_{\rm curv}(D^2 u)$,  third order }}}

\put(-12,135){\vector(-1,0){17.5}}
\put(64,134){\oval(71,17)}
\put(31,137){\footnotesize{\bf{Strain gradient elasticity}}}
\put(31,133){\footnotesize{\bf { $W_{\rm curv}(\nabla\, (\sym   \nabla   u))$}}}
\put(31,129){\footnotesize{\bf { $\mathfrak{m}=D_{\nabla\, (\sym   \nabla   u)}W_{\rm curv}(\nabla\, (\sym   \nabla   u))$,  third order }}}

\put(-90,116){\footnotesize{\bf \framebox[14.5cm][l]{\ Only rotational interaction through continuum rotation $W_{\rm curv}(\nabla (\curl u))=W_{\rm curv}(\Curl(\sym\nabla u))$}}}
\put(-60,108){\footnotesize{ $\widetilde{k}=\nabla [\axl (\skw \nabla u)]=\frac{1}{2}\nabla\curl u$}}

\put(77,117){\oval(40,9)}
\put(60,119){\tiny{\bf Cauchy-Boltzmann axiom}}
\put(60,117){\tiny{ $\boldsymbol{\sigma_{\rm total}=\sigma_{\rm total}^T}$}}
\put(60,114){\tiny{\bf purely mechanical interaction}}
\put(71,125.5){\vector(0,-1){3.5}}
\put(71,112.5){\vector(0,-1){11}}

\put(5,107){\oval(27,9)}
\put(0,107){\footnotesize{ $\widetilde{k}=\widehat{k}^T$}}
\put(-7,104){\tiny{ integrability condition}}

\put(38,108){\footnotesize{\bf $\widehat{k}=\Curl (\sym\nabla u)$}}
\put(-10,114){\vector(-1,-1){10}}
\put(10,114){\vector(4,-1){49}}

\put(-32,97){\oval(49,14)}
\put(-55,100){\footnotesize{\bf{ polar gradient elasticity}}}
\put(-55,96){\footnotesize{\bf{ indeterminate couple stress}}}
\put(-55,92){\footnotesize{\bf{ $\widetilde{ {m}}=D_{\widetilde{k}} W_{\rm curv}(\widetilde{k})$,  second order }}}
\put(-15,89.5){\vector(0,-1){52.5}}
\put(-9.5,91.5){\vector(4,-1){15}}
\put(-54.5,91.5){\vector(-4,-1){15}}

\put(68,96){\oval(47,11)}
\put(-7.2,97){\vector(1,0){51.2}}
\put(45,97){\footnotesize{\bf{ non-polar gradient elasticity}}}
\put(45,93){\footnotesize{\bf{ $\widehat{ {m}}=D_{\widehat{k}} W_{\rm curv}(\widehat{k})$,  second order}}}
\put(2,98){\tiny{\bf difference of formulation }}
\put(2,95){\tiny{\bf is a Null-Lagrangian in}}
\put(2,93){\tiny{\bf the balance of linear momentum}}
\put(6,91){\tiny{\bf and $\sym \widetilde{m}=\sym \widehat{m}$}}
\put(60,90.5){\vector(0,-1){3.5}}

\put(-80,84){\footnotesize{\bf\framebox[6cm][l]{ Grioli-Toupin-Koiter-Mindlin}}}
\put(-80,78){\footnotesize{$\mu\,\|\sym\nabla u\|^2$}}
\put(-80,74){\footnotesize{$+\mu\,L_c^2\,\|\nabla \axl(\skw\nabla u)\|^2$}}
\put(-80,70){\footnotesize{2 curvature parameters}}
\put(-80,66){\footnotesize{couple stress tensor $\widetilde{ {m}}\in\gl(3)$}}
\put(-80,62){\footnotesize{nonlocal force-stress $\widetilde{\tau}\in\so(3)$ }}
\put(-80,58){\footnotesize{unique solution}}
\put(-80,54){\footnotesize{extendable to a}}
\put(-80,50){\footnotesize{Cosserat formulation}}
\put(-80,46){\footnotesize{$\axl (\overline{A})\in H^1(\Omega)$}}

\put(-35,33){\oval(49,7)}
\put(-56,34){\footnotesize{\bf {Modified couple stress model}}}
\put(-56,30){\footnotesize{\bf {conformally invariant}}}
\put(-56,26){\footnotesize{$\mu\,\|\sym\nabla u\|^2$}}
\put(-56,22){\footnotesize{$+\mu\,L_c^2\|\dev\sym\nabla [\axl (\skw \nabla u)]\|^2$}}
\put(-56,18){\footnotesize{1 curvature parameter}}
\put(-56,14){\footnotesize{couple stress tensor $\widetilde{{m}}\in{\rm Sym}(3)$}}
\put(-56,10){\footnotesize{nonlocal force-stress $\widetilde{\tau}\in\so(3)$}}
\put(-56,6){\footnotesize{unique solution}}
\put(-56,2){\footnotesize{extendable to a}}
\put(-56,-2){\footnotesize{Cosserat formulation}}
\put(-56,-6){\footnotesize{$\axl (\overline{A})\in H^1(\Omega)$}}

\put(-7,84){\footnotesize{\bf \framebox[5cm][l]{Hadjesfandiari-Dargush}}}
\put(-7,78){\footnotesize{$\mu\,\|\sym\nabla u\|^2$}}
\put(-7,74){\footnotesize{$+\mu\,L_c^2\,\|\skw\nabla [\axl (\skw \nabla u)]\|^2$}}
\put(-7,70){\footnotesize{1 curvature parameter}}
\put(-7,66){\footnotesize{couple stress  tensor $\widetilde{ {m}}\in\so(3)$}}
\put(-7,62){\footnotesize{nonlocal force-stress $\widetilde{\tau}\in\so(3)$}}
\put(-7,58){\footnotesize{unique solution}}
\put(-7,54){\footnotesize{extendable to a}}
\put(-7,50){\footnotesize{Cosserat formulation}}
\put(-7,46){\footnotesize{$\axl (\overline{A})\in H(\curl;\Omega)$}}

\put(46,84){\footnotesize{\bf \framebox[4.2cm][l]{$\boldsymbol{\Curl}$-formulation}}}
\put(46,78){\footnotesize{$\mu\,\|\sym\nabla u\|^2$}}
\put(46,74){\footnotesize{$+\mu\,L_c^2\,\|\Curl(\sym\nabla  u)\|^2$}}
\put(46,70){\footnotesize{2 curvature parameters}}
\put(46,66){\footnotesize{couple stress  tensor $\widehat{ {m}}\in\gl(3)$}}
\put(46,62){\footnotesize{nonlocal force-stress $\widehat{\tau}\in{\rm Sym}(3)$}}
\put(46,58){\footnotesize{unique solution}}
\put(46,54){\footnotesize{extendable to a}}
\put(46,50){\footnotesize{microstrain formulation}}
\put(46,46){\footnotesize{$\widehat{\varepsilon}\in H(\Curl;\Omega)$}}
\put(60,45){\vector(0,-1){8}}

\put(-12,14.5){\vector(1,0){56}}
\put(10,14.5){\vector(-1,0){22}}
\put(8,20){\tiny{$\widetilde{ {m}}=\widehat{ {m}}$ for $\alpha_2=0$}}
\put(6,23){\tiny{$\widetilde{ {m}}=-\widehat{ {m}}$ for $\alpha_1=0$}}
\put(0,29){\tiny{$\widetilde{m}=\mu\,L_c^2\, \{\alpha_1\, \dev\sym {\nabla}[{\axl(\skw} \nabla u)]$}}
\put(0,26){\tiny{$\qquad\quad\quad\ \ \  + \alpha_2\,\skw\nabla [\axl (\skw \nabla u)]\}$}}
\put(0,35){\tiny{$\widehat{m}=\mu\,L_c^2\, \{2\, \alpha_1 \dev\sym{\Curl} ({\sym} \nabla u)$}}
\put(0,33){\tiny{$\qquad\quad\quad\ \ \ +2\,\alpha_2\,   \skw\Curl (\sym \nabla u)\}$}}

\put(-12,10){\vector(1,0){56}}
\put(10,10){\vector(-1,0){22}}
\put(17,9){\footnotesize{$\boldsymbol{/}$}}
\put(6,6){\tiny{$\widetilde{ {\tau}}\neq\widehat{ {\tau}}$}}
\put(6,3){\tiny{$\widetilde{ {\tau}}:=\anti{\rm Div}\, \widetilde{m}\in\mathfrak{so}(3)$}}
\put(6,0){\tiny{$\widehat{ {\tau}}:=\sym\Curl\widehat{m}\in{\rm Sym}(3)$}}

\put(71,33){\oval(52,5)}
\put(46,32){\footnotesize{\bf {new conformally-invariant model}}}
\put(46,26){\footnotesize{$\mu\,\|\sym\nabla u\|^2$}}
\put(46,22){\footnotesize{$\ \ +\mu\,L_c^2\,\|\dev\sym\Curl(\sym\nabla  u)\|^2$}}
\put(46,18){\footnotesize{1 curvature parameter}}
\put(46,14){\footnotesize{couple stress  tensor $\widehat{ {m}}\in\Sym(3)$}}
\put(46,10){\footnotesize{nonlocal force-stress $\widehat{\tau}\in{\rm Sym}(3)$}}
\put(46,6){\footnotesize{unique solution, well posed}}
\put(46,2){\footnotesize{extendable to a}}
\put(46,-2){\footnotesize{microstrain formulation}}
\put(46,-6){\footnotesize{$\widehat{\varepsilon}\in H(\Curl;\Omega)$}}

\end{picture}
\end{center}
\caption{Some models involving size effects and their interrelation. The given energy expressions are only meant to represent the main features of the models: $\mu\,\|\sym \nabla u\|^2$ represents the linear elastic energy which may be anisotropic $\langle \mathbb{C}.\sym \nabla u,\sym\nabla u\rangle$, while $\mu\,L_c^2\|\nabla [\axl (\skw \nabla u)]\|^2$ represents the curvature energy, which could also be anisotropic.}\label{figsize}
\end{figure}

\begin{figure}
\setlength{\unitlength}{1mm}
\begin{center}
\begin{picture}(10,23)
\thicklines

\put(3,16){\oval(170,40)}

\put(-80,30){$\widehat{\tau}=\mu\,L_c^2\,\underbrace{\sym}_{} \underbrace{\Curl}_{} \, \{2\, \alpha_1 \dev\sym\underbrace{\Curl}_{} (\underbrace{\sym}_{} \nabla u)+2\,\alpha_2\,   \skw\Curl (\sym \nabla u)\}=\sym\Curl(\widehat{m})\in {\rm Sym}(3)$}

\put(-16,21){\line(-1,0){46.4}}
\put(-16,21){\line(0,1){4.6}}
\put(-62.3,21){\vector(0,1){5.1}}

\put(-24.5,24){\line(-1,0){30.2}}
\put(-24.5,24){\line(0,1){2.6}}
\put(-54.5,24){\vector(0,1){2.6}}

\put(-80,15){$\widetilde{\tau}=\mu\,L_c^2\,\,\underbrace{\anti}_{}
\underbrace{{\rm Div}}_{}\{\alpha_1\, \dev\sym \underbrace{\nabla}_{} \underbrace{\axl(\skw}_{} \nabla u)+ \alpha_2\,\skw\nabla [\axl (\skw \nabla u)]\}=\anti {
\rm Div}(\widetilde{m})\in\so(3)$}

\put(-17.5,7){\line(-1,0){44.6}}
\put(-17.5,11.5){\line(0,-1){4.6}}
\put(-62,7){\vector(0,1){5.1}}

\put(-28.2,10){\line(-1,0){27}}
\put(-28.2,10){\line(0,1){2.6}}
\put(-55,10){\vector(0,1){2.6}}

\put(-10,0){${\rm Div}(\widehat{\tau}-\widetilde{\tau})=0$}
\end{picture}
\end{center}
\caption{The two possibilities of defining a nonlocal force stress tensor: either $\widehat{\tau}$ is symmetric in the $\Curl(\sym \nabla u)$ formulation or $\widetilde{\tau}$ is antisymmetric in the $\nabla [\axl (\skw \nabla u)]$ formulation. The difference between both stresses is a divergence-free stress field (a self-equilibrated force field). }\label{limitmodel2tau}
\end{figure}
The energies in both possible formulations (in terms of $\nabla [\axl (\skw \nabla u)]$  or $\Curl(\sym \nabla u)$)  are the same, differences appear only once traction boundary conditions are specified. The need for prescribing this or that boundary conditions determines which model should be used.

In a polar gradient elasticity model we could influence directly continuum rotations without prescribing $u\big|_{\Gamma}=0$. But this should only be possible in a theory which extends beyond mechanics: for example to magnetic or electric effects, i.e. needed for particular loading and boundary conditions which excite particular micro-rotations (``polarization"). In contrast, in a non-polar elasticity model it is not possible to influence directly continuum rotations but a non-polar model is applicable and much more appropriate in a purely mechanical context (see Figure \ref{figsize}).  The case iv) in Fig. \ref{limitmodel2} needs mathematical discussion. The extension of the well-posedness  to the finite strain case in which the corresponding Lagrangian may be written as $W=W(U)+W_{\rm curv}(U,\Curl U)$, where $F=R\,U$ is the polar decomposition is yet missing. Some steps in this direction are presented in \cite{NeffLankeitOsterbrink}.

\section{Epilogue: Much ado about nothing}
We have seen how  much effort it took us to derive the consistent boundary conditions in the indeterminate couple stress model. The conceptual advantage of not having to discuss the physical meaning of independent degrees of freedom is, now,  more than outweighed by the burdensome interpretation of traction boundary conditions. Nevertheless, all presented formulations are shown to be mathematically well-posed. In the last part of the paper we have had a look at 2nd-order (micromorphic) approximations of the given gradient elastic models. In these micromorphic models, the boundary conditions are completely transparent. However, it seems that in this larger class of models there is yet another variant (the relaxed micromorphic model with integral boundary coupling) which combines conceptual simplicity, symmetry of force stress tensor and symmetry of moment stress tensor, simplicity of traction boundary conditions and well-posedness to make it superior to all other presented formulations. With hindsight,  we understand why the indeterminate couple stress model had been abandoned in the late '60ies.  For us it is a mystery how it was  possible at all to identify material parameters in a theory in which boundary conditions had not been  conclusively settled?

\section*{Acknowledgement} We are grateful to Ali Reza Hadjesfandiari and Gary F. Dargush for sending us the paper \cite{hadjesfandiari2014evo} prior to publication. Discussions with X.L. Gao and  S. Forest  on  a prior version of the paper have been helpful. The ideas for this paper have been discussed at the 50th Annual Technical Meeting/ASME-AMD Annual Summer Meeting, July 2013, at the Brown School of Engineering  with David J. Steigmann and Francesco  dell'Isola. I.D. Ghiba acknowledges support from the Romanian National Authority for Scientific Research (CNCS-UEFISCDI), Project No. PN-II-ID-PCE-2011-3-0521.

\bibliographystyle{plain} 
\addcontentsline{toc}{section}{References}

\begin{footnotesize}

\end{footnotesize}

\appendix
\setcounter{section}{0}

\begin{footnotesize}

\section{The traction boundary conditions in the $\boldsymbol{\Curl(\sym \nabla u)}$-formulation and in the $\nabla[\axl(\skew \nabla u)]$-formulation are different}\label{notthesamebc}
\setcounter{equation}{0}

In this section we prove the claim from Subsection  \eqref{curlaxltractform}, i.e. we show that the possible traction boundary conditions in  the $\nabla [\axl(\skew\nabla u)]$-formulation and the $\Curl (\sym \nabla  u)$-formulation are different.

We consider a point $P$ at the boundary and we show that  $(2)\neq (2^\prime)$ in this point.  Without confining, we  consider that the system of coordinates is initially  chosen such that the normal vector on the boundary at this point $P$ is $n=e_1:=(1,0,0)$.  Since there are no derivatives in $(2)$ and $(2^\prime)$, it is enough to prove that
\begin{align}
(\id-e_1\otimes e_1)(\sym \widehat{M}).e_1\neq \frac{1}{2}(\id-e_1\otimes e_1)\anti(\widetilde{ {m}}.\, e_1).e_1.
\end{align}
On one hand, we have
\begin{align}
\sym \widehat{M}.e_1&=\sym\left(
                   \begin{array}{c}
                     (\widehat{m}_{11}e_1+\widehat{m}_{12}e_2+\widehat{m}_{13}e_3)\times e_1 \\
                      (\widehat{m}_{21}e_1+\widehat{m}_{22}e_2+\widehat{m}_{23}e_3)\times e_1 \\
                      (\widehat{m}_{31}e_1+\widehat{m}_{32}e_2+\widehat{m}_{33}e_3)\times e_1
                   \end{array}
                 \right).\, e_1=\sym\left(
                   \begin{array}{c}
                     -\widehat{m}_{12}e_3+\widehat{m}_{13}e_2 \\
                       -\widehat{m}_{22}e_3+\widehat{m}_{23}e_2 \\
                       -\widehat{m}_{32}e_3+\widehat{m}_{33}e_2
                   \end{array}
                 \right).\, e_1\\
                 &=\sym\left(
                   \begin{array}{ccc}
                     0&\widehat{m}_{13}&-\widehat{m}_{12} \\
                        0&\widehat{m}_{23}&-\widehat{m}_{22} \\
                       0&\widehat{m}_{33}&-\widehat{m}_{32}
                   \end{array}
                 \right).\, e_1
                 =\frac{1}{2}\left(
                   \begin{array}{ccc}
                     0&\widehat{m}_{13}&-\widehat{m}_{12} \\
                        \widehat{m}_{13}&\widehat{m}_{23}&-\widehat{m}_{22}+\widehat{m}_{33} \\
                       -\widehat{m}_{12}&-\widehat{m}_{22}+\widehat{m}_{33}&-\widehat{m}_{32}
                   \end{array}
                 \right).\, e_1
                 =\frac{1}{2}\left(
                   \begin{array}{c}
                     0\\
                        \widehat{m}_{13} \\
                       -\widehat{m}_{12}\\
                   \end{array}
                 \right),\notag
\end{align}
and therefore
\begin{align}
(\id-e_1\otimes e_1)(\sym \widehat{M}).e_1=\left(
                                             \begin{array}{ccc}
                                               0 & 0 & 0 \\
                                               0 & 1 & 0 \\
                                               0 & 0 & 1 \\
                                             \end{array}
                                           \right)
\frac{1}{2}\left(
                   \begin{array}{c}
                     0\\
                        \widehat{m}_{13} \\
                       -\widehat{m}_{12}\\
                   \end{array}
                 \right)=\frac{1}{2}\left(
                   \begin{array}{c}
                     0\\
                        \widehat{m}_{13} \\
                       -\widehat{m}_{12}\\
                   \end{array}
                 \right).
\end{align}

On the other hand, we obtain
\begin{align}
\anti(\widetilde{m}.\, e_1).e_1=\anti(\widetilde{m}_{11}, \widetilde{m}_{21},\widetilde{m}_{31}).e_1=\left(
                                                                      \begin{array}{ccc}
                                                                        0 & \widetilde{m}_{31} & -\widetilde{m}_{21} \\
                                                                        -\widetilde{m}_{31} & 0 & \widetilde{m}_{11} \\
                                                                        \widetilde{m}_{21} & -\widetilde{m}_{11} & 0
                                                                      \end{array}
                                                                    \right).e_1=\left(
                                                                      \begin{array}{c}
                                                                        0 \\
                                                                        -\widetilde{m}_{31}  \\
                                                                        \widetilde{m}_{21}
                                                                      \end{array}
                                                                    \right)\,.
\end{align}
Hence, we deduce
\begin{align}
(\id-e_1\otimes e_1)\anti(\widetilde{m}.\, e_1).e_1=\left(
                                             \begin{array}{ccc}
                                               0 & 0 & 0 \\
                                               0 & 1 & 0 \\
                                               0 & 0 & 1 \\
                                             \end{array}
                                           \right)\left(
                                                                      \begin{array}{c}
                                                                        0 \\
                                                                        -\widetilde{m}_{31}  \\
                                                                        \widetilde{m}_{21}
                                                                      \end{array}
                                                                    \right)=\left(\begin{array}{c}
                                                                        0 \\
                                                                        -\widetilde{m}_{31}  \\
                                                                        \widetilde{m}_{21}
                                                                      \end{array}
                                                                    \right).
\end{align}
We may conclude that $(2)=(2^\prime)$ implies
\begin{align}\label{symm1331}
\widehat{m}_{13}=-\widetilde{m}_{31}, \qquad \widehat{m}_{23}=-\widetilde{m}_{32}.
\end{align}

Let us now point out that $\widehat{m}$ and $\widetilde{m}$ are not independent, see Figure  \ref{figsize}. Considering the case $\alpha_2=0$ we have $\widehat{m}=\widetilde{m}\in {\rm Sym}(3)$, while considering $\alpha_1=0$ we have $\widehat{m}=-\widetilde{m}\in \so(3)$. Therefore,  in the conformal invariant case $\alpha_2=0$ and also in the case $\alpha_1=0$, since the condition \eqref{symm1331} does not hold true, it follows that $(2)\neq(2^\prime)$.

If the boundary conditions imply the continuity of $\widetilde{m}$ and $\widehat{m}$, then $(3)=(3^\prime)=0$. However, if $\widetilde{m}$ and $\widehat{m}$ are not continuous across the curve $\partial \Gamma$, considering again a point $P\in \partial \Gamma$  and considering, without confining,  that the system of coordinates is initially chosen such that the normal vector on the boundary at this point $P$ is $n=e_1:=(1,0,0)$ and $\nu=e_3:=(0,0,1)$, we prove that
\begin{align}
\sym \widehat{M}.e_3\neq\anti(\widetilde{ {m}}.\, e_1).e_3.
\end{align}
Doing similar calculations as above, we deduce
\begin{align}
\sym \widehat{M}.e_3=\frac{1}{2}\left(
                   \begin{array}{ccc}
                     0&\widehat{m}_{13}&-\widehat{m}_{12} \\
                        \widehat{m}_{13}&\widehat{m}_{23}&-\widehat{m}_{22}+\widehat{m}_{33} \\
                       -\widehat{m}_{12}&-\widehat{m}_{22}+\widehat{m}_{33}&-\widehat{m}_{32}
                   \end{array}
                 \right).\, e_3=\frac{1}{2}\left(
                   \begin{array}{ccc}
                     -\widehat{m}_{12} \\
                       -\widehat{m}_{22}+\widehat{m}_{33} \\
                       -\widehat{m}_{32}
                   \end{array}
                 \right)
\end{align}
and
\begin{align}
\anti(\widetilde{ {m}}.\, e_1).e_3=\left(
                                                                      \begin{array}{ccc}
                                                                        0 & \widetilde{m}_{31} & -\widetilde{m}_{21} \\
                                                                        -\widetilde{m}_{31} & 0 & \widetilde{m}_{11} \\
                                                                        \widetilde{m}_{21} & -\widetilde{m}_{11} & 0
                                                                      \end{array}
                                                                    \right).e_3=\left(
                                                                      \begin{array}{ccc}
                                                                        -\widetilde{m}_{21} \\
                                                                         \widetilde{m}_{11} \\
                                                                         0
                                                                      \end{array}
                                                                    \right).
\end{align}
We remark that in both particular cases, the conformal invariance model $\alpha_2=0$ and the case $\alpha_1=0$, we deduce that $\sym \widehat{M}.e_3\neq\anti(\widetilde{ {m}}.\, e_1).e_3$, in general. However, even if $\sym \widehat{M}.e_3\neq\anti(\widetilde{ {m}}.\, e_1).e_3$, the jump $(c)$ may coincide with the jump $(c^\prime)$.

Let us remark that in order to compare $(1)$ and $(1^\prime)$ we may not proceed as above. However, we will prove that $(1)\neq (1^\prime)$ in a specific situation. We assume that there is an open subset $\omega\subset\partial \Omega$ such that on $\omega$ the normal vector $n$ is constant. Let us consider a point $P\in \omega$. We may assume for simplicity that $n=e_1$ at all points $P\in \omega$. Upon this assumption on the domain $\Omega$, at the point $P\in \omega$ we obtain
\begin{align}
(\nabla[(\sym \widehat{M})\,T]: T&=
\nabla[\frac{1}{2}\left(
                   \begin{array}{ccc}
                     0&\widehat{m}_{13}&-\widehat{m}_{12} \\
                        \widehat{m}_{13}&\widehat{m}_{23}&-\widehat{m}_{22}+\widehat{m}_{33} \\
                       -\widehat{m}_{12}&-\widehat{m}_{22}+\widehat{m}_{33}&-\widehat{m}_{32}
                   \end{array}
                 \right)\,\left(
                                             \begin{array}{ccc}
                                               0 & 0 & 0 \\
                                               0 & 1 & 0 \\
                                               0 & 0 & 1 \\
                                             \end{array}
                                           \right)].\,\left(
                                             \begin{array}{ccc}
                                               0 & 0 & 0 \\
                                               0 & 1 & 0 \\
                                               0 & 0 & 1 \\
                                             \end{array}
                                           \right)\notag\\&=\frac{1}{2}\nabla[\left(
                   \begin{array}{ccc}
                     0&\widehat{m}_{13}&-\widehat{m}_{12} \\
                       0&\widehat{m}_{23}&-\widehat{m}_{22}+\widehat{m}_{33} \\
                       0&-\widehat{m}_{22}+\widehat{m}_{33}&-\widehat{m}_{32}
                   \end{array}
                 \right)].\,\left(
                                             \begin{array}{ccc}
                                               0 & 0 & 0 \\
                                               0 & 1 & 0 \\
                                               0 & 0 & 1 \\
                                             \end{array}
                                           \right)\\
                                           &=\frac{1}{2}\left(
                   \begin{array}{c}
                     \widehat{m}_{13,2}-\widehat{m}_{12,3} \\
                     \widehat{m}_{23,2}-\widehat{m}_{22,3}+\widehat{m}_{33,3} \\
                     -\widehat{m}_{22,2}+\widehat{m}_{33,2}-\widehat{m}_{32,3}
                   \end{array}
                 \right)\notag
\end{align}
and
\begin{align}
&\sym\Curl (\widehat{ {m}}).\, n=\sym\left(
                                       \begin{array}{ccc}
                                         \widehat{ {m}}_{13,2}-\widehat{ {m}}_{12,3} & \widehat{ {m}}_{11,3}-\widehat{ {m}}_{13,1} & \widehat{ {m}}_{12,1}-\widehat{ {m}}_{11,2} \\
                                        \widehat{ {m}}_{23,2}-\widehat{ {m}}_{22,3} & \widehat{ {m}}_{21,3}-\widehat{ {m}}_{23,1} & \widehat{ {m}}_{22,1}-\widehat{ {m}}_{21,2} \\
                                         \widehat{ {m}}_{33,2}-\widehat{ {m}}_{32,3} & \widehat{ {m}}_{31,3}-\widehat{ {m}}_{33,1} & \widehat{ {m}}_{32,1}-\widehat{ {m}}_{31,2} \\
                                       \end{array}
                                     \right).e_1\\
                                     &=\frac{1}{2}\,\left(
                                       \begin{array}{ccc}
                                         \widehat{ {m}}_{13,2}-\widehat{ {m}}_{12,3} & \widehat{ {m}}_{11,3}-\widehat{ {m}}_{13,1}+\widehat{ {m}}_{23,2}-\widehat{ {m}}_{22,3} & \widehat{ {m}}_{12,1}-\widehat{ {m}}_{11,2}+\widehat{ {m}}_{33,2}-\widehat{ {m}}_{32,3} \\
                                        \widehat{ {m}}_{11,3}-\widehat{ {m}}_{13,1}+\widehat{ {m}}_{23,2}-\widehat{ {m}}_{22,3} & \widehat{ {m}}_{21,3}-\widehat{ {m}}_{23,1} & \widehat{ {m}}_{22,1}-\widehat{ {m}}_{21,2}+\widehat{ {m}}_{31,3}-\widehat{ {m}}_{33,1}  \\
                                         \widehat{ {m}}_{12,1}-\widehat{ {m}}_{11,2}+\widehat{ {m}}_{33,2}-\widehat{ {m}}_{32,3} & \widehat{ {m}}_{22,1}-\widehat{ {m}}_{21,2}+\widehat{ {m}}_{31,3}-\widehat{ {m}}_{33,1} & \widehat{ {m}}_{32,1}-\widehat{ {m}}_{31,2} \\
                                       \end{array}
                                     \right).e_1\notag\\
                                     &=\frac{1}{2}\,\left(
                                       \begin{array}{ccc}
                                         \widehat{ {m}}_{13,2}-\widehat{ {m}}_{12,3}  \\
                                        \widehat{ {m}}_{11,3}-\widehat{ {m}}_{13,1}+\widehat{ {m}}_{23,2}-\widehat{ {m}}_{22,3}  \\
                                         \widehat{ {m}}_{12,1}-\widehat{ {m}}_{11,2}+\widehat{ {m}}_{33,2}-\widehat{ {m}}_{32,3}
                                       \end{array}
                                     \right).\notag
\end{align}
Therefore, we deduce
\begin{align}
\sym\Curl (\widehat{ {m}}).\, n-(\nabla[(\sym \widehat{M})\,T]: T&=\frac{1}{2}\,\left(
                                       \begin{array}{ccc}
                                         \widehat{ {m}}_{13,2}-\widehat{ {m}}_{12,3}  \\
                                        \widehat{ {m}}_{11,3}-\widehat{ {m}}_{13,1}+\widehat{ {m}}_{23,2}-\widehat{ {m}}_{22,3}  \\
                                         \widehat{ {m}}_{12,1}-\widehat{ {m}}_{11,2}+\widehat{ {m}}_{33,2}-\widehat{ {m}}_{32,3}
                                       \end{array}
                                     \right)-\frac{1}{2}\left(
                   \begin{array}{c}
                     \widehat{m}_{13,2}-\widehat{m}_{12,3} \\
                     \widehat{m}_{23,2}-\widehat{m}_{22,3}+\widehat{m}_{33,3} \\
                     -\widehat{m}_{22,2}+\widehat{m}_{33,2}-\widehat{m}_{32,3}
                   \end{array}
                 \right)\notag\\
                 &
                 =\frac{1}{2}\,\left(
                                       \begin{array}{ccc}
                                         \widehat{ {m}}_{13,2}-\widehat{ {m}}_{12,3}-\widehat{m}_{13,2}+\widehat{m}_{12,3}  \\
                                        \widehat{ {m}}_{11,3}-\widehat{ {m}}_{13,1}+\widehat{ {m}}_{23,2}-\widehat{ {m}}_{22,3}-\widehat{m}_{23,2}+\widehat{m}_{22,3}-\widehat{m}_{33,3}  \\
                                         \widehat{ {m}}_{12,1}-\widehat{ {m}}_{11,2}+\widehat{ {m}}_{33,2}-\widehat{ {m}}_{32,3}+\widehat{m}_{22,2}-\widehat{m}_{33,2}+\widehat{m}_{32,3}
                                       \end{array}
                                     \right)
                                     \\
                 &
                 =\frac{1}{2}\,\left(
                                       \begin{array}{ccc}
                                         \widehat{ {m}}_{13,2}-\widehat{ {m}}_{12,3}-\widehat{m}_{13,2}+\widehat{m}_{12,3}  \\
                                        \widehat{ {m}}_{11,3}-\widehat{ {m}}_{13,1}-\widehat{m}_{33,3}  \\
                                         \widehat{ {m}}_{12,1}-\widehat{ {m}}_{11,2}+\widehat{m}_{22,2}
                                       \end{array}
                                     \right).\notag
\end{align}
Moreover, we obtain
\begin{align}
&\frac{1}{2}\nabla[(\anti(\widetilde{ {m}}.\, n))\,T]: T
=\frac{1}{2}\nabla[(\anti(\widetilde{ {m}}.\, e_1))\,T]: T=\frac{1}{2}\nabla[(\anti(\widetilde{ {m}}_{11},\widetilde{ {m}}_{21},\widetilde{ {m}}_{31} ))\,T]: T\\
&=\frac{1}{2}\nabla[\left(
                                                                      \begin{array}{ccc}
                                                                        0 & \widetilde{m}_{31} & -\widetilde{m}_{21} \\
                                                                        -\widetilde{m}_{31} & 0 & \widetilde{m}_{11} \\
                                                                        \widetilde{m}_{21} & -\widetilde{m}_{11} & 0
                                                                      \end{array}
                                                                    \right)\,\,\left(
                                             \begin{array}{ccc}
                                               0 & 0 & 0 \\
                                               0 & 1 & 0 \\
                                               0 & 0 & 1 \\
                                             \end{array}
                                           \right)].\,T\notag\\
&=\frac{1}{2}\nabla[\left(
                                                                      \begin{array}{ccc}
                                                                        0 & \widetilde{m}_{31} & -\widetilde{m}_{21} \\
                                                                       0 & 0 & \widetilde{m}_{11} \\
                                                                        0 & -\widetilde{m}_{11} & 0
                                                                      \end{array}
                                                                    \right)].\,\,\,\left(
                                             \begin{array}{ccc}
                                               0 & 0 & 0 \\
                                               0 & 1 & 0 \\
                                               0 & 0 & 1 \\
                                             \end{array}
                                           \right)=\frac{1}{2}\left(
                                                                      \begin{array}{c}
                                                                         \widetilde{m}_{31,2}  -\widetilde{m}_{21,3} \\
                                                                        \widetilde{m}_{11,3} \\
                                                                         -\widetilde{m}_{11,2}
                                                                      \end{array}
                                                                    \right)\notag
\end{align}
and
\begin{align}
\frac{1}{2}\anti
{\rm Div}[\widetilde{ {m}}].\, n&=\frac{1}{2}\anti(\widetilde{ {m}}_{1j,j},\widetilde{ {m}}_{2j,j}, \widetilde{ {m}}_{3j,j}).e_1
=\frac{1}{2}\left(
                                                                      \begin{array}{ccc}
                                                                        0 & \widetilde{m}_{3j,j} & -\widetilde{m}_{2j,j} \\
                                                                        -\widetilde{m}_{3j,j} & 0 & \widetilde{m}_{1j,j} \\
                                                                        \widetilde{m}_{2j,j} & -\widetilde{m}_{1j,j} & 0
                                                                      \end{array}
                                                                    \right).e_1\notag\\
&=\frac{1}{2}\left(
                                                                      \begin{array}{c}
                                                                        0  \\
                                                                        -\widetilde{m}_{3j,j}  \\
                                                                        \widetilde{m}_{2j,j}
                                                                      \end{array}
                                                                    \right)
\end{align}
Hence, it follows that
\begin{align}
-\frac{1}{2}\anti
{\rm Div}[\widetilde{ {m}}].\, n-\frac{1}{2}\nabla[(\anti(\widetilde{ {m}}.\, n))\,T]: T&=-\frac{1}{2}\left(
                                                                      \begin{array}{c}
                                                                        0  \\
                                                                        -\widetilde{m}_{3j,j}  \\
                                                                        \widetilde{m}_{2j,j}
                                                                      \end{array}
                                                                    \right)-\frac{1}{2}\left(
                                                                      \begin{array}{c}
                                                                         \widetilde{m}_{31,2}  -\widetilde{m}_{21,3} \\
                                                                        \widetilde{m}_{11,3} \\
                                                                         -\widetilde{m}_{11,2}
                                                                      \end{array}
                                                                    \right)\notag\\
&=-\frac{1}{2}\left(
                                                                      \begin{array}{c}
                                                                         \widetilde{m}_{31,2}  -\widetilde{m}_{21,3}  \\
                                                                        -\widetilde{m}_{3j,j}+ \widetilde{m}_{11,3}  \\
                                                                        \widetilde{m}_{2j,j}-\widetilde{m}_{11,2}
                                                                      \end{array}
                                                                    \right).
\end{align}
Concluding, $(1)=(1^\prime)$ if and only if
\begin{align}
\left(
                                       \begin{array}{ccc}
                                         \widehat{ {m}}_{13,2}-\widehat{ {m}}_{12,3}  \\
                                        \widehat{ {m}}_{11,3}-\widehat{ {m}}_{13,1}+\widehat{ {m}}_{23,2}-\widehat{ {m}}_{22,3}  \\
                                         \widehat{ {m}}_{12,1}-\widehat{ {m}}_{11,2}+\widehat{ {m}}_{33,2}-\widehat{ {m}}_{32,3}
                                       \end{array}
                                     \right)=
                                    \left(
                                                                      \begin{array}{c}
                                                                         -\widetilde{m}_{31,2}  +\widetilde{m}_{21,3}  \\
                                                                      \widetilde{m}_{3j,j}- \widetilde{m}_{11,3}  \\
                                                                        -\widetilde{m}_{2j,j}+\widetilde{m}_{11,2}
                                                                      \end{array}
                                                                    \right),
\end{align}
which holds not true, in general. In the conformal case, the above condition reads
\begin{align}
\widehat{ {m}}_{13,2}=\widehat{ {m}}_{12,3}, \qquad 2\, \widehat{ {m}}_{11,3}-2\,\widehat{ {m}}_{13,1}=\widehat{ {m}}_{22,3}+\widehat{ {m}}_{33,3}
\qquad  2\, \widehat{ {m}}_{11,2}-2\,\widehat{ {m}}_{12,1}=\widehat{ {m}}_{33,2}+\widehat{ {m}}_{22,2},
\end{align}
while in the case $\alpha_1=0$ it becomes
\begin{align}
\widehat{ {m}}_{23,2}=0,\qquad
\widehat{ {m}}_{32,3}=0\quad \Rightarrow \quad \widehat{ {m}}_{23}=\widehat{ {m}}_{23}(x_1),
\end{align}
which is clearly not satisfied, in general.

\section{From second order couple stress tensors to third order moment stress tensors and back}\setcounter{equation}{0}

Let us consider the  general anisotropic case and
\begin{align*}
W_{\rm curv}(D^2u)=\langle \mathbb{C}. D^2u,D^2 u\rangle_{\mathbb{R}^{3\times3\times3}}, \qquad \widehat{W}_{\rm curv}(\Curl(\sym \nabla u))=\langle \mathbb{L}. \Curl(\sym \nabla u),\Curl(\sym \nabla u)\rangle_{\mathbb{R}^{3\times3}},
\end{align*}
where
 \begin{align*}
 \mathbb{C}=(\mathbb{C}_{ijklmn}):\mathbb{R}^{3\times3\times 3}\rightarrow\mathbb{R}^{3\times3\times 3} \ \ \text{ and }\ \ \mathbb{L}=(\mathbb{L}_{ijkl}):\mathbb{R}^{3\times3}\rightarrow\mathbb{R}^{3\times3}.
 \end{align*}
Let us also consider the tensors
 \begin{align*}
 \mathfrak{m}:=D_{D^2u} W_{\rm curv}(D^2u)\ \ \text{ and }\ \ \widehat{m}:=D_{\Curl(\sym \nabla u)} \widehat{W}_{\rm curv}(\Curl(\sym \nabla u)),
 \end{align*}
 which for our anisotropic case are
 \begin{align*}
 \mathfrak{m}:=\mathbb{C}. D^2u\ \ \text{ and }\ \ \widehat{m}:=\mathbb{L}.\Curl(\sym \nabla u).
 \end{align*}

 Since
 \begin{align*}
D^2 u={\rm Lin}(\nabla\, (\sym   \nabla   u))=\mathbb{A}.(\nabla\, (\sym   \nabla   u)),\qquad  u_{k,ij}=\varepsilon_{ik,j}-\varepsilon_{jk,i}-\varepsilon_{ij,k},
 \end{align*}
 where $\varepsilon=\sym \nabla u$, we obtain
 \begin{align*}
 \mathfrak{m}:=\mathbb{C}\mathbb{A}.(\nabla\, (\sym   \nabla   u))=\mathbb{B}.(\nabla\, (\sym   \nabla   u)).
 \end{align*}
 The next problem is to find  particular form of the tensor $\mathbb{C}$, for which we have
 \begin{align*}
 \langle\mathfrak{m}, D^2u\rangle_{\mathbb{R}^{3\times 3\times3}}=\langle\widetilde{m}, \Curl (\sym \nabla  u)\rangle_{\mathbb{R}^{3\times 3}},
 \end{align*}
 or equivalently
 \begin{align*}
 \langle \mathbb{C}\mathbb{A}.(\nabla\, (\sym   \nabla   u)), \mathbb{A}.(\nabla\, (\sym   \nabla   u))\rangle_{\mathbb{R}^{3\times 3\times3}}=\langle \mathbb{L}. \Curl (\sym \nabla  u), \Curl (\sym \nabla  u)\rangle_{\mathbb{R}^{3\times 3}}.
 \end{align*}
 This is equivalent to
  \begin{align*}
 \langle \mathbb{A}^{T}\mathbb{C}\mathbb{A}.(\nabla\, (\sym   \nabla   u)), (\nabla\, (\sym   \nabla   u))\rangle_{\mathbb{R}^{3\times 3\times3}}=\langle \mathbb{L}. \Curl (\sym \nabla  u), \Curl (\sym \nabla  u)\rangle_{\mathbb{R}^{3\times 3}}.
 \end{align*}
 Let us denote in the following
 \begin{align*}
 \mathbb{B}:=\mathbb{A}^{T}\mathbb{C}\,\mathbb{A}.
 \end{align*}
We consider a specific form of the tensor $\mathbb{B}$ in terms of  another tensor $\mathbb{L}:\mathbb{R}^{3\times 3}\rightarrow \mathbb{R}^{3\times 3}$ such that
\begin{align*}
 \langle \mathbb{B}.(\nabla\, (\sym   \nabla   u)), (\nabla\, (\sym   \nabla   u))\rangle_{\mathbb{R}^{3\times 3\times3}}=\langle \mathbb{L}. \Curl (\sym \nabla  u), \Curl (\sym \nabla  u)\rangle_{\mathbb{R}^{3\times 3}}.
 \end{align*}
Let us show how to obtain the tensor $\mathbb{B}$ if $\mathbb{L}$ is given, such that the last identity holds true.   We first remark that a tensor $\mathbb{B}:\mathbb{R}^{3\times 3\times 3}\rightarrow \mathbb{R}^{3\times 3\times 3}$ is uniquely  defined by the fourth order tensors
\begin{align}\label{lt1}
\mathbb{B}_{im}=({\mathbb{B}}_{ijkmnp})_{im}, \qquad {\mathbb{B}}_{im}:\mathbb{R}^{3\times3}\rightarrow\mathbb{R}^{3\times3},
\end{align}
and
\begin{align*}
\langle {\mathbb{B}}.\, \nabla (\sym \nabla u),\nabla (\sym \nabla u)\rangle_{\mathbb{R}^{3\times 3\times 3}}&:=\langle {\mathbb{B}}_{im}.\nabla (\sym \nabla u)_m,\nabla (\sym \nabla u)_i\rangle_{\mathbb{R}^{3\times 3}},
\end{align*}
where Einstein's summation rule is used.
Let  ${\mathbb{L}}:\mathbb{R}^{3\times 3}\rightarrow\mathbb{R}^{3\times 3}$ be  a given  fourth order tensor. We may write this tensor in the form
\begin{align}\label{lt4}
{\mathbb{L}}=( \widehat{\mathbb{L}}^1, \widehat{\mathbb{L}}^2, \widehat{\mathbb{L}}^3), \qquad \widehat{\mathbb{L}}^i:\mathbb{R}^{3}\rightarrow\mathbb{R}^{3}.
\end{align}
Let us define the tensor ${\mathbb{B}}$ by \eqref{lt1} where
\begin{align}\label{lt2}
{\mathbb{B}}_{im}=\left\{\begin{array}{lll}2\,
\skew \anti \widehat{\mathbb{L}}^i\axl \skew & \text{for}& i=m\vspace{1.5mm}\\
0& \text{for}& i\neq m.
\end{array}\right.\,
\end{align}
For the particular form \eqref{lt2} of ${\mathbb{B}}_{im}$, using  the formula $2\,\axl \skew \nabla (\sym \nabla u)_i=\curl (\sym \nabla u)_i$,  we obtain
 \begin{align*}
\langle {\mathbb{B}}_{im}.\nabla (\sym \nabla u)_m,\nabla (\sym \nabla u)_i\rangle_{\mathbb{R}^{3\times 3}}&=2\,\langle \skew \anti \widehat{\mathbb{L}}^i. \axl \skew \nabla (\sym \nabla u)_i,\nabla (\sym \nabla u)_i\rangle_{\mathbb{R}^{3\times 3}}\\&=2\,\langle \anti \widehat{\mathbb{L}}^i.\axl \skew \nabla (\sym \nabla u)_i,\skew\nabla (\sym \nabla u)_i\rangle_{\mathbb{R}^{3\times 3}}\notag\\
&=4\,\langle \widehat{\mathbb{L}}^i.\axl \skew \nabla (\sym \nabla u)_i,\axl\skew\nabla (\sym \nabla u)_i\rangle_{\mathbb{R}^{3}}\notag\\&=\langle \widehat{\mathbb{L}}^i.\curl (\sym \nabla u)_i,\curl (\sym \nabla u)_i\rangle_{\mathbb{R}^{3}}.\notag
\end{align*}
Now, we conclude that for a given  fourth order tensor ${\mathbb{L}}:\mathbb{R}^{3\times 3}\rightarrow\mathbb{R}^{3\times 3}$ the
 tensor $\widetilde{\mathbb{L}}$  defined by \eqref{lt1} and \eqref{lt2} is such that
\begin{align}\label{lt3}
\langle {\mathbb{B}}.\, \nabla (\sym \nabla u),\nabla (\sym \nabla u)\rangle_{\mathbb{R}^{3\times 3\times 3}}&=\langle {\mathbb{B}}_{im}.\nabla (\sym \nabla u)_m,\nabla (\sym \nabla u)_i\rangle_{\mathbb{R}^{3\times 3}}\\&=\langle \widehat{\mathbb{L}}^i.\curl (\sym \nabla u)_i,\curl (\sym \nabla u)_i\rangle_{\mathbb{R}^{3}}=\langle {\mathbb{L}}.\Curl (\sym \nabla u),\Curl (\sym \nabla u)\rangle_{\mathbb{R}^{3\times3}}.\notag
\end{align}

In conclusion, we have found a tensor $\mathbb{C}$ given by
 \begin{align*}
 \mathbb{C}:=\mathbb{A}\mathbb{B}\,\mathbb{A}^{T},
 \end{align*}
 where $\mathbb{B}$ is given by \eqref{lt1} and \eqref{lt2}, such that
\begin{align*}
 \langle \mathbb{C}.D^2u, D^2 u\rangle_{\mathbb{R}^{3\times 3\times 3}}=\langle \mathbb{L}. \Curl (\sym \nabla  u), \Curl (\sym \nabla  u)\rangle_{\mathbb{R}^{3\times 3}},
 \end{align*}
 or equivalently
 \begin{align*}
 \langle \mathfrak{m}, D^2 u\rangle_{\mathbb{R}^{3\times 3\times 3}}=\langle \widehat{m}, \Curl (\sym \nabla  u)\rangle_{\mathbb{R}^{3\times 3}},
 \end{align*}
  or equivalently
   \begin{align*}
W_{\rm curv}(D^2 u)=\widehat{W}_{\rm curv}(\Curl (\sym \nabla  u)).
 \end{align*}

 \section{The name  of the indeterminate couple stress model}\label{nicm}\setcounter{equation}{0}

Regarding the name of the indeterminate couple stress model, Paria \cite[p. 1]{paria} writes: ``...it has led to the difficulties that the anti-symmetric part of the stress dyadic as well as the isotropic part of the couple-stress dyadic remain indeterminate. These indeterminacies are perhaps due to the fact that the rotation vector, defined above, is not independent but depends on the displacement vector".  The theory has a variety of names, such as ``Cosserat theory with constrained rotations" (Toupin, 1964), ``Couple stress theory" (Koiter, 1964), ``Indeterminate couple stress theory" (Eringen, 1968), ``Cosserat pseudo-continuum" (Nowacki, 1968). Eringen writes \cite{Eringen99}: ``At this time [in the 1960, our addition] also popular was a theory of indeterminate couple stress which is mostly  abandoned now [1998]. In this theory, the axisymmetric [skew-symmetric] part of the stress tensor is redundant and it remains indeterminate". Sch\"afer \cite{schaefer1967cosserat} called ``indeterminate couple stress model" as  pseudo-Cosserat-continuum of the tri\'edre caches (see also \cite{Nowacki86}).

If the microrotations $\overline{A}\in \so(3)$ are constrained to be equal to the macrorotations $\skew \nabla u$, the Cosserat model reduces to the couple stress theory. This corresponds to the case $\mu_c\rightarrow\infty$, for which the antisymmetric part of the strain tensor $\skew(\nabla u-\overline{A})$ and the spherical part of the curvature tensor $\tr(\nabla\axl(\skew \overline{A}))$ tend to zero. Consequently, by energetic duality the antisymmetric part of the Cauchy stress tensor $\skew(\sigma)$ in the Cosserat model, and the first invariant of the couple stress, namely $\tr(\widetilde{m})$ do not appear in the formulation of the virtual work principle as well   as in the constitutive equations. The first invariant of the couple stress remains ``indeterminate" and it is taken to be equal to zero \cite{Koiter64}. Now, the skew-symmetric part of the total force stress tensor now is not constitutively determined, but can be obtained from balance of momentum.

\end{footnotesize}
\end{document}